\definecolor{DarkBlue}{rgb}{0.1,0.1,0.5}
\definecolor{DarkGreen}{rgb}{0.1,0.5,0.1}
\let\oldnl\nl% Store \nl in \oldnl
\newcommand{\nonl}{\renewcommand{\nl}{\let\nl\oldnl}}%
\newcommand{\extra}[1]{}
\tikzset{curve/.style={settings={#1},to path={(\tikztostart)
    .. controls ($(\tikztostart)!\pv{pos}!(\tikztotarget)!\pv{height}!270:(\tikztotarget)$)
    and ($(\tikztostart)!1-\pv{pos}!(\tikztotarget)!\pv{height}!270:(\tikztotarget)$)
    .. (\tikztotarget)\tikztonodes}},
    settings/.code={\tikzset{quiver/.cd,#1}
        \def\pv##1{\pgfkeysvalueof{/tikz/quiver/##1}}},
    quiver/.cd,pos/.initial=0.35,height/.initial=0}
\newtheorem{theorem}{Theorem}
\newtheorem{corollary}{Corollary}
\newtheorem{definition}{Definition}
\newtheorem{example}{Example}
\newtheorem{lemma}{Lemma}
\newtheorem{claim}{Claim}
\def\squareforqed{\hbox{\rlap{$\sqcap$}$\sqcup$}}
\def\qed{\ifmmode\squareforqed\else{\unskip\nobreak\hfil
\penalty50\hskip1em\null\nobreak\hfil\squareforqed
\parfillskip=0pt\finalhyphendemerits=0\endgraf}\fi}
\def\endenv{\ifmmode\;\else{\unskip\nobreak\hfil
\penalty50\hskip1em\null\nobreak\hfil\;
\parfillskip=0pt\finalhyphendemerits=0\endgraf}\fi}
\renewenvironment{proof}{\noindent \textbf{{Proof~} }}{\qed\medskip}
\newenvironment{proof+}[1]{\noindent \textbf{{Proof #1~} }}{\qed\medskip}
\mathchardef\ordinarycolon\mathcode`\:
\def\vcentcolon{\mathrel{\mathop\ordinarycolon}}
\DeclareMathOperator*{\argmin}{arg\,min}
\DeclareMathOperator*{\argmax}{arg\,max}
\newcommand{\red}[1]{{\leavevmode\color{red}{#1}}}
\newcommand\toref{\red{[REF]}}
\newcommand{\cut}{\texttt{cut}}
\newcommand{\eval}{\texttt{eval}}
\newcommand{\dTwoTree}{\textsc{Depth2Tree}}
\newcommand{\twoStar}{2-\textsc{Star}}
\newcommand{\AlgOne}{\textsc{Alg1}}
\newcommand{\Inact}{\texttt{Inact}}
\newcommand{\Inchild}{\texttt{Inchild}}
\newcommand{\N}{\mathcal{N}}
\DeclarePairedDelimiter\floor{\lfloor}{\rfloor}
\definecolor{DarkBlue}{rgb}{0.1,0.1,0.5}
\definecolor{DarkGreen}{rgb}{0.1,0.5,0.1}
\newcommand{\DominationCondition}{\textsc{Fair}}
\newcommand{\Storage}{\texttt{Storage}}
\newcommand{\DominationProtocol}{$\textsc{Domination}(R,k)$}
\newtheorem{observation}{Observation}
\renewcommand{\mkbegdispquote}[2]{\itshape}
\tikzset{curve/.style={settings={#1},to path={(\tikztostart)
    .. controls ($(\tikztostart)!\pv{pos}!(\tikztotarget)!\pv{height}!270:(\tikztotarget)$)
    and ($(\tikztostart)!1-\pv{pos}!(\tikztotarget)!\pv{height}!270:(\tikztotarget)$)
    .. (\tikztotarget)\tikztonodes}},
    settings/.code={\tikzset{quiver/.cd,#1}
        \def\pv##1{\pgfkeysvalueof{/tikz/quiver/##1}}},
    quiver/.cd,pos/.initial=0.35,height/.initial=0}
\newcommand{\yuka}[1]{{\color{blue}{Yuka: #1}}}
\date{}
\begin{document}
\title{A Discrete and Bounded Locally Envy-Free Cake  Cutting Protocol on Trees}
\author{Ganesh Ghalme\thanks{Indian Institute of Technology, Hyderabad, India.  {\tt ganeshghalme@ai.iith.ac.in} }  \and Xin Huang\thanks{Technion Israel Institute of Technology, Haifa, Israel. {\tt xinhuang@campus.technion.ac.il}} \and Yuka Machino\thanks{Massachusetts Institute of Technology, USA. {\tt yukam997@mit.edu}} \and Nidhi Rathi\thanks{Aarhus University, Denmark. {\tt nidhi@cs.au.dk}}}

\maketitle   
\begin{abstract}
We study  the classic problem of \emph{fairly} dividing a heterogeneous and divisible resource---modeled as a line segment $[0,1]$ and typically called as a \emph{cake}---among $n$ agents. This work  considers  
an interesting variant of the problem where agents are embedded on a graph. The graphical constraint entails that each agent evaluates her allocated share only against her neighbors' share. Given a graph, the goal is to efficiently find a \emph{locally envy-free} allocation where every agent values her share of the cake  to be at least as much as that of any of her neighbors' share.

The most significant contribution of this work is a bounded protocol that finds a locally envy-free allocation among $n$ agents  on a  tree graph using $n^{O(n)}$ queries under the standard Robertson-Webb (RW) query model. The  query complexity   of our proposed protocol, though exponential, significantly improves the currently  best known hyper-exponential query complexity bound of Aziz and Mackenzie \cite{aziz2016discrete} for complete graphs. In particular,  we also show that if the underlying tree graph has a depth of  at most two,  one can find a locally envy-free allocation  with $O(n^4 \log n)$ RW queries. This is the first and the only known locally envy-free cake  cutting protocol with polynomial query complexity for a non-trivial graph structure.

Interestingly, our discrete protocols are simple and easy to understand, as opposed to highly involved protocol of \cite{aziz2016discrete}. This simplicity can be attributed to their  recursive nature and the use of a single agent as a designated \emph{cutter}.  We believe that these results will help us improve our algorithmic understanding of the arguably challenging problem of envy-free cake-cutting  by uncovering the  bottlenecks in its query complexity and its relation to the underlying graph structures. %in the problem of  designing envy-free cake cutting.  

\end{abstract} 

%The fairness considerations are defined locally to an agent’s neighbors in the network. 

\section{Introduction}
The problem of fairly dividing resources among a set of participating agents is one of the fundamental problems whose roots date back to 1950's \cite{steihaus1948problem,dubins1961cut} while enjoying widely-ranged real-world applications \cite{gal2016fairest, moulin2004fair,vossen2002fair,etkin2007spectrum,budish2011combinatorial}. Over the past several decades, fair division has been extensively studied across various disciplines like social science, economics, mathematics and computer science; see \cite{brams1996fair,brandt2016handbook,procaccia2015cake,robertson1998cake} for excellent expositions. Among the resource-allocation settings, the \emph{cake-cutting problem} provides an elegant mathematical abstraction to many real world situations where a divisible resource---modeled as a \emph{cake} $[0,1]$---is to be allocated among agents with heterogeneous preferences. These situations include   division of land, allocation of radio and television   spectrum,  allocation of advertisement space on search platforms and so on (see \cite{AdjustedWinner} for implementations of cake-cutting methods). Cake-cutting has indeed been the basis of the key axiomatic formalization of fair division along-with the inspiration of some of the central solution concepts for resource allocation.

Formally, a cake-division instance consists of $n$ agents having cardinal preferences over the cake, that is modeled as a unit interval $[0,1]$. These preferences are specified by valuation functions $v_i$'s, and we write $v_i(I)$ to denote agent $i$'s value for the piece $I \subseteq [0,1]$. The goal is to partition the cake into $n$ bundles (possibly consisting of finitely many intervals) and assign them to the $n$ agents.

A central notion of fairness in resource-allocation settings is that of \emph{envy-freeness} that deems a cake division to be \emph{fair} if every agent prefers her share over that of any other agent \cite{foley1967resource}. That is, an allocation $\{A_1, A_2, \dots, A_n\}$ is said to be \emph{envy-free} if and only if $v_i(A_i) \geq v_i(A_j)$ for all $i,j \in [n]$. The appeal of envy-freeness can be rightfully perceived from the strong existential results: under mild assumptions, a cake-division (where every agent receives a connected interval) is always guaranteed to exist \cite{stromquist1980cut,simmons1980private,edward1999rental}. While this compelling existential result is derived from elegant mathematical connections with the area of topology, the corresponding efficient algorithmic results remain elusive. As a matter of fact, Stromquist \cite{stromquist2008} proved that there cannot exist a finite protocol for computing an envy-free cake division with connected pieces (in an adversarial model). Furthermore, Deng et al. \cite{deng2012algorithmic} showed that the problem of finding envy-free division (with connected pieces) under ordinal valuations is PPAD-hard.

Even though the only known lower bound on the query complexity of finding envy-free cake divisions is $\Omega(n^2)$ \cite{Procaccia2009ThouSC}, the best known algorithm \cite{aziz2016discrete} has a hyper-exponential query complexity bound of $n^{n^{n^{n^{n^{n}}}}}$ for finding envy-free cake division with non-contiguous pieces. This evidently leaves a huge gap in our understanding of the query complexity of the underlying problem.  

%Multiple hardness results \cite{stromquist2007,deng2012algorithmic} in this setting have further  motivated the study of several interesting variants and restrictions of the problem. %\gan{For instance, whether an envy-free cake division instance with $n$ agents admits a polynomial time algorithm  is still an open problem. }

 In this work, we  partially  address this gap by
 exploring the graphical framework of envy-freeness in cake cutting. Here, the  envy comparisons are restricted by an underlying (social) graph $G$ over the agents.  In contrast with the standard notion of envy-freeness, the goal here is to find a \emph{locally envy-free} allocation of the cake such that no agent envies her neighbor(s) in the given  graph $G$. That is, an allocation $\{A_1, A_2, \dots, A_n\}$ is said to be \emph{locally envy-free} if and only if for all $i \in [n]$, we have $v_i(A_i) \geq v_i(A_j)$ for all $j \in N_i(G)$, where $N_i(G)$ is the set of neighbours of agent $i$ in the underlying graph $G$. Note that  when $G$ is a complete graph, we retrieve the classical setting of envy-free cake division. 
 %Formally, given a graph $G$ on $n$ agents, we define a \emph{locally envy-free} allocation as a partition $\mathcal{A} = \{A_1, A_2, \dots, A_n\}$ of the cake such that for all agents $i \in [n]$, we have $v_i(A_i) \geq v_i(A_j)$ for all $j \in N(i)$; here $v_i$ is the valuation function of agent $i$, $A_i$ denotes the part (a union of disjoint intervals of $[0,1]$) allocated to agent $i$ and $N(i)$ is the set of neighbors of $i$ in $G$.
 
 %The above-described graphical framework opens various interesting directions for understanding the problem of fair cake division  \cite{bei2017networked,AbebeKP16}.
%Insert something on the lines of why locally EF is an interesting problem and is a natural setting to encapsulate the social structure of our society.

%\red{Add: the model has been studied in prior work.} 
The above-described graphical framework opens various interesting directions for understanding the problem of fairness in cake division (see \cite{AbebeKP16,bei2017networked,bei2020cake,Bredereck18,Tucker21}). The notion of local envy-freeness  is  relevant in many natural scenarios where agents' envy towards other agents can be restricted by  external constraints such as social connections, relative rank hierarchy and  overlap in their expertise/skill levels.  For instance, when the graph represents social connections between  a  group of people, it is reasonable to assume that agents  only envy the agents whom they know (i.e., friends or friends of friends). Similarly, when a graph represents rank hierarchy in an organization,  it is reasonable to assume that agents only envy their immediate neighbours (i.e. colleagues). 
 % inherent   underpinnings of this problem.    
%only the agents  assigned to same project may envy each other. 
The study of \emph{local envy-freeness} is particularly interesting from a purely theoretical standpoint. Given that the state of the art protocols for finding envy-free divisions are highly complicated and require hyper-exponentially many queries, a natural line of research---and the focus of this work---is to find interesting graph structures for which the problem of fair cake division admits  query-efficient cake cutting protocols. 

In the light of the algorithmic barriers, identification of graph structures for better query complexity stands as a meaningful direction of work. We believe that non-trivial graph structures provide necessary building blocks and help understand the bottleneck of computational difficulties in the problem of fair cake division.

%We remark here that our goal in this paper is to improve the computational understanding of local envy-freeness in cake division. To the best of our knowledge, this is the first work to develop a discrete protocol for finding locally envy-free allocations among $n$ agents wherein the envy-constraints are specified via a non-trivial class of graphs.  

%In particular,  we  focus on investigating the  graph structures that admit efficient protocol  to find envy-free allocations. To the best of our knowledge, this is the first work to develop a discrete and efficient protocol for finding locally envy-free allocations among $n$ agents wherein the envy-constraints are specified via a non-trivial class of graphs.  

 %To the best of our knowledge, this is the first work to provide polynomial  locally envy-free protocol for non-trivial graph  structures.
\subsection{Our Results and Techniques}  All the protocols developed in this paper operate under the Robertson-Webb query model that allows access to agents valuations via \cut\ and \eval\ queries (see Section \ref{sec:setting} for details).  The underlying graph over the agents is undirected i.e., an edge between two agents in the graph correspond to the envy-constraint where neither of the agents envy another.

\begin{itemize}
%\item We begin with four agents on a {\sc Path} graph and develop a simple protocol (\textsc{Alg1}) to compute a locally envy-free allocation with $8$  \cut\  queries and $16$ \eval\ queries.\footnote{The authors recognize that  Jaile Chen independently discovered  the same protocol  for $4$ agents on a {\sc Line} graph in a concurrent work (it became known to the authors via personal communication).} This is a significant improvement over the most query- efficient protocol for a complete graph on $4$ agents \cite{AmanatidisFMPV18} till  date; requiring $61$ \cut\ queries and $110$ \eval\  queries. The saving in query complexity is attributed to the fact that \textsc{Alg1} has to remove envy from only one side of the designated \emph{cutter} agent (say left). Furthermore, for that one side, one can maintain  a partition  of the cake so that envy can be removed sequentially from right to left and finally rearranging the pieces formed in each round to achieve envy-freeness over the entire graph. We then extend \textsc{Alg1} to compute a locally envy-free allocation over $5$ agents on a \textsc{Line} with $18$  \cut\  queries and $29$ \eval\ queries.

\item \emph{Local envy-freeness on a \textsc{Line}:} We design a simple discrete protocol to compute a locally envy-free allocation among $n$ agents on a \textsc{Line} (represented as $a_1-a_2- \dots -a_n$) using $n^{(O(n)}$ queries.

At the heart of our algorithm lies a recursive protocol $\textsc{Domination}(R,k)$ (for some piece $R \subseteq[0,1]$ and agent $a_k$ for $k\in [n]$). In particular, $\textsc{Domination}(R,k)$ repeatedly invokes $\textsc{Domination}(R,k+1)$ until a certain {\em domination condition} is achieved.\footnote{ $\textsc{Domination}(R,n)$ asks agent $a_n$ to cut $R$ into $n$ equal pieces according to her.} 
The  \DominationProtocol\ protocol returns an $n$-partition of $R$ satisfying the following three key conditions (for the allocation achieved by assigning the $j$th bundle to agent $a_j$): (a) agents $a_i$ for $i \geq k$ do not envy their neighbours, (b) agent $a_k$ values the first $k$ bundles equally, and (c) agent $a_{k+1}$ values her bundle at least as high as the first $k$ bundles. We formalize these crucial conditions and call such an allocation as $k$-\DominationCondition. This notion of $k$-\DominationCondition ness proves to be the stepping stone into going ahead and beating the highly complicated state-of-the-art envy-free cake-cutting protocols.
Finally, note that the allocation returned by $\textsc{Domination}([0,1],1)$ is $1$-\DominationCondition, i.e., a locally envy-free on a \textsc{Line}.

%Our protocol \DominationProtocol \ when run with $R=[0,1]$ and $k=1$, we obtain our desired locally envy-free allocation. Interestingly, it will only ever ask agent $a_n$ \xin{$a_k$?} to make cuts on the cake.
The above idea of recursion imparts notable simplicity to our protocol. The number of times \DominationProtocol\ calls $\textsc{Domination}(R,k+1)$ determines the query complexity of the protocol, and we show that it is at most $k+k \log(k)$ calls. Overall, we establish that our protocol $\textsc{Domination}([0,1],1)$ requires $n^{O(n)}$ queries to find a locally envy-free allocation among $n$~agents on a \textsc{Line}.

%Next, we extend our protocol to $n$ agents on a {\sc Line } graph. We modify our protocol for  4 agents on {\sc Line } to $n$ agents on {\sc Line } in following two  ways. First, we make the right-most agent to be the  designated cutter. This is done for better understanding and book-keeping  and does not change the query complexity of the protocol (in order terms). The second, and the most important change is that the protocol is recursive. It generalizes the important desirable  properties of the allocation returned by the protocol for 4 agents on { \sc Line} graph and maintain these properties in each recursive call. This recursion makes our protocol exponential (i.e. $n^{(1 + o(1))n}$) in terms of query complexity.  
\item \emph{Local envy-freeness on a \textsc{Tree}:} We prove our most important result in the context of \textsc{Tree} graphs, and develop a protocol for finding a locally envy-free allocation among $n$ agents on \emph{any} \textsc{Tree} in $n^{(O(n)}$ queries. We achieve so by performing appropriate modifications in the protocol for {\sc Line} graph (without increasing the query complexity) and carefully adapting the important notion of $k$-\DominationCondition\ allocations in the context of \textsc{Tree} graphs.

It is relevant to note that our protocol is a significant improvement from the known hyper-exponential query complexity bound of Aziz and Mackenzie \cite{aziz2016discrete} for finding envy-free allocations among $n$ agents on complete graphs. Interestingly, we are able to efficiently generalize our key techniques developed for \textsc{Line} graphs and hence we maintain the identical query complexity bound of $n^{(1 + o(1))n}$ as that of \textsc{Trees}. The bottleneck for the query complexity is determined by the longest path on the given \textsc{Tree}, and hence, the {\sc Line} graph proves to be the hardest graph structure among trees for our proposed protocol (with respect to the query complexity).  %Whenever there is a branch, one has to only remove envy between the joining agent and all her neighbors to cut them off from each other. This reduces the query complexity as envy from one branch does not cross to other branch. Hence, larger the number of branches smaller the query complexity. 

It is relevant to note that if our protocol $\textsc{Domination}([0,1],1)$ for \textsc{Trees} is run on a \textsc{Star} graph, then it indeed finds a locally envy-free allocation using $n^2$ queries.\footnote{Consider the following protocol for a \textsc{Star} graph: the center agent cuts the cake $[0,1]$ into $n$ equal pieces, and the remaining agents picks their favorite available piece (in any arbitrary order), with the last remaining piece being assigned to the center agent. This protocol find a locally envy-free allocation using $n^2$ queries.}

\item \emph{Polynomial-query protocol for local envy-freeness on \dTwoTree:}
 We identify two non-trivial graph sub-structures of \textsc{Trees}, namely \dTwoTree\ (defined as the tree graph with depth at most two) and \twoStar\ (defined as \dTwoTree\ where every non-root agent in connected to at most two agents) that admits query-efficient protocols for finding locally envy-free allocation. In particular, we design a protocol $\textsc{Alg2}$ that uses $O(n^4 \log(n))$ queries for \dTwoTree\ and $O(n^3)$ queries \twoStar\ to compute locally envy-free allocations. 
 
 As a warm-up, we first design a protocol (\textsc{Alg1}) for four agents on a $\textsc{Line}$ and then generalize the techniques to develop the protocol \textsc{Alg2} for the above-mentioned graph structures. Note that $\textsc{Domination}([0,1],1)$ protocol for \textsc{Trees} gives exponential query complexity for these simple graphs. Hence, we optimize our techniques and develop new ideas to obtain query-efficiency for \dTwoTree\ and \twoStar\ via our non-recursive protocol \textsc{Alg2}.
 \end{itemize}

\subsection{Related Literature:}
Fairness in resource-allocation settings is extensively studied in economics, mathematics and computer science literature (see \cite{procaccia_moulin_2016,brams1996fair,moulin2004fair}).  While strong existential guarantees are known for envy-free cake divisions \cite{stromquist1980cut,edward1999rental}, the corresponding computational problem remain challenging \cite{deng2012algorithmic,stromquist2008}. 
For three agents,\footnote{For two agents a simple cut-and-choose protocol returns the envy free allocation with a single cut on the cake.} the celebrated Selfridge-Conway protocol \cite{robertson1998cake} finds an envy-free allocation using 5 \cut\ queries. However, despite significant efforts, developing efficient envy-free cake cutting protocols for $n$ agents remains largely open: the current known upper bound has hyper-exponential dependency on the number of agents \cite{aziz2016discrete},  whereas  the only known lower bound for the problem is $\Omega(n^2)$ \cite{Procaccia2009ThouSC}. This leaves an embarrassingly huge gap in our computational understanding of the problem. Attempts have been made to address this gap with various kind of approaches. Aziz and Mackenzie \cite{aziz2016discretefour} proposed a cake cutting protocol that finds an envy-free allocation among four agents in (close to) $600$ queries. This bound was recently improved   by \cite{AmanatidisFMPV18} to $171$ queries. On the other hand, efficient fair cake cutting protocols for interesting classes of valuations have been developed in \cite{kurokawa2013cut,barman2021fair}. Furthermore, Barman et al. \cite{arunachaleswaran2019fair} developed an efficient algorithm that finds a cake division (with connected pieces) wherein the envy is multiplicatively within a factor of $2+o(1/n)$. 

% \textbf{Lower bounds:}  Envy-freeness  \cite{Procaccia2009ThouSC}, Equitability \cite{Procaccia17}, local proportionality \cite{Tucker21}, \cite{bei2020cake} 
 
 %\textbf{Impossibilities}: connected pieces \cite{stromquist2007}
%An important aspect for studying envy-free divisions is to assume an underlying social graph among agents that specify the envy constraints. 
The problem of cake cutting with graphical (envy) constraints was first introduced  by Abebe et al. \cite{AbebeKP16}. They fully characterize the set of (directed) graphs for which an oblivious single-cutter protocol---a protocol that uses a single agent to cut the cake into pieces---admits a bounded protocol for locally envy-free allocations in the Robertson-Webb model. In contrast, our work studies a class of \emph{undirected} graphs that are significantly harder to analyze, and surprisingly develops comparable upper bounds. In another closely related paper, Bei et al. \cite{bei2017networked} develops a moving-knife protocol\footnote{A moving-knife protocol may not be implementable in discrete steps under the standard Robertson-Webb query model, and hence is considered weaker.} that outputs an envy-free allocation on tree graphs. In contrast, we develop a bounded protocol under the standard Robertson-Webb query model for finding envy-free allocations on trees. In a more recent work,  Bei et al. \cite{bei2020cake} develop  a discrete and bounded {\em locally proportional}\footnote{For a given graph, an allocation is said to be locally proportional if every agent values her share that is at least as high as her average value of her neighbours' shares.} protocol for any given graph. In contrast, our work addresses stronger guarantee of local envy-freeness. We address the open question raised in \cite{bei2020cake} by (a) developing a discrete and bounded protocol for trees, and  (b) constructing a query-efficient discrete protocol that finds  locally envy-free allocations among $n$~agents on tree graphs with depth at most two. 

Local envy-freeness has also been explored in the discrete setting \cite{beynier2019local, chevaleyre2017distributed} where there is a set of indivisible items, and every agent is assigned a bundle of items. In the discrete setting, Aziz et al. \cite{aziz2018knowledge} defines new class of fairness notions parameterized by an underlying social graph and position them  with respect to known ones, thus revealing new rich hierarchies of fairness concepts. The work of \cite{bredereck2022envy} studies the parameterized computational complexity with respect to a few natural parameters such as the number of agents, the number of items, and the maximum number of neighbors of an agent.

\section{The Setting }
\label{sec:setting}
We consider the problem of fairly dividing a heterogeneous and divisible resource---modeled as a unit interval and referred as a \emph{cake}---among $n$ agents, denoted by the set $\N = \{a_1, a_2, \cdots , a_n\}$. For an agent $a_i \in \N$, we write  $v_i$ to specify her (cardinal) valuations over the intervals in $[0,1]$. In particular, $v_i(I) \in \mathbb{R}^+ \cup \{0\}$ represents the valuation of agent $a_i$ for the interval $I \subseteq [0,1]$. For brevity, we will write $v_i(x,y)$ instead of $v_i([x,y])$ to denote agent $a_i$'s value for an interval $[x,y]\subseteq [0,1]$.  Following the standard convention, we assume that $v_i$s are non-negative, additive,\footnote{For any two disjoint intervals $I_1, I_2 \subseteq [0,1]$, we have $v_{i}(I_1 \cup I_2) = v_{i}(I_1) + v_{i}(I_2)$.} and non-atomic.\footnote{For any interval $[x,y] \subseteq [0,1]$ and any $\lambda \in [0,1]$, there exists a $y'$ such that $v_i(x,y') = \lambda \cdot v_{i}(x,y)$.} Additionally, without loss of generality, we assume that the valuations are
normalized i.e., we have $v_i(0,1) = 1$ for all $i \in [n]$. 

We write  $G := (V,E)$ to denote the underlying social graph over the agents. Here, vertex $i \in V$ represents agent $a_i \in \N$ and an (undirected) edge $(i,j) \in E$ represents a connection (or the envy-constraint) between agents $a_i$ and $a_j$. An edge $(i,j)$ would refer to the constraint that agents $a_i$ and $a_j$ should not envy each other.

\subsection{Preliminaries} \label{section:preliminaries}

\noindent
\textbf{Problem instance:} A  \emph{cake-division instance $\mathcal{I}$ with graph constraints} is denoted by a tuple $\langle \N, G, \{v_i\}_{i \in [n]} \rangle$. Here, $\N$ denotes the set of $n$ agents, $G$ represents the social graph over the agents and $v_i$s specify the valuations of agents over the cake $[0,1]$.\\

\noindent
\textbf{Allocations:} For cake-division instances, we define an \emph{allocation} $\mathcal{A} := \{ A_1, A_2, \cdots , A_n\}$ of the cake $[0,1]$ to be a collection of $n$ pair-wise disjoint pieces such that $\cup_{i \in [n]} A_i = [0,1]$. Here, a piece or a bundle $A_i$ (a finite union of intervals of the cake $[0,1]$) is assigned to agent $a_i \in \N$. We say $\mathcal{A}$ is a \emph{partial} allocation if the union of $A_i$s forms a strict subset of $[0,1]$. \\

In this work, we study protocols for finding \emph{locally envy-free} allocations, a natural extension of the well studied notion of  envy-freeness defined below. 
\begin{definition}[Envy-freeness]
 For a cake-division instance, an allocation $\mathcal{A}$ is said to be \emph{envy-free} if we have $v_i(A_i) \geq v_i(A_j)$ for all agents $i,j \in [n]$.
\end{definition}

%Note that in a cake division instance with graph constraints, the protocol also has access to underlying network graph over agents. 

\begin{definition}[Local Envy-freeness]
  Given a  cake-division instance with a social graph $G=(\mathcal{N},E)$, an allocation $\mathcal{A} := \{ A_1, A_2, \cdots , A_n\}$ is said to be \emph{locally envy-free} (on $G$) if for all agents $i \in [n]$, we have  $v_{i}(A_i) \geq v_{i}(A_j) $ for all $j \in \mathcal{N}$ such that $(i,j) \in E$. 

\end{definition}

Local envy-freeness ensures that every agent prefers her own piece over that of any of her neighbors in $G$. When $G$ is the complete graph over $n$ agents, we recover the classical fairness guarantee of envy-freeness.  
%The overarching objective of the current work is to identify the graph structures on agents that admit efficient protocols for computing locally envy-free allocations. \\
%We also note that every local envy-freeness also implies  locally-proportionality\footnote{That is, each agent gets at least equal proportion of the cake divided among her neighbors.}. 
%\gan{Do we need a notation $K_n$? We don't use it elsewhere, right?}
%Clearly, every locally envy-free allocation on   $G$ is also envy-free on any of its subgraph.  As stated earlier, our goal is to develop simple protocols that take into account the network topology to find locally envy-free allocation  efficiently. %We also note that every local envy-freeness also implies  locally-proportionality\footnote{That is each agent gets at-least equal proportion of the cake divided among her neighbors.}. 
Our algorithms operate under the Robertson-Webb model \cite{robertson1998cake} defined below. 

\begin{definition}[Robertson-Webb query model]
  Our protocols access the agents' valuations by making the following two types of queries:\begin{enumerate}
      \item  Cut query:  Given a point $x \in [0,1]$ and a target value $\tau \in [0,1]$, $\cut_i(x,\tau)$ asks agent $a_i$ to report the subset $[x,y]$ such that $v_i(x,y) = \tau$. If such a subset does not exist, then the query response is some pre-determined garbage value.
\item Evaluation query: Given $0 \leq x < y \leq 1 $, $\eval_{i} (x,y)$ asks agent $a_i$ to report her value $ v_i (x,y)$ for the interval $[x,y]$ of the cake. 
  \end{enumerate}
\end{definition}

\noindent
\textit{Remark:} A cut query $\cut_i(x,\tau)$ which takes in an input point $x \in [0,1]$ and a value $\tau$ can easily be translated into a query $\cut_i(\mathcal{X},\tau)$ that takes in an input a collection of $k$ intervals (set according to left to right occurence) $\mathcal{X}=\{X_1,X_2\ldots X_k\}$ and outputs a collection of intervals $\Tilde{\mathcal{X}}=\{X_1,X_2,\dots, X_{k'-1}, \Tilde{X}_{k'}\}$ where $v_i(\Tilde{\mathcal{X}})=\tau$.\\
    This can be done since $a_i$ knows $v_i(X)$ for all $X\in \mathcal{X}$, and hence it can identify the $k' \in [k]$ such that $v_i(X_1+X_2\cdots X_{k'-1})< \tau \le v_i(X_1+X_2\dots X_{k'})$. Agent $a_i$ lets $\Tilde X_{k'}=\cut_i(x,\tau')$ where $\tau'=\tau-v_i(X_1+X_2\cdots X_{k'-1})$ and where $x$ is the smaller end of the interval of $X_{k'}$ to obtain $\Tilde{\mathcal{X}}=\{X_1,X_2\ldots X_{k'-1}, \Tilde{X_{k'}}\}$ with $v_i(\Tilde{X}_{k'})=\tau'$. \\

\noindent
\textbf{Special graph structures:}
Our work focuses on instances where the underlying social graph over the agents is either a \textsc{Tree}, a \textsc{Line}, a \dTwoTree, or a \twoStar. 
We say that a graph is \dTwoTree \ if it is a tree with depth at most two. A \twoStar\ graph is a special case of \dTwoTree\ where each non-root agent is connected to at most two  agents.

 \paragraph{Terminology and important subroutines:} %In this paper we  assign  different roles to agents based on their position on the graph. 
 We now state the terminology that is used to describe our protocols. An agent is called a \emph{cutter} if she makes cuts on the unallocated piece of the cake. Interestingly, the proposed protocols in this work require only a single agent to to act as a cutter. Furthermore, our protocol always require the cutter agent to makes cuts to divide a certain piece of the cake into $n$ equal parts (according to her). An agent is called a \emph{trimmer} agent if she performs either a \textsc{Trim(.)} or \textsc{Equal(.)} procedure, as defined below. \\  %All the non-root agents with degree $\geq 2$ act as trimmers in our protocols.  The leaf agents perform  \textsc{Select(.)} operation and are neither cutters nor trimmers.  We  will now define four important subroutines that are used throughout in our algorithms.}

%\subsection{Important Subroutines} \label{ssec:subroutines}
%\noindent
%\textbf{Equal Division}: Given an agent $a_i$, a piece of cake $U$ and an integer $n$, the \textsc{Eq-Div}(.) procedure  divides $U$ in $n$ equal pieces according to the valuation funciton of $a_i$ i.e.,  $v_i(X_j) = v_{i}(U)/n$ for all $j \in [n]$. This subroutine requires $n - 1$ \texttt{cut} queries and $1$ \texttt{eval} query (to compute $v_i(U)$). When valuation of agent $a_i$ for $U$ is clear from context we do not need the \eval\ query.  We remark here that it is only the cutter agent who performs \textsc{Eq-Div} in our algorithms. \\

  \noindent 
\textbf{Select:} Given a collection of pieces  $\mathcal{X}$, $m \leq  |\mathcal{X}|$,  and an agent $a_i$, $\textsc{Select}(\mathcal{X},a_i,m)$  returns  the $m$ largest valued pieces according to $v_i$. It is easy to see that \textsc{Select} requires zero \texttt{cut}  queries and at most  $|\mathcal{X}|$ \texttt{eval}  queries.\\

\noindent 
  \textbf{Trim}: Given a collection of pieces  $\mathcal{X}$  and an agent $a_i$, $\textsc{Trim}( \mathcal{X}, a_i)$ returns a collection of $|\mathcal{X}|$ pieces where each piece is valued equal to her smallest-valued piece in $\mathcal{X}$ and some \emph{residue}. The procedure first finds the lowest valued piece according to $v_i$ and makes the remaining pieces of value equal to it by trimming. Trimmings (denoted by $U$ in Steps 5 and 6 of the procedure) are returned separately, which we call as \emph{residue} of the procedure. The \textsc{Trim} procedure requires $|\mathcal{X}|-1$ \texttt{cut} queries and  $|\mathcal{X}|$  \texttt{eval} queries. \\    

\noindent 
\textbf{Equal}: Given a collection of pieces  $\mathcal{X}$  and an agent $a_i$, $\textsc{Equal}(\mathcal{X}, a_i)$ redistributes among the pieces in $\mathcal{X}$ such that each piece is equally valued by $a_i$. It also identifies a bundle in the original collection that has value larger than the average value of the bundles (according to $v_i$). Note that while both  \textsc{Equal} and \textsc{Trim} procedures return an allocation where all the pieces are equally valued by $a_i$, \textsc{Trim} may generate a residue whereas \textsc{Equal} procedure redistributes all the cake into the same number of pieces without leaving any part unallocated. The \textsc{Equal} procedure requires $|\mathcal{X}| -1 $ \texttt{cut} queries and $|\mathcal{X}|$ \texttt{eval} queries.  

    \begin{minipage}{0.46\linewidth}
            \vspace{-2pt}
\begin{algorithm}[H]
    \renewcommand{\thealgocf}{}
    \DontPrintSemicolon
    \SetAlgorithmName{$\textsc{Proc}$}{ }{ }
    \caption{\textsc{Select}($a_i, \mathcal{X}, m$)}
        \label{pro:trim}
        Initialize $\mathcal{X}^i \gets \emptyset$\;
        \For{$j = 1 \to m$}
        {Update $\mathcal{X}^{(i)} \gets \mathcal{X}^{(i)} \cup \argmax_{X \in \mathcal{X}} v_{i}(X)$, and $\mathcal{X} \gets \mathcal{X} \setminus \mathcal{X}^{(i)}$\;
      }
     \Return $(\mathcal{X}^{(i)}, \mathcal{X})$
\end{algorithm} 
\vspace{10pt}
\begin{algorithm}[H]
%\small 
    \renewcommand{\thealgocf}{}
    \DontPrintSemicolon
    \SetAlgorithmName{$\textsc{Proc}$}{ }{ }
    \caption{\textsc{Trim}($a_i, \mathcal{X}$)}
        \label{pro:trim}
        \nonl Let $\mathcal{X} = \{X_0, X_1, \dots, X_{\ell_i}\}$, s.t. $X_0=\argmin_{X \in \mathcal{X}}{v_{i}(X)}$\;
        Set $\tau = v_{i}(X_0)$ \;
        Initialize $U \gets \emptyset$ \;
    \For{$j =1 \to \ell_i$}
     {Let $\cut_i(X_j,\tau) = X'_j$ with $v_{i}(X'_j)= v_{i}(X_{0})$, and let $T_j=X_j \setminus X$\;
     Update $X_j \gets X'_j$ and $U \gets U \cup T_j$\;
     }
     \Return $(\mathcal{X},U)$
\end{algorithm} 
    \end{minipage}
   \medskip
    \vspace{10pt} 
         \hspace{5pt}
    \vline 
    \hspace{5pt} 
    \begin{minipage}{0.47\linewidth}
    \iffalse 
\begin{algorithm}[H]
%\small 
    \renewcommand{\thealgocf}{}
    \DontPrintSemicolon
    \SetAlgorithmName{$\textsc{Proc}$}{ }{ }
    \caption{\textsc{Eq-Div}($a_i, U, n$)}
        \label{pro:equal-div}
       % Intialize $\mathcal{X} \gets \emptyset$\;
         Set $\tau := \frac{v_{i}(U)}{n}$\;
        \For{$j = 1 \to n$}
        {Let $X_j:= \cut_i(U,\tau) $ \; %with $v_{i}(X)=\tau$\;
        %\xin{Have we defined $\cut_i$?} \gan{Yes, in preliminaries}\;
         Update $U \gets U \setminus X_j$ \; %and $\mathcal{X} \gets \mathcal{X} \cup X$\;
         %\xin{Shall we give $X$ a subscript to be $X_j$?}\;
      }
     \Return $(X_1,X_2, \cdots, X_n)$
\end{algorithm} 
\medskip
    \vspace{20pt} 
    \fi 
\begin{algorithm}[H]
%\small 
    \renewcommand{\thealgocf}{}
    \DontPrintSemicolon
    \SetAlgorithmName{$\textsc{Proc}$}{ }{ }
    \caption{\textsc{Equal}($a_i, \mathcal{X}$)}
        \label{pro:equal}
    Let $\mathcal{X}^{s} := \{X \in \mathcal{X}: v_{i}(X) < v_{i}(\mathcal{X})/|\mathcal{X}|\}$,  $\mathcal{X}^{\ell} := \{X \in \mathcal{X}: v_{i}(X) \geq v_{i}(\mathcal{X})/|\mathcal{X}|\}$\;
    Initialize $\mathcal{T} \gets \emptyset$ and let $\tau = v_{i}(\mathcal{X})/|\mathcal{X}|$\; Let $X^*$  be  arbitrary element of $\mathcal{X}^{\ell}$\;
    \For{$X_j \in \mathcal{X}^{\ell}$}
    {
    $a_i$ divides $X_j$ into $X_j'$ and $T_j$ such that $v_{i}(X_j') = \tau$ using a single cut\;
    $\mathcal{T} \gets \mathcal{T} \cup T_j$ and $X_j \gets X'_j$
    }
      \For{$X_j\in \mathcal{X}^{s}$}
        {
            \While{$v_i(X_j)< \tau$}
            {
                $X_j\leftarrow X_j\cup T$ for some $T \in \mathcal{T}$\;
                $\mathcal{T} \leftarrow \mathcal{T}\setminus T$\;
            }
            Let $T' \in \mathcal{T}$ be  last piece added to $X_j$\;
            Cut $T'$ into two pieces $T'_{1}$ and $T'_{2}$ such that $v_i(X_j')=\tau$ where $X_j'=X_j\setminus T'_{2}$\;
            $T\leftarrow T\cup T'_{2}$ and $X_j \gets X'_j$\;
        }
     \Return $(\mathcal{X}, X^*)$;\;
\end{algorithm} 
    \end{minipage}

%\textbf{Domination and the \textsc{Core} protocol:} Our algorithms for cyclic graphs use the notion of \emph{local domination}.  
%\begin{definition}[Local Domination]
%\label{def:domination}
%I'm Given a graph $G=(V,E)$, a partial allocation $\mathcal{A}$ and a residue  $R$ we say that agent $a_i$ locally dominates agent $a_j$ if  $v_i(A_i) \geq v_{i}(A_j \cup R)$ for all $(i,j) \in E$.
%\end{definition}

%Note that if $a_i$ locally dominates $a_j$ in $G$ then she dominates $a_j$ in any subgraph $G'$ containing edge $(i,j)$. We use the  \textsc{Core} protocol proposed by Amanatidis et al. \cite{AmanatidisFMPV18}    to achieve local dominance on four agent subgraphs containing cycles. We will use $\textsc{Core}$ depending on the graph structure and target local domination graph. Furthermore, for    \textsc{4Cycle} and \textsc{3\&4Cycle}\  graphs (Figure \ref{fig:graph structures}),  we also use a \textsc{Correction} protocol along with the \textsc{Core} protocol. The \textsc{Core} protocol takes as an input the unallocated cake\footnote{\textsc{Core} also takes as an input the set of agents to be excluded from \emph{competition}(Sec. 4 of \cite{AmanatidisFMPV18}) for details, but throughout our paper we consider that this set is an empty set.  }, and outputs an envy-free---possibly partial---allocation and a residue. 

%\gan{I think we should write more about correction protocol.  }

\section{Main Results}
This section presents the statements of our key results.  We begin by developing a simple recursive  protocol in Section~\ref{sec:line} that finds a locally envy-free allocation among $n$ agents on a \textsc{Line} using $n^{O(n)}$ queries.
  \begin{restatable}{theorem}{nAgentsOnLine} \label{thm:line}
 For cake-division instances with $n$ agents on a \textsc{Line}, there exists a discrete cake-cutting protocol that computes a locally envy-free allocation using at most $n^{(1+o(1))n}$ queries in the Robertson-Webb query model.
\end{restatable}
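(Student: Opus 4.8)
The plan is to reduce Theorem~\ref{thm:line} to a single recursive invariant and prove it by downward induction on $k$, from the base level $k=n$ up to $k=1$. Concretely, I would show that for every piece $R$ and every $k\in[n]$, the protocol $\textsc{Domination}(R,k)$ returns an $n$-partition of $R$ that is $k$-\DominationCondition, meaning the assignment of bundle $j$ to agent $a_j$ satisfies the three conditions: (a) every $a_i$ with $i\ge k$ does not envy its \textsc{Line}-neighbours, (b) $a_k$ values the first $k$ bundles equally, and (c) $a_{k+1}$ values her bundle at least as much as any of the first $k$ bundles. The theorem then follows from the special case $k=1$: condition (a) at $k=1$ says that \emph{every} agent $a_1,\dots,a_n$ is envy-free towards its neighbours, which is exactly local envy-freeness, while (b) is vacuous and (c) follows from (a). The base case $k=n$ is immediate: $a_n$ cuts $R$ into $n$ equal pieces, so $a_n$ values all bundles equally (giving (a) for the single index $i=n$ and (b)), and (c) is vacuous since $a_{n+1}$ does not exist.

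For the inductive step I would assume $\textsc{Domination}(R,k+1)$ returns a $(k+1)$-\DominationCondition\ partition and analyze one ``round'' of $\textsc{Domination}(R,k)$. Each round invokes $\textsc{Domination}(\cdot,k+1)$ to obtain a $(k+1)$-\DominationCondition\ partition, and then agent $a_k$ applies the \textsc{Trim} procedure to the first $k$ bundles to equalise them under $v_k$; this establishes condition (b) at level $k$ by construction. Crucially, trimming only \emph{removes} value from the first $k$ bundles, so for the downstream agents nothing gets worse: since $a_{k+1}$ already valued the first $k+1$ bundles equally (her condition (b) at level $k+1$), and $B_{k+1}$ is untouched, we still have $v_{k+1}(B_{k+1})\ge v_{k+1}(B_j)$ for $j\le k$, which is condition (c) at level $k$; and every $a_i$ with $i\ge k+1$ continues not to envy its neighbours because their bundles were only shrunk. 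The sole remaining obstruction is the clause of (a) asserting that $a_k$ does not envy her ``upward'' neighbour $a_{k+1}$ (her ``downward'' neighbour $a_{k-1}$ lies among the now-equalised first $k$ bundles, so no envy arises there). This single inequality $v_k(B_k)\ge v_k(B_{k+1})$ is precisely the \emph{domination condition}, and it is what the round-loop is designed to force.

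I expect the main obstacle to be exactly this loop: proving that the domination condition is reached after boundedly many rounds. The idea I would pursue is a potential argument tracking an \emph{un-partitioned residue} together with the running (partial) reserved allocation, and showing that, measured by a judiciously chosen agent's valuation, the residue contracts by a factor bounded away from $1$ (of order $1-\Omega(1/k)$) in each round, while the reserved equal pieces only grow towards dominating $B_{k+1}$ in $v_k$. Geometric decay then drives the residue below the threshold at which it is small enough to be distributed among the bundles \emph{adversarially} without overturning any of the non-envy inequalities (a), without disturbing the equalities in (b), and without violating (c); at that point all three conditions hold simultaneously and the partition is $k$-\DominationCondition. Making the contraction factor and the terminating threshold precise so that the number of rounds is at most $k+k\log k$ is the delicate, and genuinely hard, part of the argument.

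Finally, the query bound follows by unrolling the recursion. Writing $Q(k)$ for the worst-case number of Robertson--Webb queries of $\textsc{Domination}(R,k)$, the base case gives $Q(n)=O(n)$ (one equal $n$-cut), and each round makes one recursive call plus a constant number of \textsc{Trim}/\textsc{Select}/\textsc{Equal} operations, each costing $O(n)$ queries; with at most $k+k\log k$ rounds this yields the recurrence $Q(k)\le (k+k\log k)\,\big(Q(k+1)+O(n)\big)$. Unrolling bounds $Q(1)$ by $O(n)\cdot\prod_{k=1}^{n-1}(k+k\log k)$, and since $\log\big(\prod_{k=1}^{n-1}(k+k\log k)\big)=\log((n-1)!)+\sum_{k=1}^{n-1}\log(1+\log k)=(1+o(1))\,n\log n$ (the factorial term dominates, the logarithmic sum being only $O(n\log\log n)$), we obtain $Q(1)=n^{(1+o(1))n}$, establishing the claimed bound.
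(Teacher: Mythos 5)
Your overall architecture coincides with the paper's (downward induction on the $k$-\textsc{Fair} invariant, geometric decay of the residue, unrolled recurrence giving $n^{(1+o(1))n}$), but your inductive step contains a concrete flaw: you omit the \textsc{Select} step and consequently attach the while-loop to the wrong inequality. In the paper's protocol, each round has agent $a_k$ first select her $k$ favourite bundles among the first $k+1$ bundles of the $(k+1)$-\textsc{Fair} partition returned by $\textsc{Domination}(R,k+1)$, with the leftover piece going to $a_{k+1}$ (this is well-defined because $a_{k+1}$ values all $k+1$ of those bundles equally, by C2 at level $k+1$). That selection is what guarantees $v_k(A_k)\ge v_k(A_{k+1})$ in \emph{every} round: after \textsc{Trim}, each of $a_k$'s pieces is worth (to her) the minimum over her chosen $k$, which is at least her value for the leftover. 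In your version, where $a_k$ simply trims the first $k$ bundles and $a_{k+1}$ keeps $B_{k+1}$ untouched, the inequality $v_k(B_k)\ge v_k(B_{k+1})$ --- which you identify as ``the domination condition the loop is designed to force'' --- is not forceable by the loop at all: trimming only removes value from $a_k$'s side, and if $B^t_{k+1}$ is $a_k$'s favourite bundle in every round the deficit never closes. The actual domination condition terminating the loop concerns $a_{k+1}$, not $a_k$: the loop trims until $v_{k+1}(A_{k+1})-v_{k+1}(A_i)\ge v_{k+1}(R)$ for \emph{all} $i\le k$, i.e.\ the residue is so small in $a_{k+1}$'s eyes that the final \textsc{Equal} redistribution (which, from her viewpoint, reallocates the remaining residue among the first $k$ bundles arbitrarily) cannot create envy --- this is exactly what protects condition C3 through the last round, a point your per-round argument never has to confront because you (incorrectly) keep $B_{k+1}$ fixed.

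On the round bound, you correctly anticipate the $(1-1/k)$ contraction per round, but geometric decay alone does not yield $k+k\log k$: decay only drives $v_{k+1}(R^t)$ below the maximum trimmed value $c_t$ (the paper's Claim~\ref{claim: residue decreasing}); one additionally needs that \emph{each} of the $k$ bundles acquires, at some round, a deficit of at least $c_t$ against $A_{k+1}$. The paper achieves this through a specific allocation rule you have no analogue of: in the trimming phase the piece that is smallest under $v_{k+1}$ is assigned to a cyclically rotating bundle $A_w$ with $w=c \bmod k + 1$, which together with the monotonicity of the differences $v_{k+1}(A^t_{k+1})-v_{k+1}(A^t_j)$ (Claim~\ref{claim: monotonically increasing}) ensures all $k$ deficits exceed the eventual residue within $k+k\log k$ rounds. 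Since you explicitly leave this ``delicate'' part open and your loop targets an unattainable inequality, the proposal as written does not establish the theorem. Your final query-count recurrence and the Stirling computation are essentially correct and match the paper's (the paper's per-round overhead is $k^2\log k$ in place of your $O(n)$ per round; both are dominated by the product of the round counts).
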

 
Next, in Section~\ref{sec:tree} we carefully adapt the discrete protocol developed for the case of \textsc{Line} and modify it to construct a discrete protocol for computing locally envy-free allocation among $n$ agents on a \textsc{Tree} using $n^{O(n)}$ queries. As mentioned earlier,  we are able to efficiently generalize our key techniques developed for \textsc{Line} graphs and hence we maintain the identical query complexity bound. Designing a discrete and bounded protocol for local envy-freeness on trees is listed as an open problem in \cite{AbebeKP16}. We answer this open problem by proving the following result.
 \begin{restatable}{theorem}{nAgentsOnTree} \label{thm:tree}
 For cake-division instances with $n$ agents on a  \textsc{Tree}, there exists a discrete cake-cutting protocol that computes a locally envy-free allocation using at most $n^{(1+o(1))n}$ queries in the Robertson-Webb query model.
\end{restatable}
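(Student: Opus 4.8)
The plan is to reuse the machinery behind Theorem~\ref{thm:line} almost verbatim, after isolating the single structural feature of the path $a_1 - a_2 - \cdots - a_n$ that the \textsc{Line} protocol actually exploits: in this ordering every agent $a_k$ with $k<n$ has \emph{exactly one} neighbour, namely $a_{k+1}$, among the higher-indexed agents $\{a_{k+1},\dots,a_n\}$. Every tree admits an ordering with the same property. First I would root the given \textsc{Tree} at an arbitrary vertex, designate the root as the cutter $a_n$, and number the remaining vertices from the leaves upward so that each vertex receives a smaller index than its parent (a reverse BFS order). Then for every $k<n$ the unique neighbour of $a_k$ inside $\{a_{k+1},\dots,a_n\}$ is precisely its parent, which I denote $p(k)$; on a \textsc{Line} one has $p(k)=a_{k+1}$, so this is a genuine generalization.

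Next I would transplant the notion of $k$-\DominationCondition\ allocations to trees by replacing the successor $a_{k+1}$ with the parent $p(k)$ throughout: an $n$-partition $\{B_1,\dots,B_n\}$ of $R$ (with $B_j$ tentatively assigned to $a_j$) is $k$-\DominationCondition\ if (a) every $a_i$ with $i \ge k$ values $B_i$ at least as much as the bundle of each of its neighbours, (b) $a_k$ values $B_1,\dots,B_k$ equally, and (c) $p(k)$ values its own bundle at least as much as each of $B_1,\dots,B_k$. The protocol $\textsc{Domination}(R,k)$ is then defined exactly as in Section~\ref{sec:line}, with the root $a_n$ cutting $R$ into $n$ equal pieces in the base case and, at each level, $a_k$ acting as the trimmer (via \textsc{Trim} and \textsc{Equal}) to install the equal-valuation condition (b) and the parent-domination condition (c). I would then prove by downward induction on $k$ that $\textsc{Domination}(R,k)$ returns a $k$-\DominationCondition\ partition of $R$. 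The base case $k=n$ and the inductive step would reproduce the corresponding lemmas from the \textsc{Line} analysis essentially line by line, since every appeal made there to ``the successor $a_{k+1}$'' uses only that it is the unique higher-indexed neighbour of $a_k$, a role now played by $p(k)$.

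The step I expect to be the real obstacle is verifying that clause~(a) survives the adaptation. On a \textsc{Line} each agent has at most one neighbour on each side, so a higher-indexed agent $a_i$ compares against at most one already-processed neighbour and at most one lower-indexed one; on a \textsc{Tree}, $a_i$ may have many children, all carrying indices below $k$, whose bundles are still being permuted within the ``equal block'' $B_1,\dots,B_k$ as the recursion unwinds. I would therefore need to show that the equal-valuation invariant (b) over $B_1,\dots,B_k$ together with the accumulated parent-domination guarantees (c) is strong enough to preclude envy from $a_i$ toward \emph{every} child at once, and symmetrically that each child stays envy-free toward $a_i$ after the bundles are reassigned. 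Handling the residue produced by \textsc{Trim} should be routine: the same geometric-shrinkage argument as in the \textsc{Line} case shows it becomes small enough to be allocated arbitrarily without creating envy.

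Finally, the query accounting is unchanged. The number of times $\textsc{Domination}(R,k)$ invokes $\textsc{Domination}(R,k+1)$ is governed solely by the trimmer $a_k$'s valuations and is still at most $k+k\log k$, independently of the graph; multiplying these bounds across the $n$ recursion levels and adding the $O(n)$ cuts per base call yields, via Stirling's estimate, the same total of $n^{(1+o(1))n}$ queries. Running $\textsc{Domination}([0,1],1)$ then returns a $1$-\DominationCondition\ allocation, whose clause~(a) with $k=1$ is exactly local envy-freeness on the \textsc{Tree}, completing the proof. A finer, path-based organization of the recursion recovers the observation that the longest root-to-leaf path, rather than $n$, is the genuine bottleneck---so that the \textsc{Line} is the hardest tree---but this refinement is not needed for the stated bound.
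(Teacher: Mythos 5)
Your high-level plan---root the tree, index the agents by a topological order so that the unique higher-indexed neighbour of $a_k$ is its parent $a_{p_k}$, and rerun \DominationProtocol\ with $a_{p_k}$ playing the role of $a_{k+1}$---is exactly the paper's route, and your query accounting (at most $k+k\log k$ rounds per recursion level, hence $n^{(1+o(1))n}$ in total) is also what the paper proves. The gap is in your transplanted invariant, and it sits precisely at the step you yourself flag as ``the real obstacle'': conditions (b) and (c) over the global block $B_1,\dots,B_k$ are the wrong generalization, and no argument can rescue them, because the protocol cannot even maintain them. On a tree the inactive agents $a_1,\dots,a_k$ are scattered over several subtrees; in the recursion, agent $a_k$ only ever touches the $d_k=|D_k|$ bundles she selects out of her parent's holdings (those earmarked for her own subtree $D_k$), and she never sees the bundles parked in sibling or far-away subtrees. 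Hence she cannot be made indifferent among all of $B_1,\dots,B_k$, so your (b) is unmaintainable; likewise the while-loop's domination test only drives $a_{p_k}$'s dominance over the $d_k$ bundles indexed by $D_k$, not over the whole block, so your (c) fails as well (and is anyway stronger than what local envy-freeness of $a_{p_k}$, who is compared only with her neighbours, requires).

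The paper's fix, which is the one genuinely new idea relative to the \textsc{Line} case, is to \emph{localize} the invariant: it defines $\Inact(k,a_j)$ (agent $a_j$ together with the descendants of her inactive children) and the induced partition of the bundles into sets $\Storage(k,a_j)$, and demands, for \emph{every} active agent $a_j$ with $j\ge k$, that (C2) $a_j$ values all bundles in her own storage equally, and (C3) $a_j$ weakly prefers her bundle to every bundle in the storage of each of her \emph{active children}. This per-agent form is exactly what the downward induction consumes: when $a_k$ becomes active she selects $d_k$ bundles from $\Storage(k,a_{p_k})$ and re-indexes them by $D_k$, and it is $a_{p_k}$'s indifference over her own storage (C2 one level up), not any property of $a_k$, that licenses this reassignment without creating envy; your single-agent condition (b) provides no such license for selections performed higher up or in other branches, so your induction on $k$ would not close. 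Once the invariant is restated via storage sets (it specializes to your (b),(c) on a \textsc{Line}, where $\Storage(k-1,a_k)=\{B_1,\dots,B_k\}$ and all other storages are singletons or empty), the \textsc{Trim}/\textsc{Equal} mechanics, the domination condition for $a_{p_k}$ over the indices in $D_k$, the $d_k+d_k\log d_k\le k+k\log k$ round count, and your Stirling computation all go through essentially as you describe.
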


We remark here that the above result for \textsc{Tree} graphs, though requires exponentially many queries, is a significant improvement over the best known hyper-exponential query complexity bound for complete graphs \cite{aziz2016discrete}.\footnote{We remark here that any locally envy-free cake-cutting protocol on a graph $G$ is also locally envy-free on any subgraph $G' \subseteq G$.} 
We believe that it can prove to be a stepping stone in understanding the bottlenecks of computational complexity of envy-free allocations.

Next, in Section~\ref{sec:special graphs}, we turn our attention towards identifying specific class of graph structures that admit polynomial-query algorithms for local envy-freeness. Developing efficient locally envy-free  protocols for interesting graph structures is listed as an open problem in \cite{bei2020cake} and \cite{AbebeKP16}. This work partially addresses  the aforementioned open problem by developing  a novel protocol that finds a locally envy-free allocation using  polynomial queries for \dTwoTree\ and \twoStar\ graph structures. 

\begin{restatable}{theorem}{Trees}\label{thm:depth2tree}
\label{thm:tree}
  For cake-division instances with $n$ agents on a \textsc{Depth2Tree}, there exists a discrete protocol  that finds a locally envy-free allocation using at most $O(n^3\log(n))$ \cut \ and $O(n^4\log(n))$ \eval \ queries.
\end{restatable}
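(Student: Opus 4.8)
The plan is to designate the root $r$ of the \dTwoTree\ as the single \emph{cutter} and to build the allocation over many rounds, each of which slices the current unallocated \emph{residue} into $n$ equal pieces (according to $v_r$) and then lets the remaining agents claim and shrink pieces without any further cutting. Throughout, I think of the tree as a root together with its level-one children $c_1,\dots,c_t$ (the only vertices whose envy is constrained both upward, to $r$, and downward, to their own children), and, for each $c_j$, the star \emph{branch} consisting of $c_j$ and its level-two leaves. Writing $s_j$ for the size of branch $j$, we have $1+\sum_j s_j=n$, and the four envy directions to control are: (i) $r$ versus each $c_j$; (ii) each $c_j$ versus $r$; (iii) each $c_j$ versus its leaves; (iv) each leaf versus $c_j$. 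A useful warm-up, matching the paper's \textsc{Alg1}, is the four-agent \textsc{Line} $a_1-a_2-a_3-a_4$, which, rooted at $a_2$, is itself a \dTwoTree; this instance already exhibits all four constraint types and is where I would first calibrate the select/trim mechanism.

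Next I would fix the single-round construction and verify that it enforces all four constraints up to the residue it produces. In a round where $r$ has cut the residue into $n$ equal pieces, I let the branches claim pieces in sequence: branch $j$ uses \textsc{Select}$(\cdot,c_j,s_j)$ to grab its $s_j$ favourite \emph{available} pieces, and the one piece surviving all selections is permanently handed to $r$. Because $r$'s leftover piece was available at every branch's turn but was never chosen, every piece held by $c_j$ is worth at least $v_{c_j}(r\text{'s piece})$; and because each such piece is a subset of one equal piece, it is worth at most $v_r(r\text{'s piece})=v_r(\text{residue})/n$ to $r$. Within branch $j$, I apply \textsc{Trim}$(c_j,\cdot)$ to equalize the $s_j$ pieces in $v_{c_j}$ (spilling the excess into the next round's residue), let the leaves pick their favourites, and give $c_j$ the leftover. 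A round-by-round accounting then shows that each constraint holds \emph{additively}: summing the per-round inequalities yields $v_r(A_{c_j})\le v_r(A_r)$, $v_{c_j}(A_{c_j})\ge v_{c_j}(A_r)$, $v_{c_j}(A_{c_j})=v_{c_j}(A_{\text{leaf}})$, and $v_{\text{leaf}}(A_{\text{leaf}})\ge v_{\text{leaf}}(A_{c_j})$, so the construction is locally envy-free in the limit of vanishing residue, albeit with, as yet, no slack to spare.

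The heart of the argument---and the step I expect to be the main obstacle---is turning this limiting statement into a finite, bounded protocol by handling the residue. The key quantitative lever is that $r$ permanently removes one of the $n$ equal pieces each round, so $v_r(\text{residue})$ contracts by a factor of at least $(1-1/n)$ per round and hence falls below any $1/\mathrm{poly}(n)$ threshold within $O(n\log n)$ rounds. I would then define a domination condition analogous to the $k$-\DominationCondition\ property used for the \textsc{Line}: the protocol runs until, for every branch, the level-one agent's accumulated bundle dominates each of her leaves' bundles by a margin exceeding the residue's value to her. Once this holds, the remaining residue can be split among the \emph{leaves} only: a leaf's single envy constraint is with its parent $c_j$, whose margin absorbs the gift, while $r$ and the root-level relationships are untouched since $r$ shares no edge with any leaf. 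The delicate points are (a) arguing that these domination margins are actually built up---rather than the per-round equalities leaving zero slack---which is exactly where a modified within-branch step, iterated until a repeated-call bound of the $\le k+k\log k$ type is met, is adapted inside each branch-star, and (b) checking that the simultaneous upward ($r$--$c_j$) and downward ($c_j$--leaf) constraints never interfere, which is where the depth being exactly two is essential: the recursion bottoms out after two levels, so the round counts of the root-star and the branch-stars compose to a polynomial rather than nesting $n$ deep as in the general \textsc{Tree} bound of $n^{O(n)}$.

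Finally I would total the cost. Each round costs $n-1$ \cut\ queries for $r$, $O(n)$ further \cut\ queries for the branch \textsc{Trim} calls, and $O(n^2)$ \eval\ queries for the \textsc{Select} steps (each branch scanning the $O(n)$ available pieces). With $O(n^2\log n)$ rounds needed to reach domination, this yields $O(n^3\log n)$ \cut\ and $O(n^4\log n)$ \eval\ queries, matching the claim. The \twoStar\ bound of $O(n^3)$ would then follow as the degenerate case where every branch has size at most two, so the within-branch domination is immediate and the extra $\log n$ factor disappears.
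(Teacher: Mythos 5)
Your skeleton matches the paper's \textsc{Alg2} quite closely---the root as the sole cutter, each round dividing the current residue into $n$ equal pieces, each branch selecting its $\ell_i+1$ favourite pieces with the survivor going to the root, a \textsc{Trim} inside each branch whose spillage forms the next residue, and even the final tallies of $O(n^2\log n)$ rounds, $O(n)$ cuts per round, $O(n^3\log n)$ \cut\ and $O(n^4\log n)$ \eval\ queries. But there is a genuine gap at exactly the point you flag as the main obstacle: you place the domination condition on the wrong edge, and your own single-round construction makes the margin you need identically zero. Since $c_j$'s \textsc{Trim} equalizes all $s_j$ branch pieces under $v_{c_j}$ in every round, the accumulated branch bundles remain \emph{exactly} equal in $c_j$'s eyes forever, so the margin ``$c_j$'s bundle over each leaf's bundle'' that is supposed to absorb a final residue dump on the leaves can never become positive; and any modification that inflates $c_j$'s own bundle relative to the leaves' bundles in $v_{c_j}$ immediately threatens the leaf-versus-parent constraint (the leaves choose from pieces $c_j$ has equalized, so giving $c_j$ strictly more in her own view can make a leaf prefer $c_j$'s bundle). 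The ``$k+k\log k$''-type iteration you invoke from the \textsc{Line} protocol does not transplant here: there the margin is built for the \emph{parent above} the trimming agent ($a_{p_k}$ over the bundles $a_k$ manages), never for a trimmer over her own children. Separately, your fallback of stopping once $v_r(R)$ drops below $1/\mathrm{poly}(n)$ cannot certify \emph{exact} envy-freeness: accumulated value differences can be arbitrarily small, so the residue must be compared against an explicitly constructed margin, not an absolute threshold.

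The paper's resolution needs no $c_j$-over-leaf margin at all: it maintains $\ell_i+1$ bundles per branch that stay exactly equal under $v_{c_j}$ throughout, and the leaves choose only once, after the while-loop. The residue is never dumped on anyone; it is extinguished because each branch eventually switches from \textsc{Trim} to \textsc{Equal}, which redistributes everything it touches. The real danger---which your subset-of-an-equal-piece accounting conceals---is that \textsc{Equal} produces pieces worth \emph{more} than $v_r(R)/n$ to the root, so it is the \emph{root's} margin over the branch bundles that must be built before a branch may switch, and the condition is calibrated as $v_r(A_r)-v_r(A^{(i)}_k)\ge \min\{\frac{\ell_i+1}{|D|+1},1\}\cdot v_r(R)$, the factor coming from a geometric-series bound on the total value a switched branch can ever receive afterwards. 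These margins are created by a concrete routing rule you have no analogue of: in each round the smallest trimmed piece (in $v_r$) is added to the currently \emph{largest} bundle (in $v_r$) of that branch, and a pigeonhole argument over which (branch, bundle) realizes the maximum trimmed value shows that one branch becomes dominated every $O(n\log n)$ rounds, hence $O(n^2\log n)$ rounds in total. Keeping one untrimmed bundle $A^{(i)}_0$ per branch serves double duty: the root never envies it by the subset property, and it yields the sharper contraction $v_r(R^{t+1})\le(1-\frac{|D|+1}{n})\,v_r(R^t)$ of Lemma~\ref{lemma:decreaseinround}, which is also what actually drives the \twoStar\ speedup (via $|D|+1\ge n/2$ and one trimmer removed every two rounds), rather than branch sizes making ``within-branch domination immediate.'' Your cost accounting is sound, but it rests on a termination-and-correctness mechanism that your proposal does not yet contain.
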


 \begin{restatable}{theorem}{TwoStar}\label{thm:2star}
 For cake-division instances with $n$ agents on a $2$-\textsc{Star}, there exists a discrete protocol that finds a locally envy-free allocation using at most $O(n^2)$ \cut \ and $O(n^3)$ \eval \ queries.
\end{restatable}

\section{Local Envy-freeness on $\textsc{Line}$ graphs}\label{sec:line}
In this section, we will study the cake-division instances where envy-constraints among agents are specified by a \textsc{Line} graph, i.e., agents lie on the social graph of \textsc{Line}, as shown in Figure \ref{fig: n agents on line}. We present one of our key results that develops a bounded protocol $\textsc{Domination}([0,1],1)$ for finding a locally envy-free allocation among $n$ agents on a \textsc{Line} using $n^{(1+o(1))n}$ queries in the Robertson-Webb query model. Note that, this protocol is of significantly lower complexity as compared to the best known hyper-exponential query bound for complete graphs \cite{aziz2016discrete}.

 \begin{figure}[ht!]
     \centering
     \includegraphics[scale=0.9]{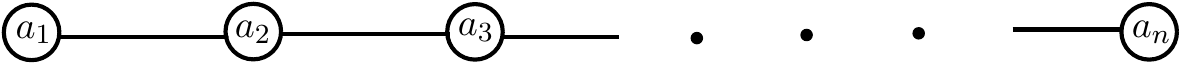}
     \caption{$n$ agents on a \textsc{Line} graph}
     \label{fig: n agents on line}
 \end{figure}

Our protocol $\textsc{Domination}([0,1],1)$ is recursive in nature where for any piece $R \subseteq [0,1]$ and any agent $a_k$ for $k \in [n-1]$, \DominationProtocol\ invokes $\textsc{Domination}(R,k+1)$ multiple times. Before giving an overview of the protocol \DominationProtocol, we begin by describing the following crucial property, to be referred as $k$-\DominationCondition ness, that is satisfied by its output allocation.

\begin{definition}[$k$-\DominationCondition\ allocation for a \textsc{Line} graph] \label{defn:kfair_lines} Consider a cake-division instance with $n$ agents on a \textsc{Line} and a complete allocation $\mathcal{B} := 
\{B_1, B_2, \cdots, B_n \}$ of some piece $R \subseteq  [0,1]$ of the cake. We say  $\mathcal{B}$ is  $k$-\DominationCondition\  (for $k\in [n]$) if and only if
\begin{enumerate}
     \item [C1.] For $i \geq k$, agent $a_i$  does not envy her neighbors i.e., $v_i(B_i) \geq v_i(B_{i-1})$ and $v_i(B_i) \geq v_i(B_{i+1})$.
     \item [C2.] For agent $a_k$, we have
 $v_{k}(B_{k})=v_{k}(B_{\ell})$ for  all $\ell \le k$.
      \item [C3.] For agent $a_{k+1}$, we have $v_{k+1}(B_{k+1})\ge v_{k+1}(B_\ell)$ for all $\ell \le k$.
 \end{enumerate}
\end{definition}
Note that $1$-\DominationCondition\ allocation are locally envy-free for agents on a \textsc{Line}. And, in particular, we will prove that our protocol $\textsc{Domination}([0,1],1)$ outputs an allocation that is $1$-\DominationCondition, see Lemma~\ref{lem:klogk complexity}.
%The \DominationCondition\ at $k$ property implies that there is no envy between agents on the right side of the agent $a_k$.  
\paragraph{Overview of \DominationProtocol \ for \textsc{Line} graph:} 
Let us now give a high-level overview of  our recursive protocol, \DominationProtocol\ for some piece $R \subseteq [0,1]$ and some agent $a_k$ for $k \in [n]$.  For any piece $R \subseteq [0,1]$ and any agent $a_k$ for $k \in [n]$, we construct a recursive protocol \DominationProtocol\ that repeatedly invokes $\textsc{Domination}(R,k+1)$ to compute a $k$-\DominationCondition\ allocation of $R$. We call $R$ as residue that keeps decreasing throughout the execution. The protocol consists of a while-loop that terminates when agent $a_{k+1}$ achieves a certain \emph{domination condition} on her value of the (current) residue (as stated in Step 9).

For $k=n$, the protocol $\textsc{Domination}(R,n)$ is defined in a  straight-forward manner: it simply asks agent $a_n$  to cut $R$ into $n$ equal pieces according to her. It can be easily verified that this allocation is indeed $n$-\DominationCondition.  
%For $k<n$, we construct the protocol \DominationProtocol\ recursively where we invoke  $\textsc{Domination}(R,k+1)$ in multiple rounds. Here, $R$ is referred to as the  \emph{residue} and the protocol keeps modifying it throughout its execution until it becomes empty. All along, the main goal of the protocol is to create a $k$-\DominationCondition \ allocation  with the help of $(k+1)$-\DominationCondition \ allocations in polynomial number of rounds. It consists of a while-loop that runs till agent $a_{k+1}$ achieves a certain \emph{domination condition} over her value of the (current) residue, see Step 10 of \DominationProtocol. In each round, it calls $\textsc{Domination}(\bar{R},k+1)$ for the current residue $\bar{R}$. On some appropriately chosen pieces from the output partition of $\textsc{Domination}(\bar{R},k+1)$, agent $a_k$ performs the \texttt{Trim} procedure and updates the residue. Once the while-loop is terminated, there is an \texttt{Equaling} phase for agent $a_k$, whose output creates the final  allocation that is $k$-\DominationCondition. 
 For $k<n$, we develop the protocol \DominationProtocol\ that successively constructs a $k$-\DominationCondition \ allocation $\mathcal{A}=\{A_1, \dots, A_n\}$ of piece $R$ among $n$ agents in multiple rounds. It does so by invoking $\textsc{Domination}(R,k+1)$ and using its $(k+1)$-\DominationCondition \ output allocations. Here, $R$ is referred to as the  \emph{residue} and the protocol keeps modifying it throughout its execution until it becomes empty. \DominationProtocol \ consists of a while-loop that runs till $a_{k+1}$ achieves a certain \emph{domination condition} over her value of the current residue (see Step~10). We prove that this is achieved in polynomial many rounds of the while-loop (see Lemma~\ref{lem:klogk complexity}). 
 
 Any round $t$ of the while-loop (Step 3) begins with invoking the protocol $\textsc{Domination}(R^t, k+1)$ to obtain a $(k+1)$-\DominationCondition\  allocation $\mathcal{B}^t=\{B^t_1, B^t_2, \dots, B^t_n\}$ of the current residue $R^t$ (where, $R^{1}=R$). We allocate piece $B^t_j$ to agent $a_j$'s bundle $A_j$ for $j \geq k+2$. Next, agent $a_k$ selects her top $k$ favorite pieces from the remaining $k+1$ pieces $B^t_1, \dots, B^t_{k+1}$ (in Step 8) and the last remaining piece is added to the bundle  $A_{k+1}$ of agent $a_{k+1}$. Let us denote the set of $k$ pieces picked by agent $a_k$ by $S^t$. We remark here that any agent lying to the right of $a_k$ receives an unspoiled\footnote{Without any modification to the bundles obtained from $\textsc{Domination}(R^t,k+1)$.} pieces of cake as obtained from $\textsc{Domination}(R^t,k+1)$ in any round of the while loop. We will use this fact to establish that the final output allocation $\mathcal{A}$ satisfies Condition C1 of $k$-\DominationCondition ness, see Lemma~\ref{lem:klogk complexity}.
 
 In the remaining part of the while-loop, if the \emph{domination condition} for agent $a_{k+1}$ is not yet satisfied, agent $a_k$ will modify the selected $k$ pieces in the set $S^t$ via a \textsc{Trim} procedure (in Step 12). This step dictates agent $a_k$ to trim all the pieces in $S^t$ so that they are all valued equal to her least valued piece in $S^t$. The residue obtained in this procedure becomes the \emph{residue} $R^{t+1}$ for the next round of the while-loop. 
 The least valued piece according to agent $a_{k+1}$ in the output partition of the \textsc{Trim} procedure is added to some appropriately chosen agent $a_w$ for $w \leq k$. The remaining pieces are arbitrarily allocated, one each, to the agents $a_j$ for $j \in [k] \setminus \{w\}$. In Claim~\ref{claim: residue decreasing}, we prove that the above allocation ensures that agent $a_{k+1}$ will eventually achieve the \emph{domination condition} over her value of the (current) residue, see Step 10. The \emph{domination condition} entails that agent $a_{k+1}$ will not envy any agent $a_\ell$ for $\ell \leq k$ even if the entirety of the current residue is allocated to $a_\ell$ in the final allocation. This dominance proves to be crucial in establishing the Condition C3 of $k$-\DominationCondition ness for  the final allocation $\mathcal{A}$. Moreover, note that any agent $a_\ell$ for $\ell \leq k$ receives a bundle from the output partition of \textsc{Trim} and \textsc{Equal} procedures performed by agent $a_k$. Using this fact, we can easily establish Condition C2 of $k$-\DominationCondition ness for the final allocation $\mathcal{A}$, see Lemma~\ref{lem:klogk complexity}.
 
  Finally, to analyze the run-time of our protocol, the key observation is that value of the (current) residue $R$ according to agent $a_{k+1}$ decreases exponentially fast (see Claim~\ref{claim: monotonically increasing}). This ensures that the said domination condition for agent $a_{k+1}$ is achieved in at most $k+k\log k$ rounds of the while-loop (see Claim~\ref{claim: residue decreasing}). After which, the while-loop terminates and agent $a_{k}$ now performs an \textsc{Equal} procedure (instead of \textsc{Trim}) on the output allocation of $\textsc{Domination}(R^{k+k \log k},k+1)$ and hence producing no residue. The $k$ pieces returned by the \textsc{Equal} procedure are arbitrarily assigned to agents $a_1, \dots a_k$, creating the final allocation $\mathcal{A}$.
 It is relevant to note that the only agent who cuts\footnote{This is in sharp contrast to the protocol \textsc{Alg2} for local envy-freeness on \textsc{Depth2Tree}, where the root agent cuts the cake into $n$ equal pieces according to her, see Section~\ref{sec:depth2}.} the cake into $n$ equal pieces according to her throughout the execution of \DominationProtocol \ is the rightmost agent $a_n$. 
 
 We now state the main result of this section, and then formally describe our recursive protocol \DominationProtocol\ followed by analysis of its correctness and query complexity.
\nAgentsOnLine*
 
\begin{algorithm}[h]
 \renewcommand{\thealgocf}{}
    \DontPrintSemicolon
     \SetAlgorithmName{$\textsc{Recursive step}$}{ }{ }
    \caption{Protocol \DominationProtocol\ for $n$ agents on a \textsc{Line}}
        \label{alg:n-line}
       \nonl \textbf{Input:} 
        A cake-division instance $\mathcal{I}$ on a \textsc{Line}, a piece $R \subseteq [0,1]$, and an agent $a_k$ for $k \leq n$. \;
       \nonl \textbf{Output:} A $k$-\DominationCondition\ allocation of $R$.\;
        %Initialize  $(D,R)=TRIM(a_{k-1},{X_1,\ldots ,X_{k-1}},k-1)$.\;
        { \bf Initialize:}  Bundles $A_i \leftarrow \emptyset$ for $i \in [n]$, counter $c =0$\;
        \While{$R\neq\emptyset$}
        {
          $\mathcal{B} \gets \textsc{Domination}(R, k+1)$\;
            \For{$j\geq k+2$}
            {
                $A_j\leftarrow A_j\cup B_j$
            }
         \nonl  --------$\texttt{ Selection}$--------\\
         
           $\mathcal{X}\leftarrow \{B_1,\dots,B_{k+1}\}$\;
            Set $(\mathcal{X}^{(k)},\mathcal{X}) \leftarrow$  \textsc{Select}($a_k,\mathcal{X},k$)\;
            
          $A_{k+1}\leftarrow A_{k+1}\cup  \mathcal{X}$\;
         
          %Set $A_r \gets A_r \cup \mathcal{X}$\;
%         $A_r\rightarrow A_r\cup s^r$\;

        \uIf{$\exists \ i \in [k]$ such that $v_{k+1}(A_{k+1})-v_{k+1}(A_i)\le  v_{k+1}(R)$\label{inalg-domination}}
        {
         \nonl 
 --------$\texttt{Trimming}$--------- \\
       Set $R \gets \emptyset$\;
            $(\mathcal{X}^{(k)},R) \gets \textsc{Trim}(a_k,\mathcal{X}^{(k)})$\;
         Let $X_\ell = \argmin_{1\le i\le k}v_{k+1}(X_i)$ and $w = c \mod{k} + 1$\;
             $A_w\leftarrow A_w\cup X_\ell$ and $\mathcal{X}^{(k)}\leftarrow \mathcal{X}^{(k)}\setminus X_\ell$
             %\tcc{Trying to achieve domination on $A_w$ for the agent $a_{k+1}$}
             \label{inalg-cupmax} \;
            For each $i \in [k]\setminus \{w\}$, add one arbitrary piece from $\mathcal{X}^{(k)}$  to $A_i$ \;
            $c \rightarrow c+1$ \;
         }                  
         \Else{
         
                  \nonl
 --------$\texttt{Equaling}$--------- \\
         $(\mathcal{X}^{(k)}, X^*) \gets \textsc{Equal}(a_k,\mathcal{X}^{(k)})$\;
           For each $1\le i\le k$, add one arbitrary piece from $\mathcal{X}^{(k)}$  to the bundle $A_i$ \;
         }
      }
        \Return  The allocation $\{A_i\}_{1\le i \le n}$
        \;
\end{algorithm}

 During the execution of \DominationProtocol,  we write $R^t$ to denote the residue at the beginning of  round $t$ of the while-loop. Let us denote the allocation returned by $\textsc{Domination}(R^t,k+1)$ as $\mathcal{B}^t =\{B^t_1, B^t_2, \dots, B^t_n\}$. Furthermore, we write $X^t_1, X^t_2, \dots, X^t_k$ to denote  $k$ equally-valued pieces according to agent $a_k$ obtained after the \textsc{Trim} procedure in Step 11, and $X^t_{k+1} \in \mathcal{B}^t$ be the piece that is added to agent $a_{k+1}$'s bundle, $A_{k+1}$, in Step $9$ of the protocol.  With these notations at our disposal, we begin with the following claim that establishes the fact that the value of the (current) residue keeps reducing for agent $a_{k+1}$ with every iteration of the while-loop.
 
 \begin{claim}
 Consider any round $t$ of the while-loop during the execution of \DominationProtocol, and define $c_t := \max_{\ell \leq k} \{v_{k+1}(X_{k+1}^t) - v_{k+1} (X_{\ell}^{t})\}$ to be the maximum trimmed value according to agent $k+1$ in the \textsc{Trim} procedure. Then, after $k \log k$ rounds of the while-loop, we obtain
\begin{align*}
 v_{k+1}(R^{t + k\log k}) \leq c_t.
\end{align*}
\label{claim: residue decreasing}
\end{claim}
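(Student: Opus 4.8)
The plan is to track how the value that agent $a_{k+1}$ assigns to the residue shrinks from one round of the while-loop to the next, and to show this shrinkage is geometric with ratio $\tfrac{k}{k+1}$. The engine of the argument is Condition C2 of \kDominationCondition ness, applied to the allocation produced by the recursive call.

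First I would record two structural facts about a single round $s$. Since $\mathcal{B}^s = \textsc{Domination}(R^s,k+1)$ is a $(k+1)$-\DominationCondition\ partition of the current residue $R^s$, Condition C2 (for the index $k+1$) gives $v_{k+1}(B^s_1) = \cdots = v_{k+1}(B^s_{k+1})$; call this common value $\beta_s = v_{k+1}(X^s_{k+1})$, where $X^s_{k+1}$ is the single unselected piece. As these $k+1$ equal pieces are disjoint subsets of $R^s$, we get $(k+1)\beta_s \le v_{k+1}(R^s)$, i.e. $\beta_s \le \tfrac{1}{k+1}\,v_{k+1}(R^s)$. Second, the next residue $R^{s+1}$ is exactly the union of the $k$ trimmings produced by $\textsc{Trim}(a_k,\mathcal{X}^{(k)})$; since each selected piece had $v_{k+1}$-value $\beta_s$ before trimming and becomes $X^s_i$ afterwards, additivity yields
\[
 v_{k+1}(R^{s+1}) \;=\; \sum_{i=1}^{k}\bigl(\beta_s - v_{k+1}(X^s_i)\bigr).
\]

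From these two facts the geometric decay is immediate: every summand is at most $\beta_s \le \tfrac{1}{k+1}v_{k+1}(R^s)$ (a trimming cannot exceed the whole piece), so summing the $k$ terms gives $v_{k+1}(R^{s+1}) \le \tfrac{k}{k+1}\,v_{k+1}(R^s)$. Reading the same displayed identity at $s=t$ and bounding each of the $k$ terms by their maximum $c_t$ also gives $v_{k+1}(R^{t+1}) \le k\,c_t$, equivalently $c_t \ge \tfrac1k v_{k+1}(R^{t+1})$. I would then iterate the contraction from round $t+1$ onward:
\[
 v_{k+1}\!\left(R^{t+k\log k}\right) \;\le\; \Bigl(\tfrac{k}{k+1}\Bigr)^{k\log k - 1}\!v_{k+1}(R^{t+1}) \;\le\; k\,\Bigl(\tfrac{k}{k+1}\Bigr)^{k\log k-1} c_t,
\]
and close the argument with the routine estimate $\bigl(1-\tfrac{1}{k+1}\bigr)^{k\log k-1} \le e^{-(k\log k-1)/(k+1)} \le \tfrac1k$, which makes the prefactor at most $1$ and yields $v_{k+1}(R^{t+k\log k}) \le c_t$.

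The main obstacle, and the one genuinely protocol-specific step, is the first structural fact: correctly invoking Condition C2 of the recursively obtained $(k+1)$-\DominationCondition\ allocation to certify that $a_{k+1}$ values all $k+1$ candidate pieces equally, and hence that each trimmed piece contributes at most a $\tfrac{1}{k+1}$ fraction of the residue into the next round. Everything afterwards—the per-round ratio $\tfrac{k}{k+1}$, the identity for $v_{k+1}(R^{s+1})$, and the final count—is bookkeeping built on the standard estimate $\ln(1+\tfrac1k)\ge \tfrac{1}{k+1}$. The only point requiring mild care is the round-count arithmetic, since the per-round saving of a $\tfrac{1}{k+1}$ factor must accumulate over $\Theta(k\log k)$ rounds to absorb the leading factor $k$ relating $c_t$ to $v_{k+1}(R^{t+1})$; this is exactly where the $k\log k$ (as opposed to merely $k$) rounds are consumed.
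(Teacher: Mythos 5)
Your proposal has the same architecture as the paper's proof: both use the equal-valuedness of the first $k+1$ bundles of the recursive $(k+1)$-\DominationCondition\ allocation (you attribute it to Condition C2, which is in fact the correct citation; the paper's text invokes C3, but the content is identical), both express the next residue exactly as the total trimmed value, both derive $v_{k+1}(R^{t+1}) \le k\,c_t$, and both finish by iterating a geometric contraction over $k\log k$ rounds. The one genuine defect is your closing estimate. By bounding all $k$ summands of $\sum_{i=1}^{k}\bigl(\beta_s - v_{k+1}(X^s_i)\bigr)$ by $\beta_s \le \tfrac{1}{k+1}v_{k+1}(R^s)$, you obtain the per-round ratio $\tfrac{k}{k+1}$, and your final step requires $k\bigl(\tfrac{k}{k+1}\bigr)^{k\log k-1}\le 1$, equivalently $k\log k - 1 \ge (k+1)\ln k$. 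Under the reading $\log=\ln$ (the one under which the paper's own computation $k(1-1/k)^{k\log k}\le 1$ closes), this reads $-1\ge\ln k$, false for every $k$; and the failure is not an artifact of loose constants, since $\ln\tfrac{k+1}{k}\le\tfrac1k$ gives the lower bound
\begin{align*}
k\Bigl(\tfrac{k}{k+1}\Bigr)^{k\ln k-1}\;\ge\; k\,e^{-(k\ln k-1)/k}\;=\;e^{1/k}\;>\;1,
\end{align*}
so the contraction $\tfrac{k}{k+1}$ provably cannot absorb the factor $k$ within $k\log k$ rounds. (With $\log_2$ the arithmetic closes only for $k\ge 4$ and fails at $k=2,3$.)

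The repair sits inside your own displayed identity: the summand for the untrimmed piece --- the one $a_k$ values least, which \textsc{Trim} leaves intact --- is exactly zero, so at most $k-1$ terms are nonzero and $v_{k+1}(R^{s+1})\le (k-1)\beta_s \le \tfrac{k-1}{k+1}\,v_{k+1}(R^s)\le\bigl(1-\tfrac1k\bigr)v_{k+1}(R^s)$. This matches (indeed slightly beats) the paper's ratio, which it obtains as ``at most $k-1$ trimmed bundles, each of value $\beta_t\le\tfrac1k v_{k+1}(R^t)$''; after that, the standard estimate $(1-1/k)^{k\log k}\le e^{-\log k}=1/k$ absorbs the factor $k$ from $v_{k+1}(R^{t+1})\le k\,c_t$ and the claim closes exactly as in the paper. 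So: same decomposition and same two key inequalities as the paper, but a quantitative leak in the last line caused by discarding the zero term you yourself exhibited; with that term dropped, your argument coincides with the paper's.
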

\begin{proof}
Consider any round $t$ of the while-loop of \DominationProtocol. Let us write $\mathcal{B}^t =\{B^t_1, \dots, B^t_n\}$ to denote the $(k+1)$-\DominationCondition\ allocation returned during round~$t$ by $\textsc{Domination}(R^t,k+1)$, where $R^t$ is the residue at the beginning of round $t$. Note that, Condition C3 of $(k+1)$-\DominationCondition ness for $\mathcal{B}^t$ ensures that $a_{k+1}$ values the first $k+1$ bundles in $\mathcal{B}^t$ equally, i.e., $v_{k+1}(B^t_{k+1}) = v_{k+1}(B^t_{\ell})$ for all $\ell \leq k+1$. Since piece $X^t_{k+1}$ that is added to agent $k+1$'s bundle in Step 9 is from one of the first $k+1$ bundles in $\mathcal{B}^t$, we can write 
\begin{align} \label{eq:Bt}
    k\cdot  v_{k+1}(X_{k+1}^t)=  v_{k+1}(\cup_{\ell = 1}^{k} B_{\ell}^t) 
\end{align}
The above equality uses additivity of valuation $v_{k+1}$. Now, the definition of $c_t$ ensures that
\begin{align} \label{eq:ct}
    k\cdot  v_{k+1}(X_{k+1}^t) - \sum_{\ell \leq k} v_{k+1}  (X_{\ell}^{t}) \leq k \cdot c_t
\end{align}
Using Equation (\ref{eq:Bt}) and additivity of $v_{k+1}$, we have
\begin{align*}
    k\cdot  v_{k+1}(X_{k+1}^t) - \sum_{\ell \leq k} v_{k+1}  (X_{\ell}^{t}) &= 
     v_{k+1}(\cup_{\ell = 1}^{k} B_{\ell}^t) - \sum_{\ell \leq k} v_{k+1}  (X_{\ell}^{t})\\
      &=v_{k+1}(\cup_{\ell = 1}^{k} B_{\ell}^t) - v_{k+1}(\cup_{\ell = 1}^{k} X_{\ell}^t) = v_{k+1}(R^{t+1})
\end{align*}
     Here, the last inequality follows from the fact that the residue for the next round comes only from the trimmings. Therefore, combining it with Equation (\ref{eq:ct}), we obtain
     \begin{align} \label{eq:rt+1}
        v_{k+1}(R^{t +1}) \leq  k \cdot c_t
     \end{align}
Next, we will show that agent $a_{k+1}$ values residue $R^{t+1}$ of the next round at most $(1-1/k)$ times that of her value of the current round $R^t$. To see this, observe that agent $a_{k+1}$ values the first $k+1$ bundles in $\mathcal{B}^t$ equally, and out of these $k+1$ bundles, agent $a_k$ trims at most $k-1$ bundles in Step 11 to obtain the residue $R^{t+1}$. Therefore, we can write
\begin{align} \label{eq:bound}
    v_{k+1}(R^{t+1}) &\leq (k-1) v_{k+1}(X^t_{k+1})  \leq (1-1/k) \cdot v_{k+1}(R^t)
\end{align}
For the last inequality, let us assume otherwise, i.e.,  we have $ v_{k+1}(X^t_{k+1}) >(1/k) \cdot  v_{k+1}(R^t)$. We can then write $\sum_{\ell=1}^k v_{k+1}(X^t_{\ell}) = k.v_{k+1}(X^t_{k+1})> v_{k+1}(R^t)$, leading to a contradiction. Using Equations~(\ref{eq:rt+1}) and (\ref{eq:bound}), we can now bound the size of the residue after $k\log k$ rounds of the while-loop,
\begin{align*}
    v_{k+1}(R^{t+ k\log k}) &\leq \left(1 - \frac{1}{k} \right)^{k\log k} \cdot v_{k+1}(R^{t+1}) \\
    &\leq \left(1 - \frac{1}{k}\right)^{k\log k} \cdot k \cdot c_t  \leq c_t 
\end{align*}
 This completes our proof.
\end{proof}

We now establish a useful relation between agent $a_{k+1}$'s value for her own bundle and the bundle of agent $a_j$ for $j \leq k$, for two consecutive rounds of the while-loop in \DominationProtocol.
\begin{claim}
During the execution of \DominationProtocol, the cumulative difference between the value of agent $a_{k+1}$ for her own bundle and for any of the first $k$ bundles increases monotonically with each round of the while-loop i.e., for any round $t$, we have
$$ v_{k+1}(A_{k+1}^{t}) - v_{k+1}(A_j^{t}) \leq v_{k+1}(A_{k+1}^{t+1}) - v_{k+1} (A_j^{t+1}) \quad \text{for all} \ j \leq k$$
\label{claim: monotonically increasing}
\end{claim}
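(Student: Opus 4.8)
The plan is to track, round by round, exactly what each bundle gains and to show that over a single trimming iteration of the while-loop the quantity $v_{k+1}(A_{k+1}) - v_{k+1}(A_j)$ never decreases. Fix a round $t+1$ in which the if-branch is executed (so that trimming, rather than equaling, takes place), and let $\mathcal{B}^{t+1} = \{B^{t+1}_1, \dots, B^{t+1}_n\}$ be the $(k+1)$-\DominationCondition\ allocation returned by $\textsc{Domination}(R^{t+1}, k+1)$ at the start of this round. Throughout, I read $A_i^t$ as the state of $a_i$'s bundle at the end of round $t$, so the net change over round $t+1$ is exactly the cake handed out during that round.

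First I would record the per-round increments. In Step 9 the bundle $A_{k+1}$ gains only the single unselected piece $X^{t+1}_{k+1}$, so $v_{k+1}(A^{t+1}_{k+1}) = v_{k+1}(A^t_{k+1}) + v_{k+1}(X^{t+1}_{k+1})$. In the assignment steps that follow the \textsc{Trim} procedure, each bundle $A_j$ with $j \le k$ receives exactly one of the $k$ equalized pieces $X^{t+1}_1, \dots, X^{t+1}_k$ output by \textsc{Trim}; denote the one handed to $A_j$ by $X^{t+1}_{i(j)}$. Hence $v_{k+1}(A^{t+1}_j) = v_{k+1}(A^t_j) + v_{k+1}(X^{t+1}_{i(j)})$.

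The crux is then to compare these two increments. By Condition C3 of $(k+1)$-\DominationCondition ness of $\mathcal{B}^{t+1}$, agent $a_{k+1}$ values all of the first $k+1$ bundles equally; since $X^{t+1}_{k+1}$ and every piece selected by $a_k$ via \textsc{Select} are among these first $k+1$ bundles, each selected piece has $v_{k+1}$-value exactly $v_{k+1}(X^{t+1}_{k+1})$. Because each $X^{t+1}_i$ is obtained by trimming a selected piece, it is contained in that piece, so non-negativity and additivity of $v_{k+1}$ yield $v_{k+1}(X^{t+1}_i) \le v_{k+1}(X^{t+1}_{k+1})$ for every $i \in [k]$. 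Subtracting the two increment identities, for each $j \le k$ the change in the gap equals $v_{k+1}(X^{t+1}_{k+1}) - v_{k+1}(X^{t+1}_{i(j)}) \ge 0$, which is precisely the asserted monotonicity.

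The one point that deserves care is the scope of the statement: the computation above governs the non-terminating (trimming) iterations, which is exactly the regime in which this claim is paired with Claim~\ref{claim: residue decreasing}---the gap grows while the residue shrinks, so the domination test of Step~10 eventually passes and the loop ends. I do not anticipate a genuine obstacle, since the whole claim distills to the single observation that $A_{k+1}$ always absorbs a full, untrimmed bundle of $a_{k+1}$'s uniform value while every one of the first $k$ bundles absorbs only a trimmed, hence no-larger, piece.
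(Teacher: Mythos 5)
Your proof is correct and takes essentially the same route as the paper's: both track the per-round increments, using the fact that the $(k+1)$-\DominationCondition\ allocation makes $a_{k+1}$ indifferent among the first $k+1$ bundles, so $A_{k+1}$ absorbs an untrimmed piece of this common value while each $A_j$ with $j \le k$ absorbs a trimmed subset of such a piece. Your explicit remark that the computation governs the trimming iterations is a minor clarification the paper leaves implicit (its proof silently references Steps 12, 14, and 15); also note that the indifference you invoke is literally Condition C2 of $(k+1)$-fairness in the paper's definition for \textsc{Line} graphs, though the paper's own proof mislabels it C3 in the same way.
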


\begin{proof}
Consider any round $t$ of the while-loop of \DominationProtocol. Write $\mathcal{B}^t =\{B^t_1, B^t_2, \dots, B^t_n\}$ to denote the allocation returned by $\textsc{Domination}(R^t,k+1)$ in round $t$. Furthermore, we write $\mathcal{A}^t$ and $\mathcal{A}^{t+1}$ to denote the  allocations at the end of rounds $t$ and $t+1$ of the while-loop.
Note that, agent $a_{k+1}$ gets an unspoiled piece from $\mathcal{B}^{t}$ in Step~9, i.e., we have  $X_{k+1}^t = A_{k+1}^{t+1} \setminus A_{k+1}^t $. Also for all $j \leq k$, the pieces $B_j^t$ are trimmed in Step~12 to obtain $X_j^{t}$ that is, $X_j^{t}\subseteq B_j^t$.  Furthermore, since $\mathcal{B}^t$ is $(k+1)$-\DominationCondition, Condition C3 ensures that $v_{k+1}(X_{k+1}^{t}) = v_{k+1}(B_j^t) $ for all $j \leq k$. Therefore, for all  $j \leq k$ we have,   
\begin{align*}
    v_{k+1}(X_{k+1}^t) = v_{k+1}(A_{k+1}^{t+1}) - v_{k+1} (A_{k+1}^t) = v_{k+1}(B_j^t) \geq  v_{k+1}(X_j^t) 
\end{align*}
 Assuming that the piece $X_j^t$ is added to some bundle $A_p \in \mathcal{A}^t$ for $p \leq k$ in Steps 14 and 15, we obtain
\begin{align*}
     v_{k+1}(A_{k+1}^{t+1}) - v_{k+1} (A_{k+1}^{t}) \geq  v_{k+1}(X_j^{t}) =  v_{k+1}(A_p^{t+1}) - v_{k+1} (A_p^{t}).
\end{align*}
This establishes the stated claim.
\end{proof}
%Note that, it creates the bundles $A_{\ell}$ recursively with every round, and there exists some $t(\ell) \leq k+1$ such that
%$X_{\ell}^{t(\ell)} = A_{\ell}^{j+1}  \setminus A_{\ell}^{j} $ for every $\ell \leq k+1$ (see Steps $13$ and $14$). The additivity of $v_{k+1}$ makes it sufficient to show that we have $v_{k+1}(X_{k+1}^{j}) \geq v_{k+1}(X_{\ell}^{j})$ for all $\ell \leq k$. Note that in Step $8$, an unspoiled piece from the allocation returned by $\textsc{Domination}(R^j,k+1)$ is added to the bundle $A_{k+1}$ from the collection of equally valued pieces. Furthermore, note that other pieces do not receive with additional cake in trimming round and hence $X_{\ell}^{j} \leq B_{\ell}^{j}$. We have $$ v_{k+1}(X_{k+1}^{j}) =  v_{k+1}(B_{k+1}^{j}) = v_{k+1}(B_{\ell}^{j}) \geq v_{k+1}(X_{\ell}^{j}). $$     

Next, we show an important lemma that formally establishes the crux of our recursive protocol: given access $(k+1)$-\DominationCondition\ allocations of $R \subseteq [0,1]$, \DominationProtocol\ computes a $k$-\DominationCondition\ allocation in at most $k+k\log k$ rounds of the while-loop.

\begin{lemma}\label{lem:klogk complexity}
Given any cake-division instance with $n$ agents on a \textsc{Line}, consider the execution of the protocol \DominationProtocol \ for any piece $R \subseteq [0,1]$ and any agent $a_k$ for $k \in [n]$. Then, the protocol returns a $k$-\textsc{Fair} allocation  in $k+ k\log k$ rounds of the while-loop. 
\end{lemma}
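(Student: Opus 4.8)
The plan is to prove both halves of the statement---the round bound and the three domination properties---together, treating this lemma as the inductive step of a downward induction on $k$ (from $k=n$ down to $k=1$). The base case $k=n$ is immediate, since $\textsc{Domination}(R,n)$ merely asks $a_n$ to cut $R$ into $n$ equal pieces, which satisfies C1--C3 for index $n$ in a single round. For the step I assume that every recursive call $\textsc{Domination}(R',k+1)$ already returns a $(k+1)$-\DominationCondition\ allocation; thus each allocation $\mathcal{B}^t$ produced inside the while-loop enjoys properties C1--C3 for index $k+1$, which is exactly what Claims~\ref{claim: residue decreasing} and~\ref{claim: monotonically increasing} presuppose. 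Under this hypothesis I must show (i) the while-loop halts within $k+k\log k$ rounds, and (ii) the returned allocation $\mathcal{A}$ is $k$-\DominationCondition.

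For part (i), the crux is that per-bundle domination is \emph{sticky}. Writing $d_i^t := v_{k+1}(A_{k+1}^t)-v_{k+1}(A_i^t)$ for $i\le k$, Claim~\ref{claim: monotonically increasing} says $d_i^t$ is non-decreasing in $t$, while the intermediate bound $v_{k+1}(R^{t+1})\le (1-1/k)\,v_{k+1}(R^t)$ from Claim~\ref{claim: residue decreasing} shows $v_{k+1}(R^t)$ is non-increasing. Hence once $d_i^t>v_{k+1}(R^t)$ holds for a bundle $i$, it keeps holding in every later round. It remains to bound when each bundle first reaches this state. The rotation $w=c\bmod k+1$ guarantees that every bundle $a_i$ with $i\le k$ is the recipient of the $a_{k+1}$-minimal trimmed piece $X_\ell$ at some round $t_i\le k$; at that round its gap increases by exactly the maximal trimmed value, i.e.\ $d_i^{t_i+1}=d_i^{t_i}+c_{t_i}\ge c_{t_i}$. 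Applying Claim~\ref{claim: residue decreasing} with $t=t_i$ then gives $v_{k+1}(R^{t_i+k\log k})\le c_{t_i}\le d_i^{t_i+k\log k}$, so bundle $i$ becomes dominating no later than round $t_i+k\log k\le k+k\log k$. By stickiness all $k$ bundles are dominating at round $k+k\log k$, the \texttt{if}-condition fails, and the loop exits through the \textsc{Equal} branch.

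For part (ii) I verify C1--C3 by additive bookkeeping over rounds. Property C2 is simplest: in each round every bundle $A_1,\dots,A_k$ receives exactly one piece, and within a round these pieces are all equal in $v_k$ (being the output of either $\textsc{Trim}(a_k,\cdot)$ or $\textsc{Equal}(a_k,\cdot)$), so summing gives $v_k(A_\ell)=v_k(A_k)$ for all $\ell\le k$. Property C3 follows from the exit condition together with the final \textsc{Equal} step: when the loop exits, $A_{k+1}$ already contains $X_{k+1}^T$ and $v_{k+1}(A_{k+1})-v_{k+1}(A_i)>v_{k+1}(R^T)$ for all $i\le k$; the \textsc{Equal} step then adds to each $A_i$ a single piece carved out of $\mathcal{X}^{(k)}\subseteq R^T$, whose $v_{k+1}$-value is at most $v_{k+1}(R^T)$, so every gap stays positive and $a_{k+1}$ does not envy any of the first $k$ bundles.

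Property C1 is the most delicate and is where I expect the real work to lie. For the agents strictly to the right ($i\ge k+2$), who only ever collect unspoiled pieces $B_i^t$, I sum C1 of each $\mathcal{B}^t$ over the rounds. The genuinely subtle point is $a_{k+2}$'s non-envy of $a_{k+1}$: since $A_{k+1}=\cup_t X_{k+1}^t$ and the discarded piece $X_{k+1}^t$ may be \emph{any} of $B_1^t,\dots,B_{k+1}^t$, plain C1 of the recursive allocation does not suffice, and I must instead invoke \emph{C3} of $(k+1)$-\DominationCondition ness, which guarantees $v_{k+2}(B_{k+2}^t)\ge v_{k+2}(B_\ell^t)$ for every $\ell\le k+1$. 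For $i=k+1$, non-envy of $a_{k+2}$ follows from C2 of $\mathcal{B}^t$ (all first $k+1$ pieces are $v_{k+1}$-equal) combined with C1 of $\mathcal{B}^t$, while non-envy of $a_k$ is precisely the inequality established for C3 above. For $i=k$, non-envy of $a_{k-1}$ is immediate from C2, and non-envy of $a_{k+1}$ holds because $X_{k+1}^t$ is $a_k$'s least-valued piece among the first $k+1$, so each piece $A_k$ collects is worth at least $v_k(X_{k+1}^t)$ to $a_k$; summing over rounds yields $v_k(A_k)\ge v_k(A_{k+1})$. Orchestrating these boundary interactions---above all the switch from C1 to C3 of the recursive output for the $a_{k+2}$-versus-$a_{k+1}$ comparison---is the main obstacle; everything else reduces to routine summation over rounds.
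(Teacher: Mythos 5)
Your proof is correct and follows essentially the same route as the paper's: a downward induction on $k$ with base case $k=n$, verification of C2, C3 and C1 from the inductive $(k+1)$-\DominationCondition ness of each $\mathcal{B}^t$, and the run-time bound obtained by combining the rotation $w = c \bmod k + 1$ (each bundle receives the $v_{k+1}$-minimal trimmed piece within the first $k$ rounds, giving a gap of at least $c_{t_i}$) with Claims~\ref{claim: residue decreasing} and~\ref{claim: monotonically increasing}. If anything, your handling of $a_{k+2}$'s non-envy of $a_{k+1}$ via Condition C3 of the recursive output is more precise than the paper's, which attributes the comparisons of all agents $a_j$, $j \geq k+2$, to unspoiled pieces and C1 alone, even though the piece $X^t_{k+1}$ added to $A_{k+1}$ may be any of $B^t_1,\dots,B^t_{k+1}$ and so genuinely requires the stronger C3 guarantee you invoke.
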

\begin{proof}
We begin by proving that \DominationProtocol\ indeed returns a $k$-\DominationCondition \ allocation, and then we will establish the desired count on the number of while-loops required to achieve so.\\

\noindent
{\em Correctness:} For a given piece $R \subseteq [0,1]$ and $k \in [n]$, we will prove that the output allocation $\mathcal{A}$ of \DominationProtocol\  is $k$-\DominationCondition. We will prove it via induction on the position of the agent $a_k$ on the \textsc{Line}. Recall that $\textsc{Domination}(R,n)$ asks agent $a_n$ to simply divide $R$ into $n$ equal pieces, and hence the required condition trivially follows for $a_n$. Now, let us assume that the claim holds true for $k+1$, and we will prove it for $k$. We will prove that \DominationProtocol \ is $k$-fair by proving that is satisfies the Condition C2, then C3 and then finally C1 (for the ease of presentation).

\begin{itemize}[leftmargin=*]
   \item { \bf \underline{Condition C2}}:
    Note that, in any round $t$, agent $a_k$ selects her $k$ most preferred pieces from $k+1$ pieces $B^t_1, \dots, B^t_{k+1} \in \mathcal{B}^t$ in Step~8, and trims all of them to have value equal to the lowest valued piece. Hence, in each round of the while-loop, we have $v_k(A_k^t) = v_k(A_{i}^{t})$ for $i \leq k$. The \texttt{Equaling} phase maintains this property, hence establishing Condition C2 of k-\DominationCondition ness.

    \item {\bf \underline{Condition C3}}: Recall that in any round $t$, the first $k$ agents receives a bundle that is a subset of the first $k+1$ bundles of the $(k+1)$-\textsc{Fair} allocation returned by $\textsc{Domination}(R^t,k+1)$. Therefore, we have $v_{k+1}(A^t_{k+1}) \geq v_{k+1}(A^t_j)$ for all $j \leq k$. Finally, at the termination round $T$ of the while-loop, we have   $$v_{k+1} (A^T_{k+1}) - v_{k+1} (A^T_j) \geq v_{k+1}(R^T) \  \text{for all} \ j\leq k.$$ That is, the residue $R^T$ is small enough that it does not induce any envy for $a_{k+1}$ even if it is fully allocated  to any  bundle $A^T_j$ for $j \leq k$. Recall that in the last round, agent $a_k$ performs the \textsc{Equal} procedure on some $k$ pieces of the allocation obtained by $\textsc{Domination}(R^T,k+1)$. Since the whole of the residue $R^T$ induces no envy for agent $a_{k+1}$, even after the \texttt{Equaling} phase, we maintain that $v_{k+1}(A_{k+1}) \geq v_{k+1}(A_j)$ for all $j \leq k$. Hence, Condition C3 is satisfied.
    
    \item { \bf \underline{Condition C1}}: 
    For any round $t$ of the while-loop, by induction hypothesis, we know that the output allocation $\mathcal{B}^t$ of $\textsc{Domination}(R^t,k+1)$ is $(k+1)$-\textsc{Fair}. That is, any agent $a_j$ with $j \geq k+1$ does not envy her neighbors in $\mathcal{B}^t$. Since every agent $a_j$ with $j \geq k+2$ is allocated unspoiled bundles from $\mathcal{B}^t$ in Step~5, we obtain that these agents (i.e. $a_j$ with $j \geq k+2$) do not envy their neighbors in the output allocation of \DominationProtocol. Due to the same reason, $a_{k+1}$ does not envy her neighbor $a_{k+2}$. To see why agent $a_{k+1}$ does not envy her other neighbour, $a_k$, refer to the Condition C3 above. 

    Finally, we focus on agent $a_k$. In any round $t$, she selects her $k$ most preferred pieces from $k+1$ pieces $B^t_1, \dots, B^t_{k+1} \in \mathcal{B}^t$ in Step~8, and the remaining piece is allocated to agent $a_{k+1}$. The \texttt{Trimming} and \texttt{Equaling} phase then ensures that we obtain $v_k(A_k) \geq v_k(A_{k+1})$. The fact that $a_k$ does not envy her neighbor, $a_{k-1}$, follows from Condition C2, proved above. In conclusion, Condition C1 is satisfied.
\end{itemize}

\noindent
{\em Run-time:}
%\begin{claim}
%\red{ \DominationProtocol\  terminates after  at-most $O(n\log(n))$ rounds of while loop.}
%\end{claim}
%\begin{proof}
Consider the first $k$ rounds of the while-loop during the execution of \DominationProtocol. If the algorithm terminates before $k$ rounds, we are done. If not, observe that Step~12 allocates the smallest piece after trimming (i.e. $\arg\min_{1 \leq i \leq k}v_{k+1}(X_i)$) to  different bundles in the first $k$ rounds.  Assume, without loss of generality, bundle $A_{\ell}$ is allocated the smallest trimmed piece in round $\ell$. For all $\ell \leq k$, we can write   % we are done again as from Claim \toref{} we have that the  condition in Step \toref{} is satisfied after $k + O(k\log(k))$ i.e. $O(k\log(k))$ more rounds.  
\begin{align*}
    v_{k+1}(A_{k+1}^{k+ k\log k}) -  v_{k+1}(A_{ \ell }^{k+ k\log k}) & \geq  v_{k+1}(A_{k+1}^{\ell+1}) -  v_{k+1}(A_{\ell}^{\ell+1}) \tag{by Claim \ref{claim: monotonically increasing}}\\
    & \geq c_\ell \tag{by Step 12}\\
    &\geq v_{k+1}(R^{k + k\log k}) \tag{by  Claim \ref{claim: residue decreasing}}
 \end{align*}
Therefore, after at most $k+ k\log k$ rounds, the while-loop terminates and the algorithm enters the \texttt{Equaling} phase to output the final $k$-\DominationCondition \ allocation.
\end{proof}

\paragraph{Proof of Theorem~\ref{thm:line}:}
Let us denote the query complexity of $\textsc{Domination}(R,k)$ by $T_k$ for $k \in [n]$. Note that, each round of the while-loop and the \texttt{Equaling} phase during the execution of the protocol \DominationProtocol\ requires $k$ \texttt{eval} queries and $k-1$ \texttt{cut} queries. Lemma~\ref{lemma:domination} proves that \DominationProtocol\ outputs a $k$-\DominationCondition\ allocation in $k+k \log k$ rounds of the while-loop. We will prove that $T_k\le \sum_{j=k}^n j \prod_{i=k}^j (3i\log i)$ using induction on $k$. 
For the base case $k=n$, we know that $T_n = n \le 3 n^2 \log(n)$. Assuming that the bound holds true for $T_{k+1}$, we begin by writing the recursion formula that relates $T_k$ and $T_{k+1}$,
\begin{align*}
    T_k&=(k+k\log k)T_{k+1}+k^2\log k\\
    &\le 3k\log k\sum_{j=k+1}^n j \prod_{i=k+1}^j (3i\log i)+k^2\log k\\
    &\le \sum_{j=k}^n j \prod_{i=k}^j (3i\log i)
\end{align*}
Now, to bound the run-time of the \DominationProtocol\, we need to bound $T_2$.
With above bound at hand, we get  $T_2\le \sum_{j=2}^n j \prod_{i=2}^j (3i\log i)$. Let us now denote $c_j=j\prod_{i=2}^j (3i\log(i))=j\cdot 3^j\cdot j!\prod_{i=2}^j \log i$. Note that, for all  $j<n$ we have $ 2c_j<c_{j+1}$, and hence, 
$\sum_{j=2}^{n-1}c_j<c_n$. We can therefore write
\begin{align*}
    T_2 &\leq \sum_{j=2}^n j \prod_{i=2}^j (3i\log i) \\
    & \leq \sum_{j=2}^n c_j \tag{by definition of $c_j$}\\
    &\le 2c_n=2n\cdot 3^n \cdot n!(\log n)^n
\end{align*}
 Using Stirling's approximation, we obtain  $T_2=n^{(1+o(1))n}$. This establishes the stated claim and completes the proof.
  \qed

\section{Local Envy-freeness on a \textsc{Tree}}
\label{sec:tree}

In this section, we prove our main result (Theorem~\ref{thm:line}) that establishes an upper bound of $n^{(1+o(1))n}$ on the query complexity for finding locally envy-free allocations among $n$ agents on a \textsc{Tree}.

Without loss of generality, we assume that the underlying tree graph $G$ is rooted at agent $a_n$ and agents ${a_1,a_2,\ldots a_n}$ are indexed according to some (arbitrary) topological sorting. Note that, agent $a_1$ is therefore a leaf in $G$. Similar to Section~\ref{sec:line}, we primarily construct a recursive protocol $\textsc{Domination}([0,1],1)$ for a \textsc{Tree} using the techniques developed for \textsc{Line} graphs. We will generalize our previous protocol \DominationProtocol\ for \textsc{Line} graphs and carefully adapt the key notion of $k$-\DominationCondition\ allocations for \textsc{Tree} graphs. Interestingly, we will see that during this generalization, we are able to maintain an identical query complexity bound and incur no loss whatsoever.

To formally explain the required modifications in the notion of $k$-\DominationCondition ness for \textsc{Tree} graph, we would need to expand our terminology, and hence we begin with the following useful definitions.\\

 \noindent
 \textbf{Terminology:} 
First, note that the topological ordering over agents ensures that any descendant of agent $a_j$ for $ j\in [n]$ must have an index that is smaller than $j$. For any agent $a_j$ for $j \in [n]$, we begin with the following notations.
\begin{itemize}
    \item $D_j :=   \{j\} \cup \{i \in [n]: \text{$a_i$ is a descendant of $a_j$}\} $ is a collection of the index $j$ and the indices of agents who are descendants of $a_j$. In particular, we will write $d_j = |D_j|$ to denote the size of the set $D_j$ for agent $a_j$.
    \item $C_j$ denotes the set of (immediate) children of $a_j$, and we write $p_j$ to denote the index of the parent of agent $a_j$. 
\end{itemize}    
Next, we define the following useful notions for any agent $a_j$ with respect to a fixed index $k \in [n]$.
\begin{itemize}
    \item An agent $a_j$ is said to be \emph{active} if $j >  k$, otherwise she is said to be \emph{inactive}.
     \item $\texttt{Inchild}(k,a_j):=\{i\mid i\le k\text{ and } i\in C_j\}$ is a collection of the indices of inactive children of agent $a_j$ with respect to the index $k$. Since any agent $a_j$ is active for $j>k$, the set $\texttt{Inchild}(k,a_j) = \emptyset$ for all agents $a_j$ with $j>k$.
    \item $\texttt{Inact}(k,a_j):=  \{j\} \cup \{t\in D_i\mid i\in \texttt{Inchild}(k,a_j)\}$ is a collection of the index $j$ and the indices of descendants of inactive children of $a_j$ (with respect to $k$). %\red{be the bundles that agent $a_j$ needs to maintain equal. }
    \item $\Storage(k,a_j):=\{\cup_i B_i \mid i\in \texttt{Inact}(k,a_j)\}$ represents the bundles assigned to agents with indices in $\texttt{Inact}(k,a_j)$ in an allocation $\mathcal{B}=\{B_1, \dots, B_n\}$.
\end{itemize}
 For an allocation $\mathcal{B}$, the collection of storage sets  $\{\Storage(k,a_j)$ for $j \in [n]$\} creates a partition of its bundles.  Note that, $\Storage(k,a_j) = \emptyset$ for all agents $a_j$ with $j>k$ since the corresponding set $\texttt{Inchild}(k,a_j) = \emptyset$. When the value of $k$ is obvious from the context, we will use the phrase "the storage of $a_j$" to mean $\texttt{Storage}(k,a_j)$ set.
\begin{figure}
    \centering
    \includegraphics[scale=0.6]{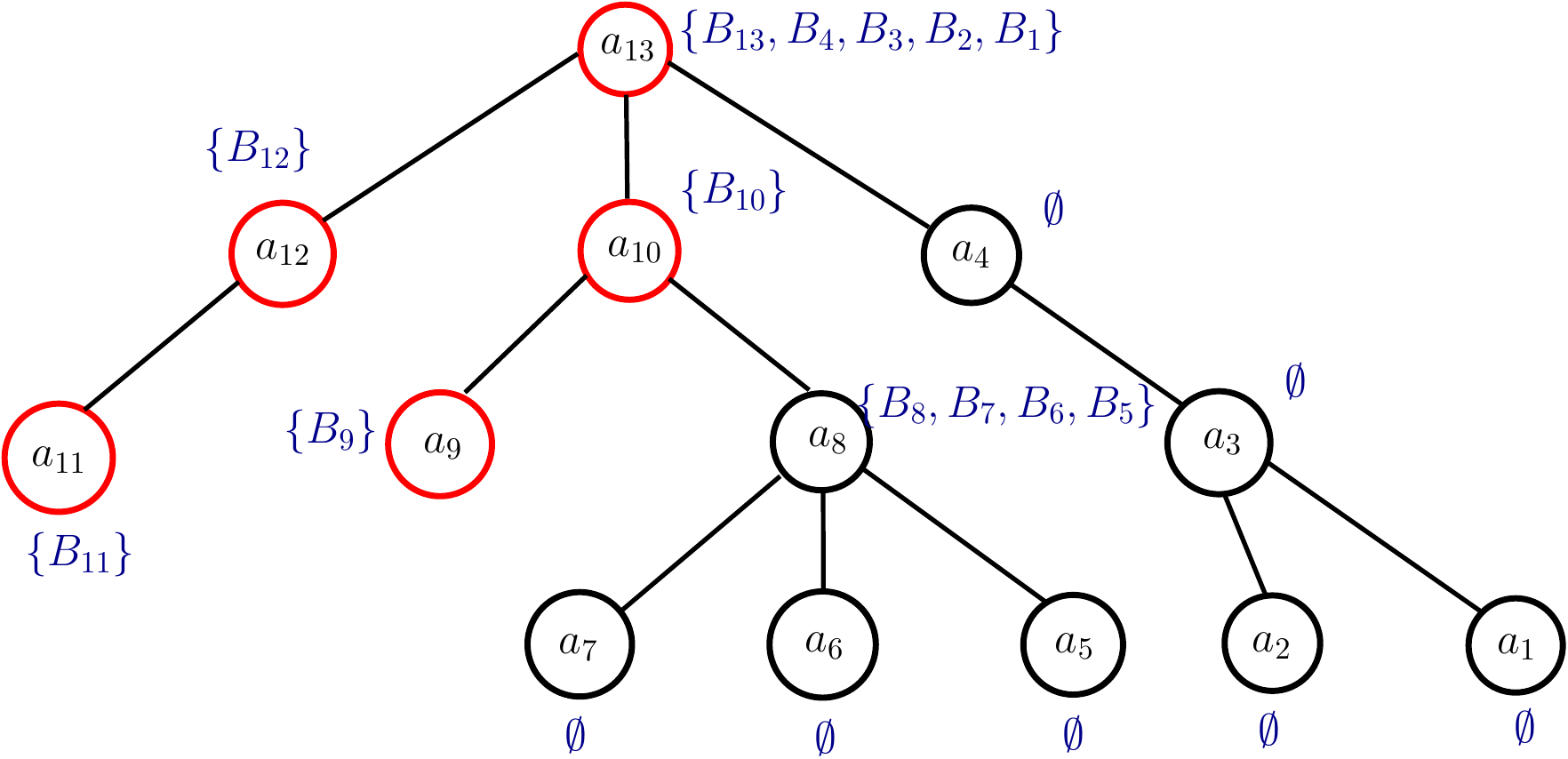}
    \caption{A representative example of graphical constraints with 13 agents on a {\sc Tree}. The identity of a node is written inside the circle. For a fixed index $k=8$, the black nodes represent the set of inactive agents while the red nodes  correspond to the active agents. The blue set written next to an agent  $a_j$ represents the set $\texttt{Storage}(k, a_j)$ for an allocation~$\mathcal{B}$.}
    \label{fig:my_label}
\end{figure}

Equipped with the above notions, we are ready to extend the ideas and techniques developed in Section~\ref{sec:line} for computing locally envy-free allocation among agents on a \textsc{Line} to that on a \textsc{Tree}.  For doing so, we generalize the notion of $k$-\DominationCondition ness for \textsc{Tree} graphs appropriately, which will provably form the crux of our techniques and the protocol \DominationProtocol\ for \textsc{Tree} graphs.

\begin{definition}[$k$-\DominationCondition\ allocation for a \textsc{Tree} graph] \label{defn:kfair_trees} Consider a cake-division instance with $n$ agents on a \textsc{Tree} (with topological ordering on the agents with $a_n$ being the root agent). For a given piece  $R \subseteq  [0,1]$, we say an allocation $\mathcal{B}$ (of $R$) is $k$-\DominationCondition\ if for each agent $a_j$ with $j\ge k$ the following conditions hold.
 \begin{enumerate}
    \item[C1.] Agent $a_j$ does not envy her neighbours.
    \item[C2.] $v_j(B_j)=v_j(B) $ for all $ B \in \texttt{Storage}(k-1,a_j)$,  and 
    \item[C3.] $v_j(B_j)\ge v_j(B)$  for all $B \in \texttt{Storage}(k-1,a_\ell)$ such that $a_\ell$ is an active child of $a_j$.
\end{enumerate}
\end{definition}
%%%SOME EXPLANATION OF THE CONDITIONS ABOVE REQUIRED HERE
Let us now understand how does the above-defined $k$-\DominationCondition\ property for \textsc{Tree} graph is an extension of its counterpart in the {\sc Line} graph (as stated in Definition~\ref{defn:kfair_lines}). To begin with, Condition C1 is identical in both cases. Next, observe that for a {\sc Line} graph, we have
\begin{equation*}
\Storage(k-1,a_j) = \begin{cases}
  \{B_1, \dots, B_k\} &\text{if}\ j=k  \\
 \{B_j\} &\text{if}\ j \geq k+1 \\
 \emptyset &\text{if}\ j<k
\end{cases}
\end{equation*}

Condition C2 requires that every agent with index higher than $k$ values all the bundles in her storage set (with respect to $k-1$) equally. Since each agent $a_k$ for $j \geq k+1$ maintains a single bundle in her storage set in {\sc Line} graph, it is sufficient to compare this condition only for agent $a_k$. And, since $\Storage(k-1,a_k)=\{B_1, \dots, B_k\}$ for \textsc{Line} graph, Condition C2 for \textsc{Tree} graphs when computed for \textsc{Line} graphs coincides with Condition C2 in Definition~\ref{defn:kfair_lines}. 

Finally, Condition C3 says that any agent with index higher than $k$ should (weakly) prefer her own bundle over that of any bundle in the storage set of any of her active child (with respect to index $k-1$). In the case of {\sc Line} graph, $a_k$ has no active children, while for  $j \geq k+1$, agent  $a_{j-1}$ is the only active child of $a_j$. As stated earlier, $\Storage(k-1,a_k)=\{B_1, \dots, B_k\}$ and hence, Condition C3 dictates that $v_{k+1}(B_{k+1}) \geq v_{k+1}(B)$ for $B \in \Storage(k-1,a_k)=\{B_1, \dots, B_k\}$. This again matches exactly with Condition C3 for \textsc{Line} graphs. Therefore, Definition~\ref{defn:kfair_trees} rightfully extends the notion of $k$-\DominationCondition ness for \textsc{Line} graphs (as stated in Definition~\ref{defn:kfair_lines}) to \textsc{Tree} graphs.

We now provide an example to illustrate the conditions for $k$-\DominationCondition ness in \textsc{Trees} for better understanding.

\begin{example}
Consider the example of an underlying social graph of a \textsc{Tree} with $13$ agents given in Figure~\ref{fig:my_label}. For a fixed index $k=8$, a $k$-\DominationCondition \ allocation $\mathcal{B} = \{B_1, \dots, B_{13}\}$ must satisfy the following conditions. 
\begin{itemize}
    \item Condition C1: Agents $a_9, a_{10}, \dots, a_{13}$ do not envy their neighbors. 
    \item Condition C2: Note that, the storage sets of agents are as following
    \begin{equation*}
\Storage(7,a_j) = \begin{cases}
  \{B_5, \dots, B_8\}  &\text{if}\ j=k  \\
 \{B_j\} &\text{if}\ j \geq k+1 \\
 \emptyset &\text{if}\ j<k
\end{cases}
\end{equation*}
    Hence, we must have
     $v_8(B_8) = v_8(B)$ for all $B \in \{B_5,B_6, B_7, B_8 \}$.
    \item Condition C3: Agent $a_{10}$ has two active children (with respect to index $7$): $a_9$ and $a_8$ with storage sets $\{B_9\}$ and $\{B_5, \dots, B_8\}$ respectively. Hence, this condition dictates that we have $v_{10}(B_{10}) \geq v_{10}(B)$ for $B \in \{B_5, \dots, B_9\}$.
    We can similarly state the guarantees for the remaining agents $a_j$ for $j \geq k$.
\end{itemize}
\end{example}

We finally state and prove our main result that establishes an upper bound of $n^{(1+o(1))n}$ on the query complexity for local envy-freeness on \textsc{Tree} graphs. Interestingly, it enjoys similar algorithmic guarantees as that of its counterpart in the \textsc{Line} graph (see Theorem~\ref{thm:line}). Designing a discrete and bounded protocol for local envy-freeness on trees is listed as an open problem in \cite{AbebeKP16}. We answer this open problem by proving the following result.
\nAgentsOnTree*
We begin with detailing our recursive protocol $\textsc{Domination([0,1],1)}$ for \textsc{Tree}. We describe the necessary modifications to its counterpart for \textsc{Line} with the new terminology that generalizes its key ingredients. As before, for any piece $R \subseteq [0,1]$ of the cake and any agent $a_k$ for $k \in [n]$, we construct a recursive protocol \DominationProtocol\ that repeatedly invokes $\textsc{Domination}(R,k+1)$ to compute a $k$-\DominationCondition\ allocation of $R$. We call $R$ as residue that keeps decreasing throughout the execution. The protocol consists of a while-loop that terminates when the parent agent $a_{p_k}$ of agent $a_k$ achieves a certain \emph{domination condition}\footnote{Recall that \DominationProtocol\ for \textsc{Line} graph requires agent $a_{k+1}$ to achieve the domination condition. If we consider \textsc{Line} as a tree rooted at $a_n$ then $a_{k+1}$ is indeed the parent of agent $a_k$.} on her value of the (current) residue (as stated in Step 9).

Recall that the set $D_k$ consists of the indices of the descendants of agent $a_k$ and her own index, and we write $d_k = |D_k|$ to denote its cardinality. As before, any round $t$ of the while-loop begins with invoking $\textsc{Domination}(R^t,k+1)$ where $R^t$ is the residue at the beginning of round $t$. We denote $\mathcal{B}^t$ to denote the $(k+1)$-\textsc{Fair} allocation returned by $\textsc{Domination}(R^t,k+1)$. Now, agent $a_k$ selects her $d_k$ favorite pieces from her $\Storage(k,a_{p_k})$ set in $\mathcal{B}^t$ (see Steps 4 and 5). We denote these chosen $d_k$ bundles by set $S^t$ and re-index its bundles so that they bear the indices in $D_k$. If the domination condition for agent $a_{p_k}$ is not yet achieved, agent $a_k$ performs a \textsc{Trim} procedure (Step 11), otherwise it performs an \textsc{Equal} procedure (Step 17).

We will prove that \DominationProtocol\ outputs a $k$-\DominationCondition\ allocation in $d_k+d_k\log d_k$ iterations of the while-loop (Lemma~\ref{lemma:treedomination}). Since the social graph is a \textsc{Tree}, we know that $d_k+d_k\log d_k \leq k \log k$ for any $k \in [n]$. And hence, we obtain an overall query complexity bound of $n^{(1+o(1))}$ similar to that of the case of \textsc{Line}.

\begin{algorithm}[ht!]
 \renewcommand{\thealgocf}{}
    \DontPrintSemicolon
     \SetAlgorithmName{$\textsc{Recursion step}$}{ }{ }
    \caption{\DominationProtocol\ for \textsc{Trees}}
        \label{alg:n-tree}
       \nonl \textbf{Input:} %A cake-division instance $\mathcal{I}$ on  \textsc{Tree} $(n, U,\mathcal{X}_{k+1})$. Where $U$ is the cake, and $\mathcal{X}_{k+1}$ is a division into $n$ parts which satisfy condition $\mathcal{C}_{k+1}$\;
        A cake-division instance $\mathcal{I}$ on a \textsc{Tree} rooted at agent $a_n$ and where agents are indexed according to a topological order, a piece $R \subseteq [0,1]$, and an agent $a_k$ for $k \leq n$.\;
       \nonl \textbf{Output:} A $k$-\DominationCondition\ allocation of $R$. \;
        %Initialize  $(D,R)=TRIM(a_{k-1},{X_1,\ldots ,X_{k-1}},k-1)$.\;
        Initialize  $R \leftarrow U$, bundles $A_i \leftarrow \emptyset$, counter $c=0$\;
        \While{$R\neq\emptyset$}
        {
          $\mathcal{B} \gets  \textsc{Domination}(R,k+1)$\;
%            Let $a_f$ be the parent of $a_k$\;

         \nonl  --------$\texttt{ Selection}$--------\\
           $\mathcal{X}\leftarrow \texttt{Storage}(k,a_{p_k})$\;
        %   $\mathcal{X}\leftarrow \{B_i \vert i \in \Inact(k,a_{p_k}) \}$\;
            Set $(\mathcal{X}^{(k)},\mathcal{X}) \leftarrow$  \textsc{Select}($a_k,\mathcal{X},|D_k|$)\;
            %\tcc{For simplicity, we index the pieces in the set $\mathcal{X}^{(k)}$ by $D(k)$.}
            
             Let us re-index the bundles in $\mathcal{X}^{(k)}$ so that they bear the indices in $D_k$\;
            \tcc{We can do this because Step $3$ ensures that agent $a_{p_k}$ is indifferent towards the bundles in $\Inact(k,a_{p_k})$}
            \For{$j \notin D(k) $}
            {
                %\uIf{$j \notin Inact(k,a_f(k))$}
                %{
                $A_j\leftarrow A_j\cup B_j$
                %}                 
            }
           % For each $i \in Inact(k,a_f)\setminus D(k)$, add one arbitrary piece from $\mathcal{X}$ to $A_i$.
          %Set $A_r \gets A_r \cup \mathcal{X}$\;
%         $A_r\rightarrow A_r\cup s^r$\;

        \uIf{$\exists \ i\in D_k$ such that  $v_{p_k}(A_{p_k})-v_{p_k}(A_i) \le  v_{p_k}(R)$\label{inalg-domination}}
        {
          \nonl 
 --------$\texttt{Trimming}$--------- \\
       Set $R \gets \emptyset$\;
        
      %   Let $X_k = \argmin_{X \in \mathcal{X}^{(k)}}v_{k}(X)$\tcc{This piece won't be trimmed}
            $(\mathcal{X}^{(k)},R) \gets \textsc{Trim}(a_k,\mathcal{X}^{(k)})$\;
           % $\mathcal{Y}^{(i)} \gets \mathcal{X}^{(i)}$\;
         %   $A_k\leftarrow A_k \cup X_k$ and $\mathcal{X}^{(k)}\leftarrow \mathcal{X}^{(k)}\setminus X_k$\; \label{line:ai0}
            Let  $X_t = \argmin_{i\in D(k) }v_{p_k}(X_i)$ and $w = c \mod |D_k| +1 $\; \label{inalg-argmin}
            %difference in subscripte (one is elements of D_k, the other is from the set [|D_k|] is ugly)
             $A_w\leftarrow A_w\cup X_t$ and $\mathcal{X}^{(k)}\leftarrow \mathcal{X}^{(k)}\setminus X_t$
             \tcc{Trying to achieve domination on $A_w$ for the agent $a_{p_k}$}\label{inalg-cupmax}
            For each $i\in D_k\setminus\{w\}$, add one arbitrary piece from $\mathcal{X}^{(k)}$  to $A_i$ \;
            $c \rightarrow c+1$ \;
         }                  
         \Else{
         
                  \nonl
 --------$\texttt{Equaling}$--------- \\
             $(\mathcal{X}^{(k)}, X^*) \gets \textsc{Equal}(a_k,\mathcal{X}^{(k)})$\;
           % Let $X_{j} \in \mathcal{X}^{(k)}$ be the piece such that $X_{j} \subseteq X^*$
           % \tcc{There is only one piece satisfying this condition.}\label{inalg-asubpiece}
          %  $A_{k} \gets A_{k}\cup X_{j}$\tcc{This ensures that the agent $a_{k+1}$ will not envy the bundle $A_{k}$}\label{inalg-unionsub}
           For each $i\in D_k$, add one arbitrary piece from $\mathcal{X}^{(k)}$  to $A_i$ \;
        
         }

      }
      %\tcc{\red{$\Inact(k-1,a_{p_k})\leftarrow \Inact(k,a_{p_k})\setminus D_k$}}
        \Return  The allocation $\{A_i\}_{1\le i \le n}$\;
\end{algorithm} 
%\yuka{Give remark that the bundle allocations here are temporary for $A_{k}$ until all its children become active hence $Inact(i,a_k)=\{k\}$}

We now state Claim~\ref{claim:tree_residue} that establish the fact that the value of the residue for the parent agent $a_{p_k}$ keeps on decreasing with each iteration of the while-loop in \DominationProtocol and Claim~\ref{claim:tree_difference} that proves a useful relation between the values (according to agent $a_{p_k}$) of $a_{p_k}$'s bundle and any bundle of agents in the set $D_k$ in two consecutive rounds of the while-loop in \DominationProtocol. The proofs of these two claims follow from directly adapting the corresponding proofs of  Claims~\ref{claim: residue decreasing} and \ref{claim: monotonically increasing}, and hence we omit them here.

\begin{claim} \label{claim:tree_residue}
 Consider any round $t$ of the while-loop during the execution of \DominationProtocol, and define $c_t := \max_{\ell \leq d_k} \{v_{p_k}(X_{k+1}^j) - v_{p_k} (X_{\ell}^{j})\}$ to be the maximum trimmed value according to parent agent $a_{p_k}$ in the \textsc{Trim} procedure. Then, after $d_k \log d_k$ rounds of the while-loop, we obtain
$$ v_{p_k}(R^{t + d_k \log d_k}) \leq c_t$$
\end{claim}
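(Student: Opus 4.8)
The plan is to mirror the argument of Claim~\ref{claim: residue decreasing} line by line, performing the two substitutions $a_{k+1}\mapsto a_{p_k}$ and $k\mapsto d_k$, and then to verify that every structural input used in the \textsc{Line} argument has a faithful counterpart in the \textsc{Tree} protocol. Fix a round $t$ and let $\mathcal{B}^t=\{B^t_1,\dots,B^t_n\}$ be the $(k+1)$-\DominationCondition\ allocation returned by $\textsc{Domination}(R^t,k+1)$. The single fact driving everything is Condition C2 of $(k+1)$-\DominationCondition ness applied to the parent $a_{p_k}$ (legitimate since $p_k\ge k+1$): agent $a_{p_k}$ values every bundle of $\texttt{Storage}(k,a_{p_k})$ equally. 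Since in the \textsc{Select} step (with $m=|D_k|$) agent $a_k$ picks the $d_k$ pieces $\{B^t_i\}_{i\in D_k}$ out of exactly this storage set, all of them carry a common value $v:=v_{p_k}(B^t_{p_k})$ for $a_{p_k}$; this $v$ plays the role that $v_{k+1}(X^t_{k+1})$ had on the \textsc{Line} (it is what the claim denotes $v_{p_k}(X^t_{k+1})$).

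First I would record the two inequalities exactly as before. Writing $X^t_i\subseteq B^t_i$ for the trimmed pieces produced by the \textsc{Trim} step and using additivity of $v_{p_k}$, the identity $d_k\,v=v_{p_k}(\bigcup_{i\in D_k}B^t_i)$ together with the definition of $c_t$ yields
\begin{align*}
 v_{p_k}(R^{t+1})=v_{p_k}\Bigl(\bigcup_{i\in D_k}B^t_i\Bigr)-v_{p_k}\Bigl(\bigcup_{i\in D_k}X^t_i\Bigr)\le d_k\,c_t,
\end{align*}
where the first equality holds because the residue for the next round consists solely of the trimmings of the $d_k$ selected pieces (all remaining bundles of $\mathcal{B}^t$ are passed on unspoiled, namely the loop distributing the $B_j$ with $j\notin D_k$ and the storage bundles $a_k$ did not select). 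Second, since \textsc{Trim} leaves the $v_k$-smallest selected piece untouched, at most $d_k-1$ of the $d_k$ equally valued pieces are cut, so $v_{p_k}(R^{t+1})\le(d_k-1)\,v\le(1-1/d_k)\,v_{p_k}(R^t)$; the last step uses $v\le v_{p_k}(R^t)/d_k$, which follows because the $d_k$ selected bundles are $d_k$ of the bundles partitioning $R^t$ and each is worth $v$ to $a_{p_k}$.

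Combining the contraction factor $(1-1/d_k)$ with the bound $v_{p_k}(R^{t+1})\le d_k\,c_t$ and iterating gives
\begin{align*}
 v_{p_k}(R^{t+d_k\log d_k})\le\Bigl(1-\tfrac{1}{d_k}\Bigr)^{d_k\log d_k}d_k\,c_t\le\tfrac{1}{d_k}\cdot d_k\,c_t=c_t,
\end{align*}
using $(1-1/d_k)^{d_k}\le e^{-1}$ so that $(1-1/d_k)^{d_k\log d_k}\le 1/d_k$. I expect the only point needing genuine care---and hence the main obstacle---to be the bookkeeping around the storage sets rather than any estimate: one must unwind the $\texttt{Storage}$/$\texttt{Inact}$/$\texttt{Inchild}$ definitions to confirm (i) that the pieces $a_k$ selects and trims are precisely the $d_k$ bundles indexed by $D_k$, (ii) that these are exactly the bundles $a_{p_k}$ values equally, so the clean identity $d_k\,v=v_{p_k}(\bigcup_{i\in D_k}B^t_i)$ is valid, and (iii) that every other bundle of $\mathcal{B}^t$ contributes nothing to $R^{t+1}$. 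Once these three facts are secured, the numerical estimates are identical to the \textsc{Line} case, which is why the statement can be obtained by direct adaptation.
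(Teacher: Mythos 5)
Your proof is correct and is exactly the adaptation the paper intends: the paper omits the proof of Claim~\ref{claim:tree_residue}, stating only that it follows by directly adapting Claim~\ref{claim: residue decreasing}, and your argument carries out precisely that substitution ($a_{k+1}\mapsto a_{p_k}$, $k\mapsto d_k$), including the three structural checks that the $d_k$ selected pieces come from $\Storage(k,a_{p_k})$, are equally valued by $a_{p_k}$, and that only their trimmings feed $R^{t+1}$. A minor point in your favor: you invoke Condition C2 of Definition~\ref{defn:kfair_trees} for the indifference of $a_{p_k}$, which is the accurate citation, whereas the paper's \textsc{Line} proof attributes the analogous fact to C3.
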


\begin{claim} \label{claim:tree_difference}
During the execution of \DominationProtocol, the cumulative difference between the value of parent agent $a_{p_k}$ for her own bundle and for any bundle corresponding to agents in $D_k$, increases monotonically with each round of the while-loop i.e., for any round $t$, we have
$$ v_{p_k}(A_{k+1}^{t}) - v_{p_k}(A_{\ell}^{t}) \leq v_{p_k}(A_{k+1}^{t+1}) - v_{p_k} (A_{\ell}^{t+1}) \quad \text{ for all } \ell \leq d_k$$
\end{claim}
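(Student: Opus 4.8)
The plan is to reproduce the proof of Claim~\ref{claim: monotonically increasing} for the \textsc{Line} essentially verbatim, under the dictionary that sends the parent index $k+1$ to $p_k$, the ``first $k$ bundles'' to the subtree $D_k$, and the set $\{B_1,\dots,B_{k+1}\}$ to the storage set $\texttt{Storage}(k,a_{p_k})$ on which $a_k$ operates. (The bundle written $A_{k+1}$ in the statement is the parent's own bundle $A_{p_k}$, a label inherited from the \textsc{Line} notation, and ``$\ell\le d_k$'' means $\ell\in D_k$ after the re-indexing.) Fix a round $t$ of the while-loop, let $\mathcal{B}^t=\{B_1^t,\dots,B_n^t\}$ be the $(k+1)$-\DominationCondition\ allocation returned by $\textsc{Domination}(R^t,k+1)$, and track separately the piece that augments $A_{p_k}$ and the pieces that augment the descendant bundles $A_\ell$, $\ell\in D_k$, between rounds $t$ and $t+1$.

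The single structural fact driving the argument is Condition~C2 of $(k+1)$-\DominationCondition ness applied to the active agent $a_{p_k}$ (here $p_k>k$): it states precisely that $a_{p_k}$ values every bundle of $\texttt{Storage}(k,a_{p_k})$ equally in $\mathcal{B}^t$, and I write $c$ for this common value. In round $t$ the parent gains the single storage bundle $B_{p_k}^t$ through the assignment $A_j\gets A_j\cup B_j$ for $j\notin D_k$ (valid since $p_k\notin D_k$, and the re-indexing of the selected bundles is harmless precisely because $a_{p_k}$ is indifferent across her storage), so $v_{p_k}(A_{p_k}^{t+1})-v_{p_k}(A_{p_k}^t)=c$. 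Each descendant bundle $A_\ell$, $\ell\in D_k$, instead gains a piece $X_\ell^t$ obtained by trimming one of the selected storage bundles (Step~11), so $X_\ell^t\subseteq B$ for some $B\in\texttt{Storage}(k,a_{p_k})$ and $v_{p_k}(X_\ell^t)\le v_{p_k}(B)=c$. Combining these for every $\ell\in D_k$,
\begin{align*}
 v_{p_k}(A_{p_k}^{t+1})-v_{p_k}(A_{p_k}^{t})=c\ge v_{p_k}(X_\ell^t)=v_{p_k}(A_\ell^{t+1})-v_{p_k}(A_\ell^{t}),
\end{align*}
and rearranging gives the asserted inequality $v_{p_k}(A_{p_k}^{t})-v_{p_k}(A_\ell^{t})\le v_{p_k}(A_{p_k}^{t+1})-v_{p_k}(A_\ell^{t+1})$.

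The part I expect to require the most care is the bookkeeping rather than the inequality itself: I must confirm that the bundles $a_k$ selects and trims are exactly those of $\texttt{Storage}(k,a_{p_k})$, that the parent's per-round gain is a full untrimmed storage bundle (value $c$) whereas each $\ell\in D_k$ receives only a trimmed sub-piece (value $\le c$), and that the re-indexing of the selected bundles onto $D_k$ preserves the valuations used above. All three reduce to the same indifference statement, Condition~C2 of the inductively guaranteed $(k+1)$-\DominationCondition\ allocation $\mathcal{B}^t$, so I would isolate that as the opening line and quote it wherever needed, exactly as in the \textsc{Line} argument. Finally, since the claim compares two consecutive \emph{trimming} rounds, the terminal \textsc{Equal} step (Step~17) plays no role; the claim will be used, together with Claim~\ref{claim:tree_residue}, only in the run-time half of the tree analogue of Lemma~\ref{lem:klogk complexity}.
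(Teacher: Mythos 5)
Your proposal is correct and is exactly the paper's intended argument: the paper omits this proof, stating that it follows by directly adapting the proof of Claim~\ref{claim: monotonically increasing} for the \textsc{Line}, and your dictionary ($k+1\mapsto p_k$, first $k$ bundles $\mapsto D_k$, $\{B_1,\dots,B_{k+1}\}\mapsto\Storage(k,a_{p_k})$) together with the indifference of $a_{p_k}$ across her storage is precisely that adaptation, with the right bookkeeping ($p_k\notin D_k$, harmless re-indexing, \textsc{Equal} step irrelevant). If anything, your citation of Condition~C2 of $(k+1)$-\DominationCondition ness is more accurate than the line proof's own reference to C3 for the same indifference fact.
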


Next, we state and prove the key lemma of this section that proves that our generalization of the recursive protocol \DominationProtocol\ developed in Section~\ref{sec:line} for \textsc{Line} to \textsc{Tree} is query-efficient. In other words, we preserve the corresponding query complexity bounds. Note that, Lemma~\ref{lemma:treedomination} helps us to directly generalize the proof of Theorem~\ref{thm:line} for establishing Theorem~\ref{thm:tree}. Hence, all that is left is to prove Lemma~\ref{lemma:treedomination}.
\begin{lemma} \label{lemma:treedomination}
Given any cake-division instance with $n$ agents on a \textsc{Tree}, consider the execution of the protocol \DominationProtocol \ for any piece $R \subseteq [0,1]$ and any agent $a_k$ for $k \in [n]$. Then, the protocol returns a $k$-\textsc{Fair} allocation  in $d_k+ d_k \log d_k \le k+k\log k$  rounds of the while loop. 
\end{lemma}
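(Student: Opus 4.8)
The plan is to follow the two-part structure of the proof of Lemma~\ref{lem:klogk complexity} for \textsc{Line} graphs, translating each ingredient through the natural dictionary: the ``first $k$ bundles'' become the collection $\Storage(k,a_{p_k})$, the distinguished agent $a_{k+1}$ becomes the parent $a_{p_k}$, and the index range $[k]$ becomes the descendant set $D_k$. I split the argument into a correctness part (the returned allocation is $k$-\DominationCondition\ in the sense of Definition~\ref{defn:kfair_trees}) and a run-time part (the while-loop terminates within $d_k + d_k\log d_k$ rounds).

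For correctness I induct downward on $k$ along the topological order. The base case $k=n$ is immediate, since $\textsc{Domination}(R,n)$ merely asks $a_n$ to cut $R$ into $n$ equal pieces. For the inductive step I first record the structural fact that lowering the level from $k$ to $k-1$ reclassifies exactly one agent, namely $a_k$, from inactive to active; since $\texttt{Inchild}(\cdot,a_j)$ changes only for $j=p_k$, this yields $\Storage(k-1,a_j)=\Storage(k,a_j)$ for every active $a_j$ with $j>k,\ j\neq p_k$, while $\Storage(k-1,a_k)$ is exactly the collection of the $d_k$ bundles handed to $D_k$ and $\Storage(k-1,a_{p_k})$ equals $\Storage(k,a_{p_k})$ with those $d_k$ bundles removed. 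With this dictionary the three conditions are checked as in the \textsc{Line} case. Condition~C2 holds for $a_k$ because the \textsc{Trim}/\textsc{Equal} step in each round equalizes, in $a_k$'s valuation, the $d_k$ bundles she distributes across $D_k$; for active $a_j$ with $j>k$ it is inherited round-by-round from the $(k+1)$-\DominationCondition\ allocation $\mathcal{B}^t$, since every agent outside $D_k$ receives an unspoiled $\mathcal{B}^t$-bundle and additivity aggregates the equalities over rounds. Condition~C3 for $a_{p_k}$ is precisely the termination/domination test of Step~\ref{inalg-domination}, which guarantees the final residue is too small to induce envy for $a_{p_k}$ even if wholly assigned to a $D_k$-bundle; for the other active agents C3 is inherited from $\mathcal{B}^t$. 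Condition~C1 follows because active agents and their neighbours outside $D_k$ keep unspoiled bundles (so non-envy aggregates from $\mathcal{B}^t$), $a_{p_k}$'s non-envy toward $a_k$ is exactly C3, and $a_k$'s non-envy toward $a_{p_k}$ and toward her own children follows from the selection-plus-equalization step (her equalized bundle is valued at least the common value of the pieces she selected, which dominates any left-out piece $a_{p_k}$ receives).

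For the run-time I reuse Claims~\ref{claim:tree_residue} and \ref{claim:tree_difference} directly: the parent's value of the residue contracts by a factor $(1-1/d_k)$ per round and hence falls below the maximal trimmed value after $d_k\log d_k$ rounds, while the advantage $v_{p_k}(A_{p_k})-v_{p_k}(A_i)$ grows monotonically for each $i\in D_k$. Cycling the smallest trimmed piece through a distinct bundle of $D_k$ in the first $d_k$ rounds (via $w = c \bmod |D_k| + 1$) forces the domination test to hold simultaneously for every $i\in D_k$ after $d_k + d_k\log d_k$ rounds, at which point the loop exits into the \texttt{Equaling} phase. The inequality $d_k + d_k\log d_k \le k + k\log k$ is immediate from $d_k = |D_k| \le k$, which holds because every descendant of $a_k$ has strictly smaller index under the topological order.

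The main obstacle is the bookkeeping of the storage sets rather than any new inequality: I must argue that $a_k$'s freedom to pick her $d_k$ favourites out of the full collection $\Storage(k,a_{p_k})$ (which may span several sibling subtrees and also contains $B_{p_k}$) and then re-index them onto $D_k$ does not disturb the guarantees of any active agent. The fact that unlocks this is Condition~C2 of the $(k+1)$-\DominationCondition\ property applied to $a_{p_k}$: it makes every bundle of $\Storage(k,a_{p_k})$ equivalued for $a_{p_k}$, so the bundles are freely interchangeable and the left-out pieces can be relabelled onto the indices of $\texttt{Inact}(k,a_{p_k})\setminus D_k$ without changing any valuation that matters. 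Consequently the only conditions that genuinely need re-establishing at level $k$ are those attached to $a_k$ and to $a_{p_k}$, exactly as in the \textsc{Line} analysis.
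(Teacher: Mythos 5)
Your proposal matches the paper's proof essentially step for step: the same downward induction along the topological order with base case $k=n$, the same verification of C2, C3, C1 via the storage-set dictionary (your structural fact about reclassifying only $a_k$ is the paper's Observation~\ref{obs: inchild relation}, and your justification of the re-indexing via $a_{p_k}$'s indifference from C2 is exactly the remark attached to Step~6 of the protocol), and the same runtime argument via Claims~\ref{claim:tree_residue} and \ref{claim:tree_difference} with $k$ replaced by $d_k$. The only difference is one of exposition: you spell out the cycling argument ($w = c \bmod |D_k| + 1$) and the storage-set equalities explicitly, where the paper compresses the runtime part to a one-line reference to Lemma~\ref{lem:klogk complexity}.
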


%Given any cake-division instance with $n$ agents on a topologically sorted \textsc{Tree}, and a blackbox for the algorithm \DominationProtocol\  for a piece $R \subseteq [0,1]$ and an agent $a_k$ for $k \in [n]$, the protocol returns an allocation satisfying the "domination condition for agent $a_k$ (tree version)" in $|D_k|+ |D_k|\log(|D_k|)\le k+k\log(k)$ rounds of the while loop. 

\begin{proof} 
Given a tree over $n$ agents with topological ordering, and rooted at $a_n$, we begin by proving that \DominationProtocol\ indeed returns a $k$-\DominationCondition \ allocation, and then we will establish the desired count on the number of while-loops required to achieve so.\\

\noindent
{\em Correctness}: Let us fix some piece $R \subseteq [0,1]$. We will prove that, for any $k \in [n]$, the output allocation, $\mathcal{A}$, of the \DominationProtocol\ is $k$-\DominationCondition \  via induction on the position of the agent $a_k$. Recall that $\textsc{Domination}(R,n)$ asks agent $a_n$ to simply divide $R$ into $n$ equal pieces, and hence the required condition trivially follows for $a_n$. Now, let us assume that the claim holds true for $k+1$, and we will prove it for $k$. Consider any round $t$ of the while-loop during the execution of \DominationProtocol. Let us denote the allocation returned by $\textsc{Domination}(R^t,k+1)$ in this round by $\mathcal{B}^t$. We will prove that \DominationProtocol \ is $k$-\textsc{Fair} by proving that is satisfies all three conditions.

\begin{itemize}[leftmargin=*]
    \item { \bf \underline{Condition C1}:} By induction hypothesis, we know that $\mathcal{B}^{t}$ is  $(k+1)$-\DominationCondition. That is, any agent $a_j$ with $j \geq k+1$ does not envy her neighbors in $\mathcal{B}^t$. Since we have indexed agents according to topological ordering, it follows that any agent $a_j$ with $j \geq k+1$ is allocated an unspoiled bundle from $\mathcal{B}^t$ in Steps 7-8, and hence she does not envy her neighbor who is not in the set $D_k$. We now show that they don't envy their neighbors from $D_k$ either. We do it by proving the following two facts:
    
    \begin{itemize}[leftmargin=*]
            \item [-] \textit{Any agent $a_m$ such that $m \in D_k\setminus \{k\}$ is not a neighbor of $a_j$:}  First note that if  $a_m \in D_k$ and $a_m \neq a_k$, then topological ordering dictates that $m<k$ and $p_m \leq k$.  If $a_m$ is a neighbor of $a_j$ then $a_j$ has to be the parent of $a_m$ because $ j > k > m$.  This is not possible since $p_m \leq k$. 
        
        \item [-] \textit{Agent $a_j$ does not envy agent $a_k$:} The indexing on agents implies that the only agent $a_j$ with $j \ge k+1$ who is a neighbor of agent $a_k$ is $a_{p_k}$. 
        % From  the fact that $a_\ell \notin D_k $ and that the agents are indexed according to topological sorting,  
        % we have that  $a_\ell$ is a parent of $a_k$. For otherwise, if $a_\ell$ is  a child of $a_{k}$, then by the topological sorting  we have that $\ell < k$. This contradicts the assumption that $\ell \geq k+1$.  
        The algorithm terminates when the domination condition in Step $9$ is satisfied for agent $a_{p_k}$.  That is, for the last round $T$ of the while-loop we have $$  v_{p_k } (A_{p_k}^T) \geq v_{p_k} (A_{m}^T) + v_{p_k} (R^T)$$ 
        for all $m \in \Inact(k,a_{p_k})$.
        This implies that, even after the \texttt{Equaling} phase, agent $a_{p_k}$ does not envy any bundle in the storage of $a_k$.
        %Recall by induction assumption we have   that $\mathcal{B}^t$  is $(k+1)-$\DominationCondition.  Hence, condition C3 invoked for  $\mathcal{B}^t$ implies that for that agent $a_\ell$ does not envy any of the bundle picked by $a_k$ in Step 5. Furthermore, since in each trimming round of \DominationProtocol, $a_k$ forms $X^{t}_{i}$ by trimming $B_i^{t} $. Hence for any round $t$ and any index $i \leq k$ 
%        \begin{equation*}
 %       v_{\ell}(A_i^{t}) = v_{\ell}(\cup_{s =1}^{t} X_{i}^{s}) = \sum_{s=1}^t  v_{\ell}( X_{i}^{s}) \leq \sum_{s=1}^t v_{\ell}(B_{i}^{s}) = \sum_{s=1}^t v_{\ell}(B_{\ell}^{s})         
  %      \end{equation*}
        
     %   We have that $a_\ell $ does not envy $a_k$ in trimming rounds irrespective of which bundle  $a_k $ gets eventually. For the equalling round the residue $R^t$ is so small that it does not induce envy from $a_\ell$ towards $a_k$ irrespective of which piece is goes to $a_k$. Hence $a_\ell$ values each of the piece $A_{i}^{t} $ for $i\leq k$ atleast as much as her own bundle. \red{Should we write it mathematically?} 
    \end{itemize}
    Finally, we need to show that $a_k$ does not envy any of its neighbors in the output allocation as well. It is easy to see that $a_k$ does not envy its parent since $a_k$ selects her preferred pieces in Step~5 before her parent agent. Furthermore, $a_k$ does not envy any of her children  because $a_k$ trims the selected pieces to make them equal to lowest-valued piece. Hence, in each round of the while-loop, $a_k$ values each piece $A_{i}^{t}$ with $i \leq k$ equally. 
    
    \item  { \bf \underline{Condition C2}:}
     We begin with the following observation. 
     \begin{observation}
                The set $\Inact(k-1,a_j) \subseteq   \Inact(k, a_j)$ for every $j \geq k +1$.
                \label{obs: inchild relation}
    \end{observation}
Fix any $j \geq k+1$. Note that, going from $k-1$ to $k$, $a_k$ is the only agent who becomes active from inactive, and therefore we can write
    \begin{equation}
        \Inchild (k,a_j) = \begin{cases}
        \Inchild (k-1,a_j) \cup \{a_k\} & \text{ if } a_k \in C_j\\
         \Inchild (k-1,a_j) & \text{ if } a_k \notin C_j
        \end{cases}
        \label{eq: inchild ralation first case}
    \end{equation}
    That is, we obtain $ \Inchild (k-1,a_j) \subseteq  \Inchild (k,a_j)$.
   \qed
%Using the relation in Equation \ref{eq: inchild ralation first case} we have $\Inchild (k-1,a_j) \subseteq \Inchild (k,a_j)$ for all $j \geq k+1$. Using this in the definition of \Inact\ we get the desired inequality.  %$\Inact(k,a_j) = \Inact(k-1, a_j) \cup \{k\}$ in both the cases.  
%\item \underline{$k\neq p_{k-1}$}: 
\iffalse 
When $a_k$ and $a_{k-1}$ are siblings (i.e. $p_k = p_{k-1}$), we have
    \begin{equation*}
    \texttt{Inact}(k,a_j) = \begin{cases} 
    \texttt{Inact}(k-1, a_j) \cup \{ D_{k-1} \} \ \ \text{ for  } j = p_k \\ 
    \texttt{Inact}(k-1, a_j)  \quad \ \text{ otherwise}  \\ 
    \end{cases}    
    \end{equation*}
    \fi 
    
%In this case we have
 %    \begin{equation*}
  %  \texttt{Inact}(k,a_j) = \begin{cases} 
   % \texttt{Inact}(k-1, a_j)  \cup \{ D_{k-1} \} \ \ \text{ for $j$ = nearest common ancestor of $k$ and $k-1$} \\ 
%    \texttt{Inact}(k-1, a_j) \quad \ \text{ otherwise }  \\ 
 %   \end{cases}    
  %  \end{equation*}
   % \end{itemize}
   
From Observation \ref{obs: inchild relation} and condition C2 for {\sc DominationProtocol }$(R,k+1)$ we have that,  for all $j \geq k+1$, $ v_{j}(B_{j}^{t})  = v_{j}(B^{t}) $ for all $B \in \texttt{\Storage}(k-1, a_j)$.  Note that every agent $j\geq k +1 $ gets an unspoiled piece in Step 8 of the protocol and hence, for any round $t$ of the while-loop, we have  $v_{j}(A_{j}^{t}) =  v_{j}( \cup_{s=1}^t B_{j}^{s}) = v_{j}(\cup_{s=1}^t B_{i}^{s}) = v_{j}(A_{i}^{t}) $ for all $i \geq k+1$.

To conclude,  it is enough to show that $v_k(A_k^{t}) = v_k(A_i^{t}) $ for all $i \in \Inact(k-1, a_k)$. This is true since in any round of the while-loop, these bundles are of same value according to agent $a_k$ (after Steps~11 and 17).  

  \item  { \bf \underline{Condition C3}:}
 We need to show that $v_j(B_j)\ge v_j(B)$  for all $B \in \texttt{Storage}(k-1,a_{\ell})$ such that $a_{\ell}$ is an active child of $a_j$. We consider the following two cases. 
 \begin{itemize}[leftmargin=*]
     \item $\underline{\ell \geq k+1}$:  First, note that the induction hypothesis implies that the output allocation $\mathcal{B}^t$, of the $\textsc{Domination}(R,k+1)$ satisfies condition C3 for each round $t$ of the while-loop. That is, we have  $ v_j(B_j^t) \geq v_j(B)$ for all $B \in \texttt{Storage}(k, a_{\ell})$ for the allocation $\mathcal{B}^t$ obtained in Step~3. Next, from Observation  \ref{obs: inchild relation}, we have $\texttt{Storage}(k-1, a_{\ell}) \subseteq \texttt{Storage}(k, a_{\ell})$. Hence, we obtain that condition C3 is satisfied for $\ell \geq k+1$ till Step 3.
     
     Finally,  note that by topological ordering we know that $j, \ell \notin D_k$ as  $j > \ell \geq k+1$ and  in Step 8 all the agents $a_m \notin D_k$ are assigned unspoiled pieces i.e., $A_m^{t+1} = A_m^{t} \cup B_m^{t}$.  Hence, condition C3 is maintained throughout the execution of the while-loop.    
     \item $\underline{\ell < k+1}$: Recall that  condition C3 considers an active child $a_{\ell}$ of $a_j$. Further, observe that the set  $\texttt{Inact}(k-1, a_{\ell})$ contains all indices $m\leq k-1$.    Hence it is enough to consider $\ell = k$, that is, $j = p_k$.  We will therefore show that $ v_{p_k}(B_{p_k}) \geq v_{p_k}(B)$ for all $B \in \texttt{Storage}(k-1, a_{k})$. Note that, in Step~5, $a_k$ picks $|D_k|$ many bundles from equally-valued  (according to $a_{p_k}$) bundles in $\texttt{Storage}(k-1,a_{p_k})$.
 %Further, in each trimming round, agent $a_k$ trims these $|D_k|$ bundles to make them equal to her least valued bundle. That is, Step $11$ of \DominationProtocol ensures that for each round $t$ of the while-loop, we have $X_{m}^{t} \subseteq B_{m}^t$ for every $m \in D_k$. Since  $v_{p_k}(B_m^t) = v_{p_k}(B_{p_k}^t)$  and that the valuations are monotone,  we have $v_{p_k}(B_{p_k}^t) \geq v_{p_k}(X_m^t) $ at the end of each trimming round.   
  The trimming phase of the algorithm terminates when the domination condition in Step $9$ is satisfied for agent $a_{p_k}$.  That is, for the last round $T$ of the while loop we have $$  v_{p_k } (A_{p_k}^T) \geq v_{p_k} (A_{m}^T) + v_{p_k} (R^T)$$ for all $m \in D_k$, where $R^T$ is the residue at the beginning of round $T$.
  
  In the \texttt{Equaling} phase, even if the whole of $R^T$ is assigned to any bundle in $D_k$, the inequality $v_{p_k } (A_{p_k}) \geq v_{p_k} (A_{m})$ is ensured in the output allocation for all $m \in D_k$.
   \end{itemize}

 \iffalse Now, the condition in Step $9$ ensures that we have for all $m \in D_k$, we have   
 \begin{align*}
  v_{p_k}(A_{p_k}^{T-1}) &= v_{p_k}(\cup_{t=1}^{T-1}B_{p_k}^t) \\
  &= \sum_{t=1}^{T-1} v_{p_k}(B_{m}^t)  \\
  &\geq \sum_{t=1}^{T-1} v_{p_k}(X_{m}^t) \tag{using Condition C2} 
  \\&= v_{p_k}(\cup_{t=1}^{T-1}X_{m}^t)  = v_{p_k}(A_{m}^{T-1}) 
 % & =  v_{p_k}(A_{p_k}^{T-1})
  \end{align*}
  \fi

 %For all other active children of $a_{p_k}$, we get the condition via induction hypothesis.

%    In this case we consider following three sub-cases. \\ 
\iffalse     
 {\bf Case 1:  \bf $a_k$ is a parent of $a_{k-1}$:} 

 To conclude  the proof of condition C2 for this case,  it is enough to show that $v_k(A_k^{t}) = v_k(A_i^{t}) $ for all $i \in \Inact(k-1, a_k)$. This is true since for all rounds in while loop the bundles are of same value according to agent $a_k$ (after Steps 11 and 16).  

    { \bf Case 2:  $a_k$ and $a_{k-1}$ are siblings (i.e. $p_k = p_{k-1}$): } 
    
    Following the same argument used  in Case
 1, it is enough to show that $v_{k}(A_k^t) = v_{k}(A_i^t)$ for all $i\in \Inact(k-1, a_k)$. This is true since for all rounds in while loop the bundles are of same value according to agent $a_k$ (after Steps 11 and 16). \\
  \fi

\end{itemize}

\noindent
{\em Runtime:} The runtime analysis follows on similar lines as of  Lemma~\ref{lem:klogk complexity} by replacing $k$ with $|D_k|=d_k$, and therefore, we skip the analysis here.
\end{proof}

\section{Polynomial Query Complexity for Special Graph Structures } 
\label{sec:special graphs}
In this section, we take a close look at our techniques developed in the previous sections, and aim to optimize them with an overarching goal of identifying a non-trivial social graph structure over agents that admits a polynomial-query complexity for local envy-freeness. Developing efficient locally envy-free  protocols for interesting graph structures is listed as an open problem in \cite{bei2020cake} and \cite{AbebeKP16}. We partially address this open problem by developing  a novel protocol (\textsc{Alg2}) that finds a locally envy-free allocation among $n$ agents 
(a) on a \dTwoTree \ using only $O(n^3\log(n))$ \cut \ and $O(n^4\log(n))$ \eval \ queries (Section~\ref{sec:depth2}), and (b) on a $2$-\textsc{Star} using $O(n^2)$ \cut \ and $O(n^3)$ \eval \ queries (Section~\ref{sec:star}).

Before we begin, we will present a simple protocol \textsc{Alg1} (as a warm-up towards developing \textsc{Alg2}) that computes a locally envy-free allocation among four agents on a \textsc{Line}. We generalize and optimize the techniques developed for $\textsc{Alg1}$ with goal of attaining polynomial query protocol \textsc{Alg2} for \dTwoTree. Note that there are notable differences between the techniques of  $\textsc{Alg2}$ and our recursive protocol $\textsc{Domination}([0,1],1)$ developed in Section~\ref{sec:line}. We will discuss them later when we formally detail the $\textsc{Alg2}$ protocol.

\subsection{Four agents on a \textsc{Line} graph }\label{subsec:line}
 Consider a cake-division instance with four agents having $a_1-a_2-a_3-a_4$ as the underlying graph structure. We develop a simple protocol \textsc{Alg1} that returns a locally envy-free allocation among four agents on a \textsc{Line} by making $8$ \cut\  and $16$ \eval\ queries.
 
 \paragraph{Overview of \textsc{Alg1}:} In this protocol, we designate $a_3$ as the \emph{cutter} and $a_2$  as the \emph{trimmer} agent. Our protocol, \textsc{Alg1} runs in two phases: \texttt{Trimming} and \texttt{Equaling}.
  The first phase (\texttt{Trimming}) begins with  the cutter agent $a_3$ dividing the cake into four equally valued pieces according to her and $a_4$ selecting her favourite piece from them. Then,  $a_2$ makes her favorite piece (among remaining three pieces)  equal to the second largest using the \textsc{Trim}(.) procedure. The trimming obtained from this procedure is called as \emph{residue} and is allocated in the next phase. Similar to the first phase, in the second phase (\texttt{Equaling}), the cutter agent $a_3$ first divides the residue into four equal pieces according to her and  $a_4$ selects her favourite piece. However, unlike the first phase, instead $a_2$ makes her two most favorite pieces of equal value using  \textsc{Equal}(.) procedure. Finally, the pieces from these two phases are grouped together appropriately to form a complete partition of the cake $[0,1]$, and then are assigned to agents (see Step \ref{alg1:group} of \textsc{Alg1}).
 
The idea is to create a partition such that the cutter agent $a_3$ is indifferent between her and $a_4$'s bundle and values $a_2$'s bundle at most as much as her own bundle. The trimmer agent $a_2$ similarly is indifferent between her and $a_1$'s bundle whereas she values $a_3$'s bundle at most as valuable as her own bundle. \AlgOne \ ensures the above properties and hence finds a locally envy-free allocation (see Theorem \ref{thm:4LINE}).

\begin{theorem}
For cake-division instances with four agents on  a \textsc{Line} graph, there exists a discrete cake-cutting protocol that finds a locally envy-free allocation using   $8$ \cut \  queries and $16$ \eval \ queries. 
\label{thm:4LINE}
\end{theorem}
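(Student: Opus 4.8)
The plan is to build the explicit two-phase protocol \textsc{Alg1} indicated in the overview --- with $a_3$ as the sole \emph{cutter} and $a_2$ as the sole \emph{trimmer} --- and then verify the local-envy inequalities directly. On the line $a_1 - a_2 - a_3 - a_4$ only the adjacent pairs are constrained, so it suffices to guarantee (i) $v_1(A_1)\ge v_1(A_2)$; (ii) $v_2(A_2)\ge v_2(A_1)$ and $v_2(A_2)\ge v_2(A_3)$; (iii) $v_3(A_3)\ge v_3(A_2)$ and $v_3(A_3)\ge v_3(A_4)$; (iv) $v_4(A_4)\ge v_4(A_3)$. I would obtain (i) and (iv) cheaply by making $a_1$ and $a_4$ into \emph{choosers}, and derive the $a_2$- and $a_3$-facing inequalities from the cut/trim structure. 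The target invariants are: $a_3$ is indifferent between $A_3$ and $A_4$ while valuing $A_2$ no higher than $A_3$, and $a_2$ is indifferent between $A_1$ and $A_2$ while valuing $A_3$ no higher than $A_2$.

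Concretely, in the \texttt{Trimming} phase $a_3$ cuts $[0,1]$ into four pieces of equal $v_3$-value, $a_4$ takes her favourite $F_4^1$, and among the three remaining pieces (each of $v_3$-value $1/4$) $a_2$ runs \textsc{Trim} on her top two, producing two $v_2$-equal pieces $Q_1',Q_2$, a least piece $Q_3$, and a residue $R$. In the \texttt{Equaling} phase $a_3$ cuts $R$ into four $v_3$-equal pieces, $a_4$ takes favourite $F_4^2$, and $a_2$ runs \textsc{Equal} on her top two of the three remaining residue pieces, leaving her least residue piece $p_{\mathrm{least}}$ untouched and producing a $v_2$-equal pair $p_a,p_b$. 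I would then set $A_4=F_4^1\cup F_4^2$ and $A_3=Q_3\cup p_{\mathrm{least}}$, forming the two candidate bundles for $\{a_1,a_2\}$ by pairing the remaining pieces and letting $a_1$ choose. Since $a_3$ cut equally and $a_4$ takes one piece per phase, $v_3(A_3)=\tfrac14+v_3(R)/4=v_3(A_4)$, giving $v_3(A_3)\ge v_3(A_4)$ for free; and because $Q_3,p_{\mathrm{least}}$ are untouched original pieces, $a_4$'s favourites dominate them, giving $v_4(A_4)\ge v_4(A_3)$. The $a_2$-facing inequalities then follow because both candidate $\{a_1,a_2\}$-bundles are $v_2$-equal, while $A_3$ is assembled from $a_2$'s least pieces in each phase.

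The hard part, and where I would concentrate, is the final inequality $v_3(A_3)\ge v_3(A_2)$: once $a_2$ equalizes for \emph{her} value she scrambles $v_3$, and a naive assignment can hand $A_2$ a residue piece of $v_3$-value up to $v_3(R)/2$, breaking domination. I would resolve this in the grouping step by pairing with respect to $v_3$: pair the trimmed piece $Q_1'$ (whose value is exactly $v_3(Q_1')=\tfrac14-v_3(R)$, since all of $R$ was trimmed off $Q_1$) with the $v_3$-heavier of $p_a,p_b$, and pair the full piece $Q_2$ with the $v_3$-lighter one. A short computation then bounds both candidate bundles by $v_3(A_3)=\tfrac14+v_3(R)/4$: the $Q_2$-bundle because its residue half is at most $v_3(R)/4$, and the $Q_1'$-bundle because $(\tfrac14-v_3(R))+v_3(R)/2\le \tfrac14+v_3(R)/4$. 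Crucially this $v_3$-pairing stays compatible with $a_2$-indifference, since the Phase-1 pair $\{Q_1',Q_2\}$ and the Phase-2 pair $\{p_a,p_b\}$ are each internally $v_2$-equal, so $a_1$'s choice cannot disturb it. Finally I would tally queries: the two equal cuts cost $3+3$ \cut\ queries and the \textsc{Trim} and \textsc{Equal} on two pieces cost one \cut\ each, totalling $8$; the evaluations of the choosers $a_1,a_4$, the trimmer $a_2$, and $a_3$ for the $v_3$-pairing sum to $16$ \eval\ queries, matching the claimed bounds.
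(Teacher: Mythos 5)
Your protocol is, up to renaming, the paper's \textsc{Alg1}: the same two phases (cutter $a_3$ makes four $v_3$-equal pieces, $a_4$ chooses, $a_2$ selects her top two and runs \textsc{Trim}; then the same on the residue with \textsc{Equal}), the same bundles $A_3 = Q_3\cup p_{\mathrm{least}}$ and $A_4 = F_4^1\cup F_4^2$, and the same envy analysis for $a_1,a_2,a_4$ and for $v_3(A_3)=v_3(A_4)$. The one place you diverge is the grouping step, and it is exactly where your query count breaks. The paper pairs \emph{deterministically}: the trimmed piece $P_1'$ goes with the \emph{augmented} equalized piece $T_2'=T_2\cup T'$, and the untrimmed $P_2$ goes with the shrunk piece $T_1'\subseteq T_1$. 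Correctness then needs no $v_3$ measurement at all: since the entire residue was cut out of $P_1$, one has $P_1'\cup T_2'\subseteq P_1$, so $v_3(A_1)\le v_3(P_1)\le v_3(A_3)$, and $v_3(T_1')\le v_3(T_1)$ handles the other bundle. You instead pair by comparing $v_3(p_a)$ against $v_3(p_b)$, which is sound by your arithmetic, but it forces $a_3$ to spend an additional \eval\ query on one of $p_a,p_b$ (her single budgeted \eval\ is already consumed learning $v_3(R)$ so she can cut $R$ into four equal quarters). Under the paper's accounting ($a_4$: $3+4$, $a_2$: $3+3$, $a_3$: $1$, $a_1$: $2$, total $16$), your rule yields $17$ \eval\ queries, so as written you do not meet the stated bound.

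The fix is contained in your own inequality: the augmented piece always satisfies $v_3(p_a\cup p_b)=v_3(R)/2$, hence $v_3(T_2')\le v_3(R)/2$ \emph{unconditionally}, and $\bigl(\tfrac14-v_3(R)\bigr)+\tfrac{v_3(R)}{2}\le \tfrac14\le v_3(A_3)$. So pairing $Q_1'$ with the augmented piece and $Q_2$ with the subset piece works deterministically, with no comparison query --- which is precisely the paper's grouping. With that one-line change your proof matches the theorem, including the constants $8$ \cut\ and $16$ \eval.
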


\begin{algorithm}[ht!]
%\small
    \renewcommand{\thealgocf}{}
    \DontPrintSemicolon
    \SetAlgorithmName{$\textsc{Algorithm}$}{ }{ }
 \nonl {\bf Input:} A cake-division instance $\mathcal{I}$ with four agents on a $\textsc{Line}$ \;
  \nonl {\bf Output:} A locally envy-free allocation \; 
  \nonl ------------\texttt{ Trimming Phase} ------------ \;
   Let $\{P_1, P_2, P_3,P_4\} $ be the pieces obtained when $a_3$ divides the entire cake ([0,1]) into 4 equal pieces \;
$a_4$ chooses her favorite piece, say $P_4$ \;
   Let $\{P_1,P_2\} \leftarrow \textsc{Select}(a_2, \{P_1, P_2, P_3\},2)$ \;
   Let $(\{P'_1,P_2\},T) \leftarrow \textsc{Trim}(a_2, \{P_1,P_2\})$  \texttt{/* $P'_1:= P_1 \setminus T, v_2(P'_1) = v_2(P_2)$ */} \;
\nonl ------------\texttt{ Equaling Phase} ------------ \;
  Let $\{T_1, T_2, T_3,T_4\} $ be the pieces obtained when $a_3$ divides the trimming $T$ into 4 equal pieces \;
   $a_4$ chooses her favorite piece, say $T_4$ \;
     Let $\{T_1,T_2\} \leftarrow \textsc{Select}(a_2, \{T_1, T_2, T_3\},2)$ \; 
     $\{T'_1, T'_2 \}$ = \textsc{Equal} $(a_2, 
     \{ T_1, T_2\})$ 
      \texttt{/* Let  $T'_1 = T_1 \setminus T'$ and $T'_2 = T_2 \cup T'$ such that $v_2(T'_1) = v_2(T'_2)$ */} \;
\nonl ---------------\texttt{ Create partition} -------------- \; 
  Set $A_1 := P'_1 \cup T'_2  , A_2 := P_2 \cup T'_1, A_3 := P_3 \cup T_3  $ and $A_4 := P_4 \cup T_4$\; \label{alg1:group}
\nonl  ---------------\texttt{  Final Allocation} -------------- \;
     $A_4$  is allocated to $a_4$ and $A_3$ is allocated to $a_3$ \;
   $a_1$ picks her favourite piece between  $A_1$ and $A_2$. The remaining piece is allocated to  $a_2$\;
       	\caption{Local Envy-freeness for four agents on a $\textsc{Line} \ (\AlgOne) $}
 	\label{alg:FourAgentsOnPath}
\end{algorithm} 

\begin{proof}
To begin with, note that \textsc{Alg1} returns a complete allocation $\{A_1,A_2, A_3,A_4\}$ of the cake, since \textsc{Equal} procedure allocates the entire residue from the first phase (see Step 8). 

%We begin by showing that the allocation $\{A_1, A_2, A_3, A_4\}$ returned by \textsc{Alg1} is a complete partition of the cake, i.e.,  $ \cup_{i=1}^4 A_{i}  = [0,1]$. To see this,  note that  $T_1 \cup T_2 \cup T'_3 \cup T'_4 = T$ and $P'_3 \cup T = P_3$. That is, we have 
%$\cup_{i=1}^4 A_{i}    = \cup_{i=1}^{4} P_i = [0,1].$ 

Next, we will establish local envy-freeness. Recall that agents have additive valuations. Therefore, since $a_3$ is the cutter, she values her piece, $A_3 = P_3 \cup T_3$ and $a_4$'s piece, $A_4 = P_4 \cup T_4$ equally i.e., $v_3(A_3) = v_3(A_4)$. Furthermore,
Steps $2$ and $6$ ensures that $a_4$ values her piece at least as much as that of $a_3$, i.e., $v_4(A_4) \geq  v_4(A_3)$.  That is, there is no envy between agents $a_4$ and $a_3$.
 Next, to see that there is no envy between $a_2$ and $a_3$, observe 
 \begin{flalign*}
    v_3(A_3) = v_{3}(P_3) + v_3(T_3) &=  v_{3}(P_2) + v_{3}(T_1) \tag{by Steps $1$ and $5$ of \textsc{Alg1}} \\
    &\geq  v_{3}(P_2) + v_3(T'_1) = v_{3}(A_2), \tag{since $T'_1 \subseteq T_1$}  
     \end{flalign*}
     and 
     \begin{flalign*} 
     v_3(A_3) = v_{3}(P_3) + v_3(T_3) &\geq   v_{3}(P_1) + v_3(T_2)  \tag{by Steps $1$ and $5$ of \textsc{Alg1}}\\
    &\geq  v_{3}(P'_1) + v_3(T'_2)= v_3(A_1). \tag{since $P_1 \cup T'_2 \subseteq P_1$}
      \end{flalign*}
That is, irrespective of which bundle $a_2$ gets  between $A_1$ and  $A_2$, agent $a_3$ does not envy $a_2$. Also, $a_2$ does not envy $a_3$ since she selects her top two pieces in Steps 3 and 7, that is 
 \begin{equation}
    v_2(A_1) = v_{2}(P'_1) + v_2(T'_2) =  v_{2}(P_2) + v_{2}(T'_1) = v_2(A_2) \geq v_{2}(P_3) + v_2(T_3) =v_{2}(A_3). 
    \label{eq: inequalities for a2}
     \end{equation}
%     We also have
 %    \begin{flalign*} 
  %   v_3(A_3) = v_{3}(P_3) + v_3(T_3) &\geq   v_{3}(P_1) + v_3(T_2)  \tag{by Steps $1$ and $5$ of \textsc{Alg1}}\\
  %  &\geq  v_{3}(P'_1) + v_3(T'_2)= v_3(A_1) \tag{since $P_1 \cup T'_2 \subseteq P_1$}
  %    \end{flalign*}

Equation (\ref{eq: inequalities for a2}) implies that $a_2$ values $A_1$ and $A_2$ equally, i.e.,   $v_2(A_1)=v_2(A_2)$. Hence, $a_2$ does not envy $a_1$ irrespective of which bundle she gets between $A_1$ and $A_2$.  
 Finally,  $a_1$ does not envy $a_2$ since she picks her favorite bundle between $A_1$ and $A_2$ in Step 11, thereby establishing local envy-freeness.\\

\noindent
\textit{Counting queries:} Steps $1$ and $5$ each require $3$ \cut \ queries (for the cutter agent $a_3$) to divide the cake $[0,1]$ and the residue $T$, respectively, into four equi-valued pieces. For the trimmer agent $a_3$, we require $1$ \cut\ query to execute Step $4$ and $1$ more \cut\ query to execute Step $8$, making a total of $8$ \cut\ queries. Next, observe that $\AlgOne$ requires a total of $16$ \eval \ queries: a sum of $7$ queries ($3$ and $4$ queries in Steps $2$ and $6$ respectively) for agent $a_4$ to select her favorite piece, $6$ queries (3 queries each in Steps $3$ and $7$)  for agent $a_2$  to select her two highest-valued pieces, $1$ query for agent $a_3$ to evaluate her value for $T$ in Step $5$, and $2$ queries for agent $a_1$ in Step $11$ to select her favorite piece. 
\end{proof}

%We begin with noting that the protocol we developed in Section \ref{sec:acyclic graph} is inefficient for $n$ agents on  { \sc 1-Star} graph which admits a straightforward $O(n^2)$ query complexity protocol.\footnote{The center agent cuts the cake into $n$ equal pieces, other agents choose their favourite piece from available pieces in an arbitrary sequence and the center agent get the leftover piece. } Our goal in this section is to identify non-trivial graph structures that admit a polynomial query complexity protocols by applying techniques developed in this paper. 
\subsection{Local Envy-freeness on \textsc{Tree} graphs with depth at-most 2}
\label{sec:depth2}
In this section, we consider cake-division instances where agents lie on a social graph that is a tree with depth at-most $2$ (\dTwoTree), and show that it admits a simple and query-efficient  protocol (\textsc{Alg2}) for finding a local envy-freeness (see Theorem \ref{thm:depth2tree}). The idea is to efficiently generalize the key elements of \textsc{Alg1}. Therefore, similar to  \textsc{Alg1}, it consists of \texttt{Trimming} and \texttt{Equaling} phases, but it will be much more involved since the root agent has multiple children who are connected to multiple leaf agents, as opposed to the case of four agents on a \textsc{Line}. It is relevant to note that while there are certain similarities with our previous protocol $\textsc{Domination}([0,1],1)$, \textsc{Alg2} is \emph{not} a recursive algorithm.  

We write $a_r$ to denote the root agent and $D$ to denote the set of her neighbours/children (see Figure~\ref{one}). Each agent $a_i \in D$ has $\ell_i +1$ neighbours, i.e., she is connected to $\ell_i \geq 0 $ leaf agents. In addition, we write $L(i)$ to denote the set of children of agent $a_i \in D$. For cake-division instances, we will specify a \dTwoTree \ by $\langle n, a_r, D, \{\ell_i\}_{a_i \in D} \rangle$.

Our protocol \textsc{Alg2} designates $a_r$ as the cutter and each  $a_i \in D$ as the trimmer agents (recall the definition from Section \ref{sec:setting}). For clarity, we will call the subset of  trimmer agents who perform \textsc{Trim}(.) procedure as \emph{active}  trimmer agents, denoted by the set $Tr$ and the agents who perform \textsc{Equal}(.) procedure as \emph{equalizer} agents.  
 Before giving an overview of \textsc{Alg2}, we first state the \emph{domination} condition used in \textsc{Alg2}.
 \begin{figure}[ht!]
 \centering 
 \includegraphics[scale =0.45]{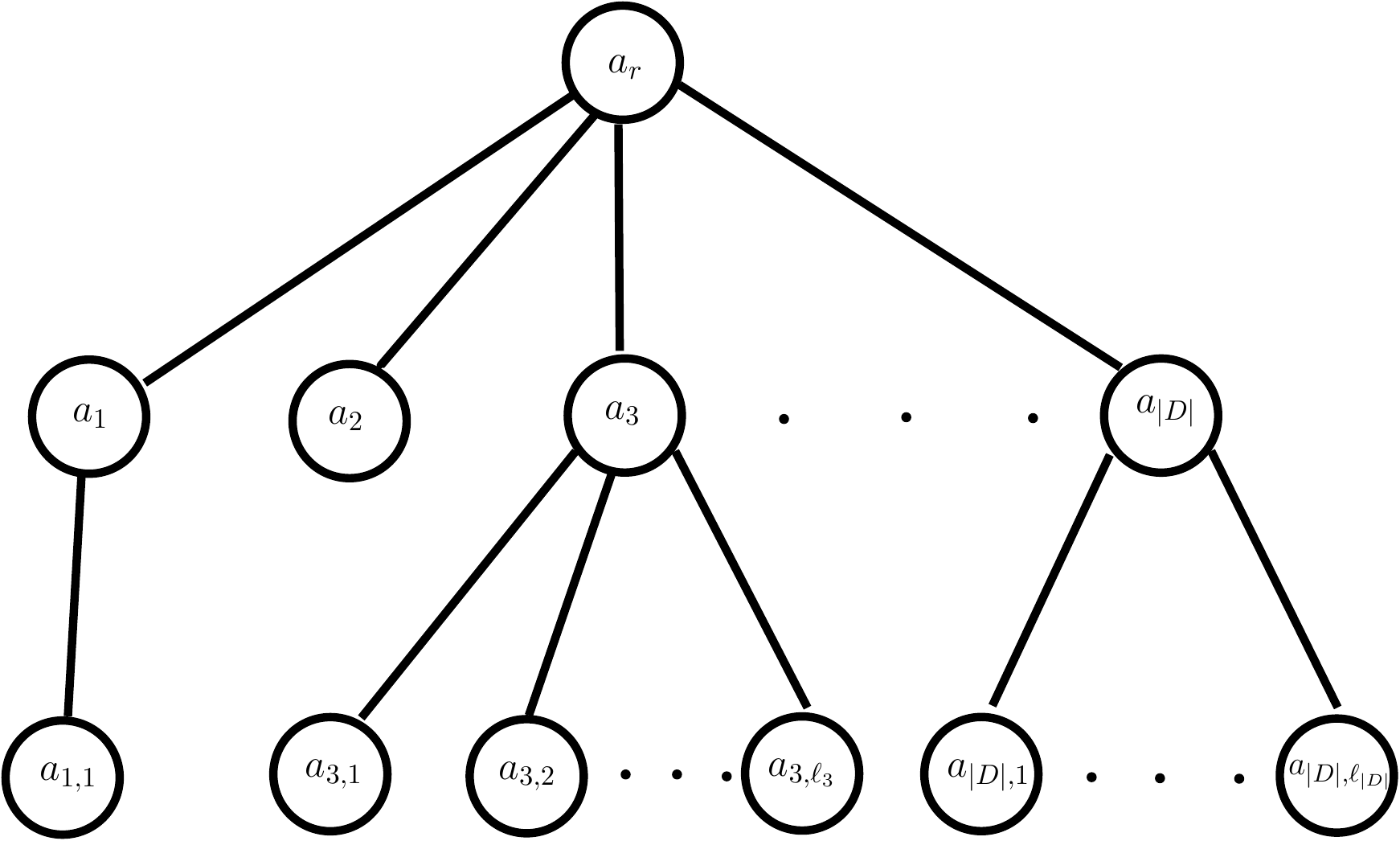}
 \caption{\textsc{Tree} with depth at-most two (\textsc{Depth2Tree}).  }
  \label{one}
 \end{figure}
 
\begin{definition}[Domination Condition] \label{domcond}
Given a cake-division instance on a  \textsc{Depth2Tree}, let us denote root agent's bundle by $A_r$ and the residue by $R$. We say that the root agent \emph{dominates} a bundle $A_i$ of agent $a_i$ if
\begin{equation}
\label{eq:domination}
v_r(A_r) - v_r(A_i) \geq \min \{ \frac{\ell_i +1}{|D| +1},1\}  \cdot v_r(R). \end{equation}  
\end{definition}

 The protocol \textsc{Alg2} seeks to achieve \emph{dominance} for the root agent on all (the bundles maintained by) her children by iteratively dividing the currently unallocated cake (to be referred as \emph{residue}) in multiple rounds of a while-loop. In the beginning of the protocol, the set of active trimmer agents $Tr=D$ and the algorithm progressively removes agents from $Tr$ by checking the dominance condition.\footnote{An agent $a_i \in D$ is removed from the set $Tr$ of active trimmer agents as soon as the root agent starts dominating her.} And, the while-loop gets terminated when the set $Tr$ becomes empty.
 
 \paragraph{An Overview of  \textsc{Alg2}:} The protocol iteratively constructs an allocation $\mathcal{A}=\{A_1, \dots,A_n\}$ of the cake among $n$ agents. It primarily consists of a while-loop (Steps 3-22) whose goal is to establish dominance of agent $a_r$ over her children in $D$ in multiple rounds. We call $R$ as \emph{residue} of the cake (that is yet unallocated) obtained at the end of each round of the while-loop. In the beginning, $R=[0,1]$, and its value according to $a_r$ keeps decreasing with subsequent rounds of the while-loop (see Lemma~\ref{lemma:decreaseinround}).
 
 There are three phases in each round, namely \texttt{Selection}, \texttt{Trimming}, and \texttt{Equaling}. Consider any round $t$ of the while-loop and denote the residue at the beginning of this round as $R^t$. The round starts with agent $a_r$ dividing the current residue $R^t$ into $n$ equal pieces, followed by \texttt{selection} of $\ell_i+1$ most favorite available pieces by every agent $a_i \in D$ (Step~5). The last remaining piece in added to the root agent's bundle $A_r$. For each agent $a_i \in D$, throughout the execution, the algorithm maintains a collection $\mathcal{A}^{(i)} = \{A^{(i)}_0, \dots, A^{(i)}_{\ell_i}\}$ of $\ell_i+1$ bundles so that each of them are of equal value according to $a_i$. 

In every round of the while-loop, post the \texttt{selection} phase, there are possibly two phases for an agent $a_i \in D$: \texttt{trimming} if agent $a_r$ does not \emph{dominate} her yet and \texttt{equaling} if $a_r$ has started to \emph{dominate} her. If an agent $a_i \in D$ is in \texttt{trimming} (\texttt{equaling}) or phase, she performs a \textsc{Trim} procedure (\textsc{Equal} procedure, respectively) on the set of bundles she selected in Step 5.
Moreover, for each agent $a_i \in D$, the protocol maintains a special bundle  $A^{(i)}_0 \in \mathcal{A}^{(i)}$ which receives a whole (untrimmed) piece in each step of the \texttt{trimming} phase (see Step 11). Towards the end of the \texttt{trimming} phase, the algorithm assigns each of the $\ell_i$ trimmed pieces to $\ell_i$ bundles in $\mathcal{A}^{(i)} \setminus A^{(i)}_0$. 
We remark here that, with each round, this operation (see Steps 12-14) increases the difference between root agent's valuation for her own bundle and any of these bundles containing the trimmed pieces, see Lemma \ref{lemma:decreaseinround}. This proves to be crucial in eventually establishing the required dominance for agent $a_r$.

In Lemma~\ref{lemma:domination}, we prove that after every $O(n \log n)$ rounds of the while-loop, the root agent $a_r$ starts dominating a new trimmer agent from the set $Tr \subseteq D$. This is when the protocol remove this trimmer agent from the set $Tr$ of active trimmer agents and she performs \textsc{Equal}(.) procedure in the next round (see Step 21). Therefore, we obtain that the set of active trimmer agents become empty (i.e., every agent $a_i \in D$ becomes an equalizer) in polynomially-many rounds.

In the \texttt{equaling} phase, agent $a_i$ makes all the pieces (picked in the \texttt{selection} phase) of equal value to her and then allocates them to the bundles in $\mathcal{A}^{(i)}$ such that each bundle gets a single piece. Bundle $A^{(i)}_0$ now receives a trimmed piece so that, due to dominance, the agent $a_r$ will not envy its neighbour even if she is allocated this bundle (see Steps 16-19). At the end, each leaf agent $a \in L_i$ chooses her favorite bundle from $\mathcal{A}^{(i)}$ and agent $a_i$ receives the last remaining bundle. This creates a complete allocation $\mathcal{A}$ of the cake.

\begin{algorithm}[ht!]
 \renewcommand{\thealgocf}{}
    \DontPrintSemicolon
     \SetAlgorithmName{$\textsc{Alg2}$}{ }{ }
    \caption{Local Envy-freeness  for $n$ agents on  \textsc{Depth2Tree}}
        \label{alg:2-tree}
       \nonl \textbf{Input:} A cake-division instance $\mathcal{I}$ on  \textsc{Depth2Tree} $(n, a_r, D, \{\ell_i\}_{a_i \in D})$.\;
       \nonl \textbf{Output:} A locally envy-free allocation. \;
        Initialize  $R \leftarrow [0,1]$, set of trimmer agents $Tr \leftarrow D$, bundles $A_0^{(i)},\dots,A_{\ell_i}^{(i)} \leftarrow \emptyset$ for each $a_i \in D$ and a bundle $A_r \leftarrow \emptyset$ for the root agent.\;
        \While{$Tr \neq \emptyset$}
        {
          $\mathcal{X} \gets \textsc{Eq-Div}(a_r, R,n)$\;
         \nonl  --------$\texttt{ Selection}$--------\\
         
         \For{$a_i \in D$}
         {
            Set $(\mathcal{X}^{(i)},\mathcal{X}) \leftarrow$  \textsc{Select}($a_i,\mathcal{X},\ell_{i}+1$)
         }
          Set $A_r \gets A_r \cup \mathcal{X}$\;
%         $A_r\rightarrow A_r\cup s^r$\;
          \nonl
 --------$\texttt{Trimming}$--------- \\
         Set $R \gets \emptyset$\;
         \For{ $a_i\in Tr$}
         {
         Let $X^{(i)}_0 = \argmin_{X \in \mathcal{X}^{(i)}}v_{i}(X)$\tcc{This piece won't be trimmed}
            $(\mathcal{X}^{(i)},R) \gets \textsc{Trim}(a_i,\mathcal{X}^{(i)})$\;
           % $\mathcal{Y}^{(i)} \gets \mathcal{X}^{(i)}$\;
            $A_0^{(i)}\leftarrow A_0^{(i)} \cup X_{0}^{(i)}$ and $\mathcal{X}^{(i)}\leftarrow \mathcal{X}^{(i)}\setminus X_{0}^{(i)}$\; \label{line:ai0}
            Let $A_w^{(i)} = \argmax_{1\le k\le \ell_i } v_{r}(A^{(i)}_k)$ and $X_t^{(i)} = \argmin_{1\le k\le \ell_i}v_{r}(X_k^{(i)})$\; \label{inalg-argmin}
             $A_w^{(i)}\leftarrow A_w^{(i)}\cup X_t^{(i)}$ and $\mathcal{X}^{(i)}\leftarrow \mathcal{X}^{(i)}\setminus X_t^{(i)}$
             \tcc{Trying to achieve domination on $A_w^{(i)}$ for the root agent}\label{inalg-cupmax}
            For each $k\neq 0, w$, add one arbitrary piece from $\mathcal{X}^{(i)}$  to $A_k^{(i)}$ \;
         }

 \nonl 
 --------$\texttt{Equaling}$--------- \\
         \For{$a_i\in D\setminus Tr$}
         {
            $(\mathcal{X}^{(i)}, X^*) \gets \textsc{Equal}(a_i,\mathcal{X}^{(i)})$\;
            Let $X_{k}^{(i)} \in \mathcal{X}^{(i)}$ be the piece such that $X_{k}^{(i)} \subseteq X^*$
            \tcc{There is only one piece satisfying this condition.}\label{inalg-asubpiece}
            $A_{0}^{i} \gets A_{0}^{i}\cup X_{k}^{(i)}$\tcc{This ensures that the root agent will not envy the bundle $A_{0}^{i}$}\label{inalg-unionsub}
           For each $k \neq 0$, add one arbitrary piece from $\mathcal{X}^{(i)}$  to the bundle $A_k^{(i)}$ \;
         }
         \nonl
         --------$\texttt{Judging Domination}$-------- \\
         \For{$a_i \in Tr$}
         {
         \uIf{$\forall k>0, v_{r}(A_r)-v_{r}(A_k^{(i)})\ge \min\{\frac{\ell_{i}+1}{|D|+1},1\}\cdot v_{r}(R)$\label{inalg-domination}}
         {$Tr\gets Tr\setminus \{i\}$}
         }
         }
          \nonl --------$\texttt{Choose after while loop}$-------- \\
        \For{$a_i\in D$}
        {
            \For{$a_j\in L(i)$} 
            {
            $a_j$ is allocated her favorite (available) bundle from $\{A^{(i)}_{k}\}_{0 \leq k \leq \ell_{i}}$ \;
            }
            $a_i$ is allocated the remaining bundle\;
        }
        \Return  The allocation $A_r \cup \{A_j^{(i)}\}_{0\leq j\leq \ell_i, i \in D}$\;
\end{algorithm}

%\subsection{Properties of \textsc{Alg2}} %\label{ssec:properties}
We begin by establishing three crucial properties of \textsc{Alg2} in Lemmas \ref{lemma:O(n)cuts}, \ref{lemma:decreaseinround} and \ref{lemma:domination}. These lemmas would be useful in proving our main results in  Theorems \ref{thm:depth2tree} and \ref{thm:2star}.

\begin{lemma}\label{lemma:O(n)cuts}
In every round of the while-loop in $\textsc{Alg2}$, we make $O(n)$ many cuts on the cake. 
\end{lemma}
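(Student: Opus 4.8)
The plan is to account for the \texttt{cut} queries issued in each phase of a single round of the while-loop (Steps 3--22) and observe that every phase contributes only $O(n)$ cuts, the crucial point being that the children's trimming/equaling costs aggregate to a \emph{global} count of the leaf agents rather than a per-agent cost multiplied by $|D|$.

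First I would isolate which operations actually make cuts. The only cut-issuing steps in a round are the root's division via $\textsc{Eq-Div}(a_r,R,n)$ in Step~3, and the $\textsc{Trim}$ / $\textsc{Equal}$ procedures performed by the children in $D$ during the trimming and equaling phases. All remaining operations are cut-free: by the query costs recorded in Section~\ref{section:preliminaries}, each $\textsc{Select}$ call in the selection phase uses only \eval\ queries, and the bookkeeping steps (assigning pieces to bundles, and computing the $\argmax$/$\argmin$ in Steps~12--13 and~17--19) make no cuts at all.

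Next I would bound each contribution. The root's $\textsc{Eq-Div}$ splits the current residue $R$ into $n$ equally valued pieces, costing at most $n$ cut queries. For each child $a_i \in D$, exactly one of $\textsc{Trim}$ or $\textsc{Equal}$ is invoked (depending on whether $a_i \in Tr$) on the collection $\mathcal{X}^{(i)}$ of $\ell_i+1$ pieces selected in Step~5; by the stated costs, each such procedure uses $|\mathcal{X}^{(i)}|-1 = \ell_i$ cut queries. Summing over all children, the trimming and equaling phases together contribute $\sum_{a_i \in D}\ell_i$ cuts in a single round.

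The main (and essentially only) thing to get right is bounding this aggregate $\sum_{a_i \in D}\ell_i$. I would observe that each leaf agent is a child of exactly one agent in $D$, and that the $n$ agents consist of the root, the $|D|$ children of the root, and the $\sum_{a_i\in D}\ell_i$ leaves, all distinct. Hence $\sum_{a_i\in D}\ell_i = n-1-|D| \le n$, i.e. this sum merely counts the leaf agents and is therefore linear in $n$, not quadratic. Combining the two contributions yields at most $n + \sum_{a_i\in D}\ell_i = O(n)$ cuts per round, which proves the lemma.
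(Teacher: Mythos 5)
Your proof is correct and follows essentially the same route as the paper: isolate the cut-issuing steps (the root's $n$-way division plus one \textsc{Trim} or \textsc{Equal} per child on $\ell_i+1$ pieces) and observe that the children's costs aggregate to $\sum_{a_i \in D} \ell_i \le n$ rather than a per-agent cost times $|D|$. Your explicit accounting $\sum_{a_i\in D}\ell_i = n-1-|D|$ via the disjointness of root, children, and leaves is slightly more careful than the paper's terse ``in total, no more than $n$ cuts,'' but it is the same argument.
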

\begin{proof}
In any round of while-loop in the execution of \textsc{Alg2}, there are three steps where cuts are made on the cake. First, the root agent makes $n-1$ cuts in Step $4$ to equally divide $R$ into $n$ pieces.  
In the \texttt{Trimming} phase, each agent $a_i$ makes $\ell_i$ many cuts in Step $11$. In total, it will be no more than $n$ cuts. 

In the \texttt{Equaling} phase, each agent $a_i$ requires at most $\ell_i$ cuts to execute Step $17$. For each piece larger than the average, it requires one cut to make it equal to the average.  For each piece less than the average, we add some pieces and make at most one cut. In total, there are no more than $n$ cuts. Therefore, in total there are $O(n)$ cuts in each while loop of \textsc{Alg2}. 
\end{proof}

We write $R^t$ to denote the residue at the beginning of round $t$ of the while-loop in \textsc{Alg2}.

\begin{lemma}\label{lemma:decreaseinround}
Consider any round $t$ of the while-loop in \textsc{Alg2}. We have the following bound on the valuation of the root agent for the residues from two consecutive iterations of the while-loop, $$v_r(R^{t+1})\le \left(1-\frac{|D|+1}{n}\right)v_r(R^t)$$ In other words, the residue is decreasing exponentially with respect to the root agent with each iteration of the while-loop.
\end{lemma}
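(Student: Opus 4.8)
The plan is to track exactly how much of the cake the root agent $a_r$ has "used up" by the end of round $t$, and to show that in each round she claims at least a $\frac{|D|+1}{n}$ fraction (by her own valuation) of the current residue $R^t$, so that the residue that survives into round $t+1$ is at most a $\left(1-\frac{|D|+1}{n}\right)$ fraction. First I would recall the structure of a single round: in Step~4 agent $a_r$ cuts the current residue $R^t$ into $n$ pieces that she values equally, so each piece has value exactly $\frac{1}{n}v_r(R^t)$ according to her. In the \texttt{Selection} phase each agent $a_i \in D$ removes $\ell_i+1$ of these pieces from the common pool $\mathcal{X}$, and since $\sum_{a_i \in D}(\ell_i+1) = |D| + \sum_i \ell_i$ exceeds the simple count $|D|$, I must be careful about the bookkeeping here; the cleaner route is to note that exactly one piece remains in $\mathcal{X}$ after all agents in $D$ have selected (because the $n$ agents other than $a_r$ consist of $|D|$ children together with $\sum_{a_i \in D}\ell_i$ leaves, so the $\ell_i+1$ selected pieces sum over $a_i \in D$ to $n-1$ pieces), and this last piece is added directly to $A_r$ in Step~7.

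Next I would argue that the new residue $R^{t+1}$ is formed \emph{only} from the trimmings produced in the \texttt{Trimming} phase (Step~8 resets $R \gets \emptyset$, and Step~11's \textsc{Trim} writes its residue into $R$). Crucially, the $|D|+1$ pieces consisting of the single leftover piece added to $A_r$ in Step~7 together with one untrimmed piece $X_0^{(i)}$ per agent $a_i \in D$ (the $\argmin$ piece protected in Step~10) are \emph{never} returned to the residue. Since $a_r$ values each of the $n$ original pieces at exactly $\frac{1}{n}v_r(R^t)$, and the trimmings only shrink pieces further, the total $v_r$-value that can possibly flow into $R^{t+1}$ is at most the value of the pieces that do get trimmed. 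The key counting step is therefore: at least $|D|+1$ of the $n$ pieces contribute nothing to the residue — namely the $a_r$-leftover piece and the $|D|$ protected pieces $X_0^{(i)}$ — so $R^{t+1}$ draws from at most $n-(|D|+1)$ pieces, each of $v_r$-value at most $\frac{1}{n}v_r(R^t)$.

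Combining these observations gives
\begin{align*}
v_r(R^{t+1}) \;\le\; \bigl(n-(|D|+1)\bigr)\cdot \frac{1}{n}\,v_r(R^t) \;=\; \left(1-\frac{|D|+1}{n}\right) v_r(R^t),
\end{align*}
which is exactly the claimed bound. I expect the main obstacle to be the careful accounting in the \texttt{Trimming} phase: one must verify that for \emph{each} active trimmer $a_i \in Tr$ the protected bundle $X_0^{(i)}$ is genuinely excluded from the residue (Step~10 sets it aside before \textsc{Trim} runs), and simultaneously that the equalizer agents in $D\setminus Tr$ contribute nothing new to $R$ either (their \texttt{Equaling} phase redistributes their selected pieces internally without touching $R$, since $R$ was already emptied in Step~8 and only the active trimmers write to it). Establishing that precisely $|D|+1$ pieces are permanently withheld from the residue per round — one for $a_r$ and one per child in $D$ — is the crux, and once that count is pinned down the displayed inequality follows immediately from $a_r$'s equal-division in Step~4 and additivity of $v_r$.
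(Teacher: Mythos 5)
Your proposal is correct and follows essentially the same route as the paper's proof: both count at least $|D|+1$ of the $n$ equal pieces (each worth $\frac{1}{n}v_r(R^t)$ to $a_r$) that can never contribute to the residue --- the leftover piece added to $A_r$, the protected piece $X_0^{(i)}$ for each active trimmer, and the fact that equalizers generate no trimmings --- and conclude $v_r(R^{t+1})\le\bigl(1-\frac{|D|+1}{n}\bigr)v_r(R^t)$. Your bookkeeping of the selection phase ($\sum_{a_i\in D}(\ell_i+1)=n-1$, one piece left for $a_r$) is if anything slightly more explicit than the paper's.
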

\begin{proof}
Any round $t$ of the while-loop begins with the root agent dividing the residue $R^t$ into $n$ equal pieces, each of value $v_r(R^t)/n$. 

Recall that $D$ is the set of the neighbours of $a_r$. We will prove that there are at least $|D|+1$ whole pieces that do not generate any residue. The root agent chooses one whole piece. For each agent $a_i\in D$, that is still a trimmer agent, she would reserve a whole piece for $A_0^{(i)}$. Otherwise, for the remaining agents in the set $D$, there is no residue in the \texttt{Equaling} phase. Hence, each agent $i\in D$ and the root agent keep at least one whole piece without generating any residue. Thus, there are at least $\frac{|D|+1}{n}$ proportion less residue towards the end of round $t$ of the while-loop. Therefore, it follows that  $v_r(R^{t+1})\le(1-\frac{|D|+1}{n})v_r(R^t)$.
\end{proof}

\begin{lemma}\label{lemma:domination}
Consider any round $t$ of the while-loop in \textsc{Alg2}. According to the root agent $a_r$, the total value of all the pieces selected by any agent $a_i \in D\setminus Tr$ in rounds $\ell \geq t$ is at most $\min\{\frac{\ell_i+1}{|D|+1},1\}\cdot v_r(R^t)$. 
\end{lemma}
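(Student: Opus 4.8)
The plan is to bound the claimed total in two complementary ways and take the smaller of the two, which is exactly what produces the $\min$ in the statement.

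First I would observe that the selection step (Step~5) is executed by every agent of $D$ in \emph{every} round, so in each round $\ell \ge t$ agent $a_i$ picks exactly $\ell_i+1$ pieces, irrespective of whether she is a trimmer or an equalizer at that point. The crucial point is that at the start of round $\ell$ the root agent cuts the current residue $R^\ell$ into $n$ pieces that are equal \emph{according to $v_r$}; hence every such piece has $v_r$-value exactly $v_r(R^\ell)/n$, and this value is independent of which $\ell_i+1$ pieces $a_i$ actually selects using her own valuation $v_i$. Consequently the root agent's value for the pieces selected by $a_i$ in round $\ell$ is precisely $(\ell_i+1)\,v_r(R^\ell)/n$, and summing over rounds gives a total of $\frac{\ell_i+1}{n}\sum_{\ell \ge t} v_r(R^\ell)$.

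Next I would control this sum using the geometric decay of the residue established in Lemma~\ref{lemma:decreaseinround}. Iterating that bound yields $v_r(R^\ell) \le \big(1-\tfrac{|D|+1}{n}\big)^{\ell-t} v_r(R^t)$ for every $\ell \ge t$, so that
\[
\sum_{\ell \ge t} v_r(R^\ell) \;\le\; v_r(R^t)\sum_{m \ge 0}\Big(1-\tfrac{|D|+1}{n}\Big)^m \;=\; \frac{n}{|D|+1}\,v_r(R^t).
\]
Substituting this into the expression above gives the first bound, namely $\frac{\ell_i+1}{|D|+1}\,v_r(R^t)$. (The sum is in fact finite, since the while-loop terminates, so bounding it by the full geometric series is a valid overestimate.)

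For the second bound, $1\cdot v_r(R^t)$, I would argue by disjointness. The residues form a nested chain $R^{t+1}\subseteq R^{t}$, because each next residue is carved only from trimmings of the current one; moreover the pieces $a_i$ selects in round $\ell$ come from the fresh $v_r$-equal partition of $R^\ell$ and are therefore disjoint from the trimmings forming $R^{\ell+1}$, hence disjoint from everything selected in all later rounds. Thus all pieces selected by $a_i$ across rounds $\ell \ge t$ are pairwise disjoint subsets of $R^t$, and by additivity of $v_r$ their total value is at most $v_r(R^t)$. Taking the minimum of the two bounds establishes the lemma. I do not anticipate a genuine obstacle here, as this is a routine geometric-series estimate; the only points needing care are (i) recognizing that the per-piece $v_r$-value is exactly $v_r(R^\ell)/n$ regardless of $a_i$'s idiosyncratic choice, which decouples the count from $a_i$'s preferences, and (ii) the nestedness/disjointness bookkeeping that justifies the trivial $v_r(R^t)$ bound and hence the appearance of the $\min$.
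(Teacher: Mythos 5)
Your proposal is correct and takes essentially the same route as the paper's own proof: the paper likewise observes that $a_i$ selects $\ell_i+1$ pieces per round, each worth $v_r(R^\ell)/n$ to the root, and sums the geometric series driven by Lemma~\ref{lemma:decreaseinround} to obtain the $\frac{\ell_i+1}{|D|+1}\cdot v_r(R^t)$ bound, while the trivial bound $v_r(R^t)$ comes from the fact that an equalizer generates no residue, so her selections across rounds are disjoint subsets of $R^t$. Your disjointness bookkeeping just spells out this last step more explicitly than the paper does (note it relies on $a_i$ remaining an equalizer in all rounds $\ell\ge t$, which holds since agents never re-enter $Tr$).
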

\begin{proof}
Consider some round $t$ of the while-loop in $\textsc{Alg2}$. Fix any agent $a_i \in D \setminus Tr$. She executes the \texttt{Equaling} procedure and does not generate any residue. Therefore, it follows trivially that according to $a_r$, the total value of all the pieces selected by $a_i$ in round $\ell \geq t$ is at most $R^t$.

Next, at the beginning of round $t$, $a_r$ divides the residue $R^t$ into $n$ equal pieces according to her and
$a_i$ selects $\ell_{i}+1$ out of them. By Lemma \ref{lemma:decreaseinround}, we know that $v_r(R^{t+1})\le(1-\frac{|D|+1}{n}) \cdot v_r(R^t)$. Therefore, the total value of all the pieces selected by $a_i$ in the subsequent rounds (including round $t$) is at-most
%\textcolor{red}{ $$\frac{\ell_{i}+1}{n}\sum_{j=0}^{\infty}\left(1-\frac{|D|+1}{n}\right)^j v_r(R^j)=\frac{\ell_i+1}{|D|+1}\cdot v_r(R^j)$$}
$$\frac{\ell_{i}+1}{n}\sum_{m=t}^{\infty}\left(\left(1-\frac{|D|+1}{n}\right)^{m-t} \cdot v_r(R^t)\right)\leq\frac{\ell_i+1}{|D|+1}\cdot v_r(R^t).$$
This proves the stated claim.
\end{proof}
We now establish that {\sc Alg2} finds locally envy-free allocation among $n$ agents on   \dTwoTree.
\Trees*

\begin{proof} We begin by proving the correctness of \textsc{Alg2}, by arguing that each type of agent is locally envy-free.\\
%The set $Tr$ is all agents who are active in the trimming part. In the beginning, we have $Tr=D$, i.e., all agents who are directly connected to the root. 
%We call the set of trimmer agents as $Tr$, that contains all the agents in $D$, in the beginning. 
%The algorithm progressively removes agents from $Tr$ by checking a dominance condition (stated formally later). As long as there exists some trimmer agent, Lines 4- 25 are executed. 
% We will argue envy-freeness for the three agent-types: the root agent ($a_r$), neighbour agents of $a_r$ in the set $D$, and the leaf agents $\left(\bigcup_{a_i\in D}L(i)\right)$ separately.  \\
 
 \noindent
 (a) \textbf{Root agent:} In each round of the while-loop, root agent $a_r$ divides the current residue into $n$ equal pieces, denoted by the set $\mathcal{X}$. For any agent $a_i\in Tr$, the piece added to any bundle $A^{(i)}_k \in \mathcal{A}^{(i)}$ is a subset of a piece from $\mathcal{X}$. Therefore, $a_r$ will not value any bundle $A^{(i)}_k$ (for $k \geq 0)$ larger than her own bundle $A_r$.
        When $a_r$ starts dominating an agent $a_i \in Tr$, then it is removed from $Tr$ and is therefore in the set $D\setminus Tr$. For $k\ge 1$, the bundles $A^{(i)}_k$ (for $k \geq 1$) must therefore satisfy the domination condition (line \ref{inalg-domination}) in the round in which agent $a_i$ is removed from the set $Tr$, i.e., $v_r(A_r)-v_r(A^{(i)}_k)\ge\min\{\frac{\ell_i+1}{|D|+1},1\} \cdot v_r(R)$.
    Note that if $v_r(A_r)-v_r(A^{(i)}_k)\ge v_r(R)$, no matter how the residue is allocated, the root agent will not value $A^{(i)}_k$ higher than $A_r$. On the other hand, if $v_r(A_r)-v_r(A^{(i)}_k)\ge \frac{\ell_i+1}{|D|+1}\cdot v_r(R)$, Lemma \ref{lemma:domination} ensures that the root agent again prefers $A_r$ to $A^{(i)}_k$.    
    
    Finally, observe that the algorithm ensures that the bundle $A^{(i)}_0$ is a subset of a piece from $\mathcal{X}$ (see Steps~\ref{line:ai0}, \ref{inalg-asubpiece} and \ref{inalg-unionsub}). Therefore, the root agent will not envy any of her neighbours in the output allocation.

    \noindent (b) \textbf{Neighbour agents:} For any agent $a_i\in D$, every bundle $A^{(i)}_k \in \mathcal{A}^{(i)}$ is of equal value in the view of agent $a_i$. So no matter how the leaf agents (that are her neighbours) choose, agent $a_i$ will have no envy towards them. In the \texttt{selection} phase, agent $a_i$ chooses $\ell_i+1$ of her favorite pieces from $\mathcal{X}$, before the root agent. In each round of the while-loop, the increment for each bundle $A^{(i)}_k$ for $k \geq 0$ is as large as the increment in bundle $A_r$ in the view of agent $a_i$. So agent $a_i$ will not envy the root agent. Therefore, there is no envy for agent $a_i$ in the final allocation.
    
    \noindent (c) \textbf{Leaf agents:} Every leaf agent chooses her favorite bundle before the neighbour agent she is connected to.
    
Therefore, \textsc{Alg2} outputs a locally envy-free allocation among $n$ agents on a \textsc{Depth2Tree}. 

%The algorithm begins by the root agent $a_r$ dividing the cake $[0,1]$ into $n$ equal pieces $\mathcal{X}$.
%In the Selection phase, every agent $a_i\in D$ progressively selects $\ell_i+1$ pieces from $\mathcal{X}$ that are as large as what root agent $a_r$ gets. In the Trimming phase, from the view of any agent $a_i\in D$, each bundle $A_i^{(t)}$ will get a piece as large as the smallest one in the Selection Phase and this is as large as what root agent $a_r$ gets. So agents $a_i\in D$ would not envy root agent $r$. For root agent $r$, a bundle getting more than a whole piece must be a bundle dominated by agent $a_r$. So root agent would not envy any agent $a_i\in D$. Every leaf agent will choose before the neighbour agent. So leaf agent would not envy. From the view of any agent $a_i\in D$, each bundle $A_i^{(t)}$ is the same. So any agent $a_i\in D$ would not envy. The output will be an envy-free allocation. 

 \noindent
Next, let us analyse the query complexity of \textsc{Alg2}. We begin by proving that in every $O(n\log n)$ rounds, there must be an agent in the active trimmer set $Tr$ that the root agent starts dominating. The following claim serves as a prerequisite for proving the above statement. %In each round, among all the trimmed pieces, there is one largest piece $T^*$. 
\begin{claim}\label{lemma:rounds}
%Let $T^*$ be the largest trimmed piece according to the agent $a_r$ in a round of the while-loop. After $O(n\log n)$ rounds of while-loop, we have $v_r(R)\le v_r(T^*)$.
Let us denote $c_t:=\max_{a_i\in Tr, k\le \ell_i}\{\frac{v_r(R^t)}{n}-v_r(X^{(i)}_k)\}$ to be the maximum trimmed value in round  $t$ of the while-loop in \textsc{Alg2}. After $O(n\log n)$ rounds of the while-loop, we have $v_r(R^{t+O(n\log n)})\le c_t$.
\end{claim}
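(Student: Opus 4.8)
The plan is to mirror the argument of Claim~\ref{claim: residue decreasing}, but now the exponential-decay ingredient is already supplied by Lemma~\ref{lemma:decreaseinround}, which makes the proof shorter. Two facts suffice: first, a one-step bound $v_r(R^{t+1}) \le n\cdot c_t$, which says that the next residue is controlled by the maximum trimmed value of the current round; and second, the geometric contraction $v_r(R^{t+1}) \le (1-\tfrac{|D|+1}{n})v_r(R^t)$ of Lemma~\ref{lemma:decreaseinround}. Iterating the contraction long enough then lets it absorb the factor $n$ from the one-step bound.

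For the one-step bound I would first argue that the residue entering round $t+1$ is created entirely during the \texttt{Trimming} phase of round $t$: the equalizer agents in $D\setminus Tr$ produce no residue (as observed in the proof of Lemma~\ref{lemma:domination}), and $R$ is reset to $\emptyset$ in Step~8 before any trimming occurs. Each active trimmer $a_i\in Tr$ trims $\ell_i$ of her $\ell_i+1$ selected pieces down to her smallest one, so $R^{t+1}$ is the disjoint union of $\sum_{a_i\in Tr}\ell_i \le n$ trimmings. Since $a_r$ cut $R^t$ into $n$ equal pieces in Step~4, each original piece has $a_r$-value exactly $v_r(R^t)/n$, whence each trimming has $a_r$-value $\tfrac{v_r(R^t)}{n} - v_r(X^{(i)}_k) \le c_t$ by the definition of $c_t$. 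Summing over the at most $n$ trimmings gives $v_r(R^{t+1}) \le n\cdot c_t$.

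It then remains to iterate Lemma~\ref{lemma:decreaseinround}. Writing $s$ for the number of further rounds and using $|D|+1\ge 1$, I obtain
\[
v_r(R^{t+1+s}) \le \Big(1-\tfrac{|D|+1}{n}\Big)^{s} v_r(R^{t+1}) \le \Big(1-\tfrac{1}{n}\Big)^{s}\cdot n\cdot c_t .
\]
Choosing $s=\lceil n\ln n\rceil$ and applying the standard estimate $(1-\tfrac1n)^{n\ln n}\le e^{-\ln n}=\tfrac1n$ forces the leading factor $(1-\tfrac1n)^{s}\,n\le 1$, so that $v_r(R^{t+O(n\log n)})\le c_t$, which is exactly the claim.

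The only genuinely delicate point is the one-step bound, and specifically two bookkeeping facts: that no residue survives from earlier rounds or from the equalizer agents, and that every trimming is measured against an original piece of $a_r$-value precisely $v_r(R^t)/n$, which is what couples each trimmed amount to $c_t$. Both follow directly from the loop structure (Steps~4 and~8 together with the \textsc{Trim} and \textsc{Equal} procedures), and everything after that is the same geometric-series estimate used in Claim~\ref{claim: residue decreasing}.
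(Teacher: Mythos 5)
Your proof is correct and follows essentially the same route as the paper's: both establish the one-step bound $v_r(R^{t+1})\le n\cdot c_t$ from the at most $n$ trimmings of value at most $c_t$ each, and then iterate the geometric contraction of Lemma~\ref{lemma:decreaseinround} for $O(n\log n)$ rounds to absorb the factor $n$. The only (harmless) differences are that you work out the bookkeeping behind the one-step bound in more detail and use the slightly weaker rate $(1-\tfrac1n)$ where the paper uses $(1-\tfrac2n)$ via $|D|+1\ge 2$; both yield the same $O(n\log n)$ conclusion.
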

\begin{proof}
%We first prove that, after each round, we have that the residue will decrease by a factor $1-2/n$.
%In a depth at most 2 tree, there are at most $n-2$ leaves in the depth 2. 
%Let us fixed $j$ to be the index for a particular round. 
As there are at most $n$ trimmed pieces and $c_t$ is the maximum trimmed value in round $t$ of the while-loop, we have $v_r(R^{t+1})\le n \cdot c_t$.

As $|D|+1\ge 2$, by Lemma \ref{lemma:decreaseinround}, the residue satisfies the inequality $v_r(R^{d+1})\le (1-2/n)v_r(R^d)$ for any two consecutive rounds $d$ and $d+1$. Therefore, after $O(n\log n)$ rounds, we obtain the desired bound on $a_r$'s value for the residue $v_r(R^{t+O(n\log n)})\le n\cdot c_t \cdot (1-2/n)^{O(n\log n)}\le c_t$.
%At the beginning of a round, we have $v_r(U)=n\cdot v_r(X')\ge n\cdot v_r(T^*)$.
%Then by 
%The number of trimmed pieces is less than $n$. In the current round, $T^*$ is the largest trimmed piece. Then in the next round,  we have $v_r(U)<n\cdot v_r(T^*)$. After $k$ rounds, we have $v_r(R)<n(1-2/n)^k v_r(T^*)$. In $O(n\log n)$ rounds, we have $v_r(R)< v_r(T^*)$.
%In the round that $T^*$ is the largest trimmed piece, we have $v_r(U)\le n v_r(T^*)$ in the next round. After $k$ rounds So after $O(n\log n)$ rounds, we have $v_r(U)\le v_r(T^*)$.
\end{proof}

%Notice that in the line 16 of $\textsc{Alg2}$, we need to dominate $l_i$ bundles to dominate agent $a_i$. 
%\begin{lemma}
%After $n$ rounds, at least one of agent $a_i\in D$ such that, in at least $l_i$ rounds, the largest trimmed piece is trimmed by agent $a_i$. 
%\end{lemma}
%\begin{proof} In each round, we will have one largest trimmed piece. Suppose that there is no such agent. Then there are at most $\sum_{a_i\in D}l_i-1<n$ largest trimmed pieces. We get a contradiction. There must be one of such agent.
%By pigeon hole principle. %\gan{I think its after $l_i$ rounds and not $n$ rounds. Also, can we use $\ell$ instead of $l$ and use $a_i$ instead of $i$?}
%No. $l(i)$ rounds are not enough. It could split into different agents in set $D$.
%We need at least $n-|D|$ rounds. There are $n-|D|-1$ leaves
%\end{proof}

\begin{lemma} \label{lemma:domination}
If $Tr\neq\emptyset$, then after $O(n\log n)$ rounds, there is an agent $a_i\in Tr$ that gets removed from the set, i.e., the root agents starts dominating $a_i$.
%Let $a_i\in Tr$ be an agent who has trimmed largest piece in at least  $l_i$ rounds. Then, after $O(n\log n)$ rounds, agent $a_i$ will be dominated 
\end{lemma}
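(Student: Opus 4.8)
The plan is to show that the domination test in the \texttt{Judging Domination} step must fire for at least one active trimmer within $O(n\log n)$ rounds, by tracking, for each agent $a_i \in Tr$, the \emph{deficit} $d_k^{(i)} := v_r(A_r) - v_r(A_k^{(i)})$ of each of her trimmed bundles $A_k^{(i)}$ (for $k \ge 1$). Since the threshold $\min\{\tfrac{\ell_i+1}{|D|+1},1\}\cdot v_r(R)$ never exceeds $v_r(R)$, it suffices to exhibit an agent all of whose trimmed-bundle deficits reach the current residue value $v_r(R)$. The first ingredient is a monotonicity statement in the spirit of Claim~\ref{claim: monotonically increasing}: in every round $t$ the root bundle $A_r$ absorbs one whole piece of value $v_r(R^t)/n$, whereas each $A_k^{(i)}$ with $k\ge 1$ absorbs a trimmed piece of value at most $v_r(R^t)/n$, so every deficit $d_k^{(i)}$ is non-decreasing across rounds.

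The second, and decisive, ingredient exploits the balancing rule in Steps~\ref{inalg-argmin}--\ref{inalg-cupmax}: the $v_r$-smallest trimmed piece is always handed to the $v_r$-largest bundle, i.e.\ to the bundle with the currently \emph{smallest} deficit, whose deficit therefore jumps by the round's maximal trim $c_t^{(i)} := \max_{k}\bigl(v_r(R^t)/n - v_r(X_k^{(i)})\bigr)$. I would then argue that the minimum deficit $m_t^{(i)} := \min_{k\ge 1} d_k^{(i),t}$ strictly climbs: over any window of $\ell_i$ consecutive rounds it increases by at least the window-minimum of the values $c_t^{(i)}$. The mechanism is that any bundle receiving the maximal boost leaves the minimum-deficit level and, by monotonicity, never returns to it; since there are only $\ell_i$ trimmed bundles, after at most $\ell_i$ rounds the entire minimum level has been vacated, so $m_t^{(i)}$ must have risen. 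This is the analogue, for the $\argmax$-based balancing rule of \textsc{Alg2}, of the rotation-based bookkeeping used for the \textsc{Line} protocol.

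To close, I would combine the growth of $m_t^{(i)}$ with the two quantitative facts already established: the exponential decay $v_r(R^{t+1}) \le (1-\tfrac{|D|+1}{n})\,v_r(R^t)$ of Lemma~\ref{lemma:decreaseinround}, and Claim~\ref{lemma:rounds}, which says that after $O(n\log n)$ rounds the residue drops below the maximal trim $c_t$ of round $t$. Proceeding exactly as in the run-time half of Lemma~\ref{lem:klogk complexity}, the accumulated minimum deficit of the agent that realizes the dominant trims eventually overtakes the shrinking residue; at that point every bundle of that agent satisfies $d_k^{(i)} \ge v_r(R) \ge \min\{\tfrac{\ell_i+1}{|D|+1},1\}\,v_r(R)$, so the test in Step~\ref{inalg-domination} succeeds and the agent is deleted from $Tr$.

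The main obstacle is precisely that domination must hold for \emph{all} trimmed bundles of an agent at once, so the binding quantity is the minimum deficit $m_t^{(i)}$ rather than any single bundle's deficit; and because the remaining (non-minimal) trimmed pieces are distributed arbitrarily, one can only guarantee that $m_t^{(i)}$ grows at the rate of the \emph{window-minimum} trim, whereas the residue is governed by the \emph{window-maximum} trim. Reconciling these two rates---ensuring the slowly growing minimum deficit still catches the geometrically shrinking residue inside an $O(n\log n)$-round window---is the technical heart, and I would handle it with the same refined geometric estimate (comparing a heavily-trimming round with a lightly-trimming one and bounding the ratio of their trims) that underlies the detailed version of the \textsc{Line} run-time analysis.
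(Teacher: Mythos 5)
Your scaffolding (monotone deficits $d_k^{(i)}$, the balancing rule of Steps \ref{inalg-argmin}--\ref{inalg-cupmax} sending the most-trimmed piece to the bundle of smallest deficit, and the residue decay of Lemma \ref{lemma:decreaseinround} together with Claim \ref{lemma:rounds}) matches the paper's, but the decisive step is missing, and the quantitative bound you do state is too weak to close it. Your growth estimate says the minimum deficit $m_t^{(i)}$ of a fixed agent rises, per window of $\ell_i$ rounds, by the \emph{window-minimum} of her own per-round maximal trims $c_t^{(i)}$. Nothing prevents $c_t^{(i)}=0$ in some round: agent $a_i$ may select $\ell_i+1$ pieces she already values equally, in which case \textsc{Trim} removes nothing, the window-minimum---and hence your guaranteed growth---is zero, while Claim \ref{lemma:rounds} controls the residue only in terms of the \emph{global} maximum trim $c_t$, which may be realized entirely by other agents. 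You acknowledge exactly this mismatch and defer its resolution to an unspecified ``refined geometric estimate''; that deferred step is the whole content of the lemma, so as written the proposal has a genuine gap. Relatedly, ``the agent that realizes the dominant trims'' is not well defined: over an $O(n\log n)$ window the global maximum trim can migrate among agents, and you give no rule selecting the agent that actually gets removed from $Tr$.

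The paper closes both holes with one combinatorial move that avoids any rate comparison altogether. Record, in each round $h$, the pair $(i^h,k^h)$ of the agent and bundle index receiving the globally maximal trim $c_h$; by Steps \ref{inalg-argmin}--\ref{inalg-cupmax} that bundle's deficit becomes at least $c_h$ and, by monotonicity, stays there. Since $\sum_{a_i\in D}\ell_i\le n$, the pigeonhole principle over $n+1$ consecutive rounds yields an agent $a_{i'}$ appearing at least $\ell_{i'}$ times. Either every index $k\in\{1,\dots,\ell_{i'}\}$ occurs, so each of her bundles individually carries a deficit of at least some in-window $c_h$; or some index $k'$ occurs twice, at rounds $h_1<h_2$, and at round $h_2$ the bundle $A^{(i')}_{k'}$ is again the $v_r$-largest among $a_{i'}$'s bundles, which forces \emph{all} of her bundles to have deficit at least $c_{h_1}$ in one shot---no growth-rate argument, no comparison of window-minimum against window-maximum trims. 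Claim \ref{lemma:rounds} then drives $v_r(R)$ below these thresholds within $O(n\log n)$ further rounds, and since the test in Step \ref{inalg-domination} uses the threshold $\min\{\frac{\ell_{i'}+1}{|D|+1},1\}\cdot v_r(R)\le v_r(R)$, agent $a_{i'}$ is removed. If you want to salvage your minimum-deficit route, this ``same bundle boosted twice at the global maximum'' observation is precisely the estimate you would need in place of the missing ratio bound.
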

\begin{proof}
Recall that, \textsc{Alg2} removes an agent $a_i$ from the set $Tr$ when the root agent starts dominating her; see the domination condition (\ref{domcond}). Consider round $t$ of the while-loop in \textsc{Alg2} with residue $R^t$ at its beginning.
Let us write $c_t:=\max_{i\in Tr, k\le \ell_i}\{\frac{1}{n}v_r(R^t)-v_r(X^{(i)}_k)\}$ to denote the maximum trimmed value in this round. Let $(i^j,k^j)$ be the pair such that $\frac{1}{n}v_r(R^j)-v_r(X^{(i^j)}_{k^j})=c_t$.
 Since $\sum_{a_i \in D} \ell_i \leq n$, therefore, by the pigeon hole principle, between the rounds $j$ to $j+n$, there must be an agent $a_{i'} \in Tr$ that appears at least $\ell_{i'}$ many times in the pairs $\{(i^h,k^h)\}_{j\le h\le j+n}$.

We will prove that agent $a_{i'}$ will be removed in round $t+O(n\log n)$. Let us define $K=\{k\mid (i',k)=(i^h,k^h)\text{ for some round } t\le h\le t+n\}$ to be the set of indices of the bundles of agent $a_{i'}$ that corresponds to the maximum trimmed value in rounds $t$ to $t+n$. By Lemma \ref{lemma:rounds}, it follows that for any $k\in K$, the root agent will start dominating the piece $A^{(i')}_k$ after $O(n\log n)$ rounds (from round $t$). If $\{1, \dots, \ell_{i'}\} \subseteq K$, then all the bundles $\{A^{(i')}_k\}_{1\le k\le \ell_{i'}}$ are dominated by the root agent, and therefore we obtain that the root agent dominates agent $a_{i'}$ after $O(n\log n)$ rounds.

Otherwise, there exists some index between $1$ to $\ell_{i'}$ that is missing in $K$. This implies that there exists some $k'\in K$ that appears at least twice in pairs $\{(i^h,k^h)\}_{t\le h\le t+n}$. Let $h_1<h_2$ be two indices such that $(i^{h_1},k^{h_1})=(i^{h_2},k^{h_2})=(i',k')$. After round $h_1$, we have $v_r(A_r)-v_r(A^{(i')}_{k'})\ge c_{h_1}$. In  round $h_2$, since the index $k'$ gets picked again, we have $v_r(A^{(i')}_{k})\le v_r(A^{(i')}_{k'})$ for $1\le k\le \ell_{i'}$ (by Steps \ref{inalg-argmin} and \ref{inalg-cupmax}). Therefore, we obtain that the difference $v_r(A_r)-v_r(A^{(i')}_{k})\ge c_{h_1}$ for all $k$ in round $h_2$. Hence, by Lemma \ref{lemma:rounds}, all of the pieces will be dominated after $O(n\log n)$ rounds, proving the stated claim.
%In this proof, all valuations are referred to the valuation of agent $a_r$.
%For $t>0$, if every bundle $A_i^{(t)}$ gets one of $X_{k^*}^{(i)}$ such that the trimmed part is the largest among all trimmed piece, then by lemma \ref{lemma:rounds} we have the statement. Otherwise, there is at least one bundle $A_i^*$ gets at least twice of $X_{k^*}^{(i)}$ such that  the trimmed part of is the largest among all trimmed pieces. This implies that bundle $A_i^*$ is the largest bundle among all $A_i^{(t)}$ for $t>0$. Suppose that at  the first time that $A_i^*$ gets $X_{k^*}^{(i)}$,  the largest trimmed piece is $T^*$ which is corresponding  to $X_{k^*}^{(i)}$. When the second time  it happens, all bundles $A_i^{(t)}$ will be less than $A_r$ by at least $v_r(T^*)$. Because  bundle $A_i^*$ is the largest bundle among all $A_i^{(t)}$. Then by lemma \ref{lemma:rounds} we have the statement. 
\end{proof}

The above lemma therefore proves that, in every  $O(n\log n)$ rounds, the size $|Tr|$ of the active trimmer agents will decrease at least by one. Hence, the while-loop will end after $O(n^2\log n)$ many rounds. By Lemma \ref{lemma:O(n)cuts}, we know that each round of the while-loop requires $O(n)$ cuts. Hence, we conclude that \textsc{Alg2} requires we have a total of $O(n^3\log n)$ \cut \ queries. For the \eval \ queries, the worst case is that each agent evaluates all the contiguous pieces. Because there are $O(n^3\log(n))$ cuts, there are $O(n^3\log(n))$ contiguous pieces. Hence, the algorithm makes $O(n^4\log(n))$ \eval~queries. Therefore, \textsc{Alg2} finds a locally envy-free allocation among $n$ agents on a \dTwoTree  \ using the stated number of queries, therefore completing the proof.
\end{proof}

\subsection{Local Envy-freeness on   \textsc{Star} Graphs} \label{sec:star}

In this section, we analyse our protocol (\textsc{Alg2}) developed in the previous section for a special case of graph structure 2-$\textsc{Star}$ graphs (\dTwoTree\ where every non-root agent is connected to at most two agents). We will show that it only requires $O(n^2)$ \cut \ queries and $O(n^3)$ \eval \ queries to find a locally envy-free allocation among $n$ agents on a  2-\textsc{Star} graph (see Theorem \ref{thm:2star}). 
%In Section \ref{ssec:lowerbound}, we complement our algorithmic result by establishing a query lower bound of $\Omega(n^2)$ queries for finding locally envy-free allocation for a \textsc{Star} graph (under the assumption where the root agent partitions the cake into $n$ connected pieces).  
\begin{figure}[ht!]
  \centering
    \includegraphics[width=0.4\textwidth]{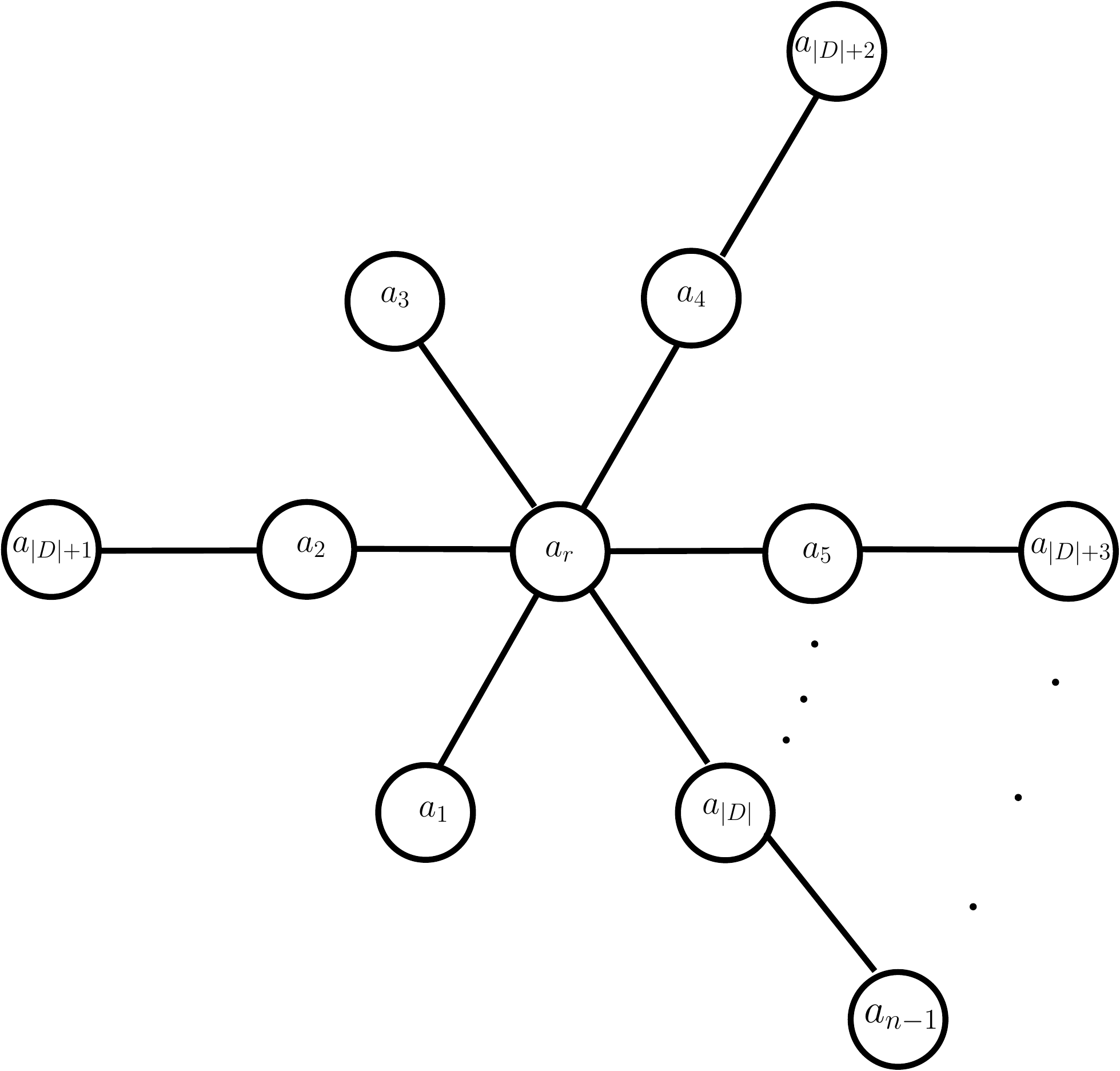}
  \caption{Illustration of a \textsc{2-Star} graph}
    \label{fig:twostar}
\end{figure}

%We begin by defining a $k$-\textsc{Star} graph. 
%\begin{definition}[$k$-\textsc{Star} graph] 
%A graph $G = (V,E)$ is called a $k$-\textsc{Star} graph if the following three conditions hold: (1) $G$ contains no cycles, (2) there exists a (root) vertex $a_r$ such that every other node is at-most  $k$ hops away from $a_r$, and (3) the degree of each vertex except $a_r$ is  at-most 2. 
%\end{definition}

 First, observe that a simple cut and choose protocol produces a locally envy-free allocation for  \textsc{Star} graphs. The root agent cuts the cake into $n$ equal pieces and the remaining agents pick their favourite (available) piece one after the other while the last remaining piece is allocated to the root agent. It is easy to see that this protocol  produces a locally envy-free allocation with $n-1$ \cut\  and $O(n^2)$ \eval\ queries. 
%We begin by extending the techniques of Sec. \ref{sec:line} to a generalized algorithm for $n$ agents on a star graph. We develop a protocol that finds a locally envy-free allocation with $O(n^2)$ cuts, for a generalization of star graph, that we call as 2-\textsc{Star} graph.
%Then, we  extend our technique to $n$ agents on  acyclic graph with a longest path length   at-most  5, which we call 2-\textsc{Star} graph.
%Note that every line graph with length at-most 5 is also a 2-\textsc{Star} graph. Similar to the \gan{TODO!!}
%In contrast with the \textsc{Star} graph, the efficient algorithm for  \textsc{Line} graph structure does not easily generalize to arbitrarily large values of $n$. 
%\gan{Write formal definition of two star}
%Next, we develop lower bound for finding locally envy-free allocation on star graphs (on $n$ agents) when each agent must receive a single connected piece.

Now, let us focus on 2-\textsc{Star} graphs. Using the exact same protocol (\textsc{Alg2}) as used in Section~\ref{sec:depth2} for \dTwoTree, we will show a significantly lower query complexity for \twoStar\ graphs. This lower query complexity for  $2$-\textsc{Star} graphs can be attributed to the following two structural facts: (a)  a 2-\textsc{Star} graph has a large number of trimmer agents ($|D| \geq \frac{n-1}{2}$), which makes the residue decrease faster, and (b) the  trimmer agents  are connected to at-most one leaf agent ($\ell_i \leq 1$). We state and prove the main result of this section.
\TwoStar*

\begin{proof}
We direct the reader to refer to  the proof of   Theorem \ref{thm:depth2tree} for  local envy-freeness guarantee of   \textsc{Alg2} on  \textsc{2-Star} graph as it is a special case of \textsc{Depth2Tree}. We will analyze the number of \cut\ and \eval\ queries required by \textsc{Alg2} for finding a locally envy-free allocation in a 2-\textsc{Star} graph. We begin by proving that there are $O(n)$ many cuts made on the cake in any execution of the while-loop and  then, we will show that the while-loop ends in $n$ rounds.

 %Let $m{pro}=|M|$ be the number of agents in the set $M$. We have that $n\ge 2m+1$. In each round of while loop, there are at most $n-1+m$ cuts. To bound the number of cuts, we prove that the while loop will be executed at most $m+1$ rounds. 

\begin{lemma}\label{lemma:stopn}
When the set of trimmer agents is non-empty ($Tr\neq\emptyset$), at least one agent gets removed from the set $Tr$ in every two rounds of the while-loop.
\end{lemma}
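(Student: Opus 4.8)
The plan is to turn the generic $O(n\log n)$-per-removal bound of Lemma~\ref{lemma:domination} into an $O(1)$-per-removal bound by exploiting two features peculiar to $\twoStar$ graphs. First, since every $a_i\in D$ is adjacent to at most one leaf, we have $\ell_i\le 1$, and counting vertices gives $n=1+|D|+\sum_{a_i\in D}\ell_i\le 1+2|D|$, i.e. $|D|\ge (n-1)/2$. Substituting $|D|+1\ge (n+1)/2$ into Lemma~\ref{lemma:decreaseinround} yields $v_r(R^{t+1})\le\bigl(1-\tfrac{|D|+1}{n}\bigr)v_r(R^t)<\tfrac12 v_r(R^t)$, so the residue at least halves every round. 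Second, I would record that exactly one whole piece is added to $A_r$ in each round: the internal agents collectively select $\sum_{a_i\in D}(\ell_i+1)=|D|+\sum_i\ell_i=n-1$ of the $n$ equal pieces, leaving precisely one for the root. Finally, any agent with $\ell_i=0$ has no bundle $A^{(i)}_k$ with $k>0$, so its test in Step~\ref{inalg-domination} is vacuously true and it leaves $Tr$ in the first round; hence in any round that removes nobody, every surviving trimmer has $\ell_i=1$ (and for such agents $\min\{\tfrac{\ell_i+1}{|D|+1},1\}=\tfrac{2}{|D|+1}$).

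Next I would set up a gap invariant for each surviving trimmer $a_i$ (so $\ell_i=1$, with its lone trimmed piece $X^{(i)}_1$ in round $s$). Writing $\delta^s_i:=\tfrac{v_r(R^s)}{n}-v_r(X^{(i)}_1)\ge 0$ for the root-value trimmed off $a_i$ in round $s$, and using the one-whole-piece fact, the quantity $g^s_i:=v_r(A_r)-v_r(A^{(i)}_1)$ measured at round $s$'s domination check grows by exactly $\delta^s_i$ each round; since it starts at $0$ it satisfies $g^s_i=\sum_{u\le s}\delta^u_i$ and is nondecreasing. The second bookkeeping identity I would use is that all residue comes from trimmings, so in a round with trimmer set $S$ (all $\ell=1$) one has $\sum_{i\in S}\delta^s_i=v_r(R^{s+1})$, equalizer agents contributing nothing.

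The core of the argument is a two-round averaging contradiction. Suppose for contradiction that no agent is removed in two consecutive rounds $t,t+1$; then $Tr$ stays equal to some set $S$ of $\ell=1$ agents with $m:=|S|\le|D|$. Failure of domination at round $t+1$'s check (which compares against the residue $R^{t+2}$ produced in that round) gives, for every $i\in S$, $\;\delta^t_i+\delta^{t+1}_i\le g^{t+1}_i<\tfrac{2}{|D|+1}v_r(R^{t+2})$. Summing over $i\in S$ and applying the residue identity to each of the two rounds yields
\[
v_r(R^{t+1})+v_r(R^{t+2})=\sum_{i\in S}\bigl(\delta^t_i+\delta^{t+1}_i\bigr)<\frac{2m}{|D|+1}\,v_r(R^{t+2})\le\frac{2|D|}{|D|+1}\,v_r(R^{t+2})<2\,v_r(R^{t+2}).
\]
But $v_r(R^{t+1})\ge v_r(R^{t+2})$ forces the left side to be at least $2v_r(R^{t+2})$, a contradiction once $v_r(R^{t+2})>0$; and if $v_r(R^{t+2})=0$ the residue is exhausted, so the round-$(t{+}1)$ test reduces to $g\ge 0$ and removes everyone. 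Either way some agent leaves $Tr$ within the two rounds, which is the claim. The step I expect to require the most care is pinning down the exact indexing of the residues $R^{t+1},R^{t+2}$ relative to the check at which domination is evaluated, together with verifying the clean identity $\sum_{i\in S}\delta^s_i=v_r(R^{s+1})$ (so that no residue is unaccounted for by equalizers or by the root's whole piece); everything else is bounding $m\le|D|$ and the halving estimate, which are immediate from the structural facts above.
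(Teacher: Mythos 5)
Your proof is correct, but it reaches the two-round bound by a genuinely different route than the paper. The paper's argument is extremal: it fixes the maximum trimmed value $c_t:=\max_{a_i\in Tr}\{\frac{v_r(R^t)}{n}-v_r(X^{(i)}_1)\}$ in round $t$ and the agent $a_{i^*}$ achieving it, notes that her gap satisfies $v_r(A_r)-v_r(A_1^{(i^*)})\ge c_t$, bounds the next residue by $v_r(R^{t+1})\le\frac{n}{2}c_t$ (at most $n/2$ pieces get trimmed, each trimming worth at most $c_t$ to the root), invokes the halving estimate from Lemma~\ref{lemma:decreaseinround} to get $v_r(R^{t+2})\le\frac{n}{4}c_t$, and then checks that the domination threshold $\frac{2}{|D|+1}v_r(R^{t+2})\le\frac{4}{n}v_r(R^{t+2})\le c_t$ is met, so $a_{i^*}$ in particular is removed at round $t{+}1$'s check. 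You instead run a global averaging argument: the exact accounting identity $\sum_{i\in S}\delta^s_i=v_r(R^{s+1})$ together with the cumulative-gap formula $g^{t+1}_i=\sum_{u\le t+1}\delta^u_i$ lets you sum the \emph{failed} domination tests over all surviving trimmers and derive $v_r(R^{t+1})+v_r(R^{t+2})<\frac{2|D|}{|D|+1}v_r(R^{t+2})<2v_r(R^{t+2})$, which contradicts mere residue monotonicity $v_r(R^{t+2})\le v_r(R^{t+1})$. Both accounting steps check out against the protocol: the root indeed gains exactly one whole piece of value $\frac{v_r(R^s)}{n}$ per round, an $\ell_i=1$ trimmer's bundle $A^{(i)}_1$ gains exactly her single trimmed piece, and equalizers generate no residue. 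What each approach buys: the paper's extremal argument reuses the decay machinery already developed for \dTwoTree\ and pinpoints a specific agent who gets removed, whereas your averaging argument needs strictly weaker inputs in the core step---only residue monotonicity and $|S|\le|D|$, never the halving lemma or the count of trimmed pieces---and it explicitly disposes of two corner cases the paper leaves implicit: trimmers with $\ell_i=0$, whose domination test quantifies over an empty set of indices $k>0$ and hence passes vacuously in the first round, and the degenerate case $v_r(R^{t+2})=0$, where the test trivially passes for every remaining trimmer. Both yield the same conclusion and hence the same $O(n)$ count of while-loop rounds used in the proof of Theorem~\ref{thm:2star}.
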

\begin{proof}
Consider a 2-\textsc{Star} graph with the root agent having $|D|$ many neighbours and $\ell$ leaf agents. That is, there are a total of $n = |D|+\ell+1$ agents. Each neighbour agent is connected to at most one leaf agent. Therefore, the total number of leaf agents, $\ell \leq |D|$ and hence $|D|+1 \geq n/2$.
Then, by Lemma \ref{lemma:decreaseinround}, the residue will decrease at least by half in each round. 

Consider some round $t$ of the while-loop, with $R^t$ residue at its beginning. We will identify a special trimmer agent related to $R^{t+1}$ who can be removed from the set $Tr$ at the end of round $t+1$ of the while-loop. Recall that any agent $a_i \in Tr$, she selects two pieces $X^{(i)}_0$ and $X^{(i)}_1$ in the \textsc{Selection} phase.\footnote{In Step $10$, we rename these pieces such that $v_i(X^{(i)}_0) \leq v_i(X^{(i)}_1)$.}
Let us denote $c_t=\max_{a_i \in Tr}\{\frac{v_r(R^t)}{n}-v_r(X^{(i)}_1)\}$ be the maximum trimmed value in round $t$. Let $a_i^* \in Tr$ be the agent who did the trim for the maximum trimmed value, i.e., after the trim, we have $\frac{v_r(R^t)}{n}-v_r(X^{(i^*)}_1)=c_t$. Since a piece of value $\frac{v_r(R^t)}{n}$ is added to the bundle $A_1^{(i^*)}$ and $X^{(i^*)}_1$ is added to the bundle $A_1^{(i^*)}$, therefore we have that the difference $v_r(A_r)-v_r(A_1^{(i^*)})\ge c_t$.

Next, we will prove that the residue at the beginning of round $t+1$ is $v_r(R^{t+1}) \le\frac{n}{2}\cdot c_t$. Note that there are at most $n/2$ leaf agents, therefore at most $n/2$ pieces get trimmed and each trimmed part is of value at most $c_t$ (according to $a_r$). Therefore, we obtain $v_r(R^{t+1})\le\frac{n}{2}\cdot c_t$.

Finally, let us take a look at what happens in rounds $t$ and $t+1$. At the end of the round $t$, we have $v_r(A_r)-v_r(A_1^{(i^*)})\ge c_t$ and  $v_r(R^{t+1})\le\frac{n}{2}\cdot c_t$. We also know that (by Lemma \ref{lemma:decreaseinround}), at the end of round $t+1$, the residue $R^{t+2}$ has value $v_r(R^{t+2})\le \frac{1}{2} v_r(R^{t+1})\le\frac{n}{4} \cdot c_t$. The domination condition for a trimmer agent to get removed from the set $Tr$ in round $t+2$ is
$$v_r(A_r)-v_r(A_1^{(i^*)})\ge\frac{\ell_i}{|D|+1} \cdot v_r(R^{t+2})\ge\frac{4}{n} \cdot v_r(R^{t+2})$$ 
Note that, the difference on the left hand side is $v_r(A_r)-v_r(A_1^{(i^*)})\ge c_t$ and the value in the right hand side is $\frac{4}{n} \cdot v_r(R^{t+1})\le c_t$. Therefore, the domination condition for agent $a_{i^*}$  is satisfied at the end of round $t+1$ of while-loop, and hence she is removed from the set $Tr$ of trimmer agents.
\end{proof}

Therefore, Lemma \ref{lemma:stopn} proves that the while-loop will end in $2n$ rounds, and each round requires $O(n)$ cuts (by Lemma \ref{lemma:O(n)cuts}). Therefore, we have $O(n^2)$ cuts in total. 
For the evaluation queries, the worst case is that each agent evaluate all the contiguous pieces. Because there are $O(n^2)$ cuts, there are $O(n^2)$ contiguous pieces. We have at most $O(n^3)$ \eval~queries. 
\end{proof}

\section{Discussion and Future Directions}
This work explores the classic enny-free cake cutting  problem where agents lie on a social graph specifying the envy constraints between them. We develop a novel algorithm that computes a locally envy-free allocation among $n$ agents on any \textsc{Tree} with $n^{O(n)}$ queries under Robertson-Webb query model, thereby significantly improving the best known hyper-exponential upper bounds. We use our techniques developed above and optimize them to develop polynomial query complexity protocols for \dTwoTree\ and \twoStar\ graphs with  $O(n^4\log n)$ and $O(n^3)$ query complexity respectively.   To the best of our knowledge, our work is the first to identify non-trivial social graphs over agents that admits query-efficient discrete protocols for local envy-freeness. 

We believe that exploring the classic problem of envy-free cake division with a lens of social graphs among agents imparts novel insights and help understand the bottleneck in its computational complexity. There are many possible future research directions arising from our work. Here, we list some of them:
\begin{enumerate}
\item For trees with constant depth as an underlying social graph, can we develop query-efficient protocols by generalizing \textsc{Alg2} that works for trees with depth at most two?
\item Can we find a simple and efficient protocol for social graphs such as a cycle or a bipartite graph? 
\item Can we develop a parameterized protocol with respect to the tree-width of the underlying social graph?
\end{enumerate}

\subsubsection*{Acknowledgments:} Xin Huang was supported in part at the Technion Israel Institute of Technology by an Aly Kaufman Fellowship. A part of the work was done when Ganesh Ghalme was in Technion-Israel Institute of Technology. The authors thank Siddharth Barman and Ioannis Caragiannis for their helpful comments.
\iffalse 
\begin{itemize}
    \item 6 agents on a line is a key to solve Trees with depth $>2$
    \item Locally envy free and proportional protocols
    \item Local envy freeness lower bounds
    \item If $D$ is large residue decreases fast .... bridges the gap between $O(n^3\log n)$ and $O(n^2)$ .
    \item In all our algorithms an agent  with the highest degree  makes the cuts. It will be interesting to see if this the case for general graphs.  
    \item  
\end{itemize}
\fi 

 \bibliography{ref}

\newcommand{\etalchar}[1]{$^{#1}$}
\begin{thebibliography}{CKM{\etalchar{+}}19}

\bibitem[ABC{\etalchar{+}}18]{aziz2018knowledge}
Haris Aziz, Sylvain Bouveret, Ioannis Caragiannis, Ira Giagkousi, and
  J{\'e}r{\^o}me Lang.
\newblock Knowledge, fairness, and social constraints.
\newblock In {\em Proceedings of the AAAI Conference on Artificial
  Intelligence}, volume~32, 2018.

\bibitem[ABKR19]{arunachaleswaran2019fair}
Eshwar~Ram Arunachaleswaran, Siddharth Barman, Rachitesh Kumar, and Nidhi
  Rathi.
\newblock Fair and efficient cake division with connected pieces.
\newblock In {\em International Conference on Web and Internet Economics},
  pages 57--70. Springer, 2019.

\bibitem[ACF{\etalchar{+}}18]{AmanatidisFMPV18}
Georgios Amanatidis, George Christodoulou, John Fearnley, Evangelos Markakis,
  Christos{-}Alexandros Psomas, and Eftychia Vakaliou.
\newblock An improved envy-free cake cutting protocol for four agents.
\newblock In {\em Algorithmic Game Theory - 11th International Symposium,
  {SAGT}'18}, volume 11059, pages 87--99. Springer, 2018.

\bibitem[Adj]{AdjustedWinner}
Adjusted winner.
\newblock \url{http://www.nyu.edu/projects/adjustedwinner/}.
\newblock Accessed: 2019-07-07.

\bibitem[AKP17]{AbebeKP16}
Rediet Abebe, Jon Kleinberg, and David~C. Parkes.
\newblock Fair division via social comparison.
\newblock AAMAS '17, 2017.

\bibitem[AM16a]{aziz2016discrete}
Haris Aziz and Simon Mackenzie.
\newblock A discrete and bounded envy-free cake cutting protocol for any number
  of agents.
\newblock In {\em 2016 IEEE 57th Annual Symposium on Foundations of Computer
  Science (FOCS)}, pages 416--427. IEEE, 2016.

\bibitem[AM16b]{aziz2016discretefour}
Haris Aziz and Simon Mackenzie.
\newblock A discrete and bounded envy-free cake cutting protocol for four
  agents.
\newblock In {\em Proceedings of the Forty-Eighth Annual ACM Symposium on
  Theory of Computing}, STOC '16, page 454–464, 2016.

\bibitem[BCE{\etalchar{+}}16]{brandt2016handbook}
Felix Brandt, Vincent Conitzer, Ulle Endriss, J{\'e}r{\^o}me Lang, and Ariel~D
  Procaccia.
\newblock {\em Handbook of computational social choice}.
\newblock Cambridge University Press, 2016.

\bibitem[BCG{\etalchar{+}}19]{beynier2019local}
Aur{\'e}lie Beynier, Yann Chevaleyre, Laurent Gourv{\`e}s, Ararat Harutyunyan,
  Julien Lesca, Nicolas Maudet, and Ana{\"e}lle Wilczynski.
\newblock Local envy-freeness in house allocation problems.
\newblock {\em Autonomous Agents and Multi-Agent Systems}, 33(5):591--627,
  2019.

\bibitem[BKN22a]{Bredereck18}
Robert Bredereck, Andrzej Kaczmarczyk, and Rolf Niedermeier.
\newblock Envy-free allocations respecting social networks.
\newblock {\em Artificial Intelligence}, 305:103664, 2022.

\bibitem[BKN22b]{bredereck2022envy}
Robert Bredereck, Andrzej Kaczmarczyk, and Rolf Niedermeier.
\newblock Envy-free allocations respecting social networks.
\newblock {\em Artificial Intelligence}, 305:103664, 2022.

\bibitem[BQZ17]{bei2017networked}
Xiaohui Bei, Youming Qiao, and Shengyu Zhang.
\newblock Networked fairness in cake cutting.
\newblock In {\em Proceedings of the Twenty-Sixth International Joint
  Conference on Artificial Intelligence, {IJCAI-17}}, pages 3632--3638, 2017.

\bibitem[BR21]{barman2021fair}
Siddharth Barman and Nidhi Rathi.
\newblock Fair cake division under monotone likelihood ratios.
\newblock {\em Mathematics of Operations Research}, 2021.

\bibitem[BSW{\etalchar{+}}20]{bei2020cake}
Xiaohui Bei, Xiaoming Sun, Hao Wu, Jialin Zhang, Zhijie Zhang, and Wei Zi.
\newblock Cake cutting on graphs: a discrete and bounded proportional protocol.
\newblock In {\em Proceedings of the Fourteenth Annual ACM-SIAM Symposium on
  Discrete Algorithms}, SODA '20, pages 2114--2123, 2020.

\bibitem[BT96]{brams1996fair}
Steven~J Brams and Alan~D Taylor.
\newblock {\em Fair Division: From cake-cutting to dispute resolution}.
\newblock Cambridge University Press, 1996.

\bibitem[Bud11]{budish2011combinatorial}
Eric Budish.
\newblock The combinatorial assignment problem: Approximate competitive
  equilibrium from equal incomes.
\newblock {\em Journal of Political Economy}, 119(6):1061--1103, 2011.

\bibitem[CEM17]{chevaleyre2017distributed}
Yann Chevaleyre, Ulle Endriss, and Nicolas Maudet.
\newblock Distributed fair allocation of indivisible goods.
\newblock {\em Artificial Intelligence}, 242:1--22, 2017.

\bibitem[CKM{\etalchar{+}}19]{procaccia_moulin_2016}
Ioannis Caragiannis, David Kurokawa, Herv\'{e} Moulin, Ariel~D. Procaccia,
  Nisarg Shah, and Junxing Wang.
\newblock The unreasonable fairness of maximum nash welfare.
\newblock 7(3), 2019.

\bibitem[DQS12]{deng2012algorithmic}
Xiaotie Deng, Qi~Qi, and Amin Saberi.
\newblock Algorithmic solutions for envy-free cake cutting.
\newblock {\em Operations Research}, 60(6):1461--1476, 2012.

\bibitem[DS61]{dubins1961cut}
Lester~E Dubins and Edwin~H Spanier.
\newblock How to cut a cake fairly.
\newblock {\em The American Mathematical Monthly}, 68(1P1):1--17, 1961.

\bibitem[EPT07]{etkin2007spectrum}
Raul Etkin, Abhay Parekh, and David Tse.
\newblock Spectrum sharing for unlicensed bands.
\newblock {\em IEEE Journal on selected areas in communications},
  25(3):517--528, 2007.

\bibitem[ES99]{edward1999rental}
Francis Edward~Su.
\newblock Rental harmony: Sperner's lemma in fair division.
\newblock {\em The American mathematical monthly}, 106(10):930--942, 1999.

\bibitem[Fol67]{foley1967resource}
Duncan~K Foley.
\newblock Resource allocation and the public sector.
\newblock 1967.

\bibitem[GMPZ16]{gal2016fairest}
Ya'akov Gal, Moshe Mash, Ariel~D Procaccia, and Yair Zick.
\newblock Which is the fairest (rent division) of them all?
\newblock In {\em Proceedings of the 2016 ACM Conference on Economics and
  Computation}, pages 67--84, 2016.

\bibitem[KLP13]{kurokawa2013cut}
David Kurokawa, John~K Lai, and Ariel~D Procaccia.
\newblock How to cut a cake before the party ends.
\newblock In {\em Twenty-Seventh AAAI Conference on Artificial Intelligence},
  2013.

\bibitem[Mou04]{moulin2004fair}
Herv{\'e} Moulin.
\newblock {\em Fair division and collective welfare}.
\newblock MIT press, 2004.

\bibitem[PM16]{procaccia2015cake}
Ariel~D. Procaccia and Hervé Moulin.
\newblock {\em Cake Cutting Algorithms}, page 311–330.
\newblock Cambridge University Press, 2016.

\bibitem[Pro09]{Procaccia2009ThouSC}
Ariel~D. Procaccia.
\newblock Thou shalt covet thy neighbor's cake.
\newblock In {\em Proceedings of the 21st International Joint Conference on
  Artificial Intelligence}, IJCAI'09, page 239–244, 2009.

\bibitem[RW98]{robertson1998cake}
Jack Robertson and William Webb.
\newblock {\em Cake-cutting algorithms: Be fair if you can}.
\newblock AK Peters/CRC Press, 1998.

\bibitem[Sim80]{simmons1980private}
FW~Simmons.
\newblock private communication to michael starbird.
\newblock 1980.

\bibitem[Ste48]{steihaus1948problem}
Hugo Steihaus.
\newblock The problem of fair division.
\newblock {\em Econometrica}, 16:101--104, 1948.

\bibitem[Str80]{stromquist1980cut}
Walter Stromquist.
\newblock How to cut a cake fairly.
\newblock {\em The American Mathematical Monthly}, 87(8):640--644, 1980.

\bibitem[Str08]{stromquist2008}
Walter Stromquist.
\newblock Envy-free cake divisions cannot be found by finite protocols.
\newblock {\em Electronic Journal of Combinatorics}, 15(1), 2008.

\bibitem[Tuc21]{Tucker21}
Jamie Tucker{-}Foltz.
\newblock Thou shalt covet the average of thy neighbors' cakes.
\newblock {\em CoRR}, abs/2106.11178, 2021.

\bibitem[Vos02]{vossen2002fair}
Thomas Vossen.
\newblock {\em Fair allocation concepts in air traffic management}.
\newblock PhD thesis, PhD thesis, Supervisor: MO Ball, University of Martyland,
  College Park, Md, 2002.

\end{thebibliography}
 \newpage 
\appendix
\section{Five agents on a \textsc{Line} graph}
\label{sec:5line} 

% \iffalse
% \textcolor{red}{Modifying the above allocation to fit into the $\textsc{3Cycle}$ case}\; ----------------\texttt{  Final Allocation} --------------- \; 
 % $a_4$ picks her favourite piece among all four pieces\;
%\If{$a_4$ picks either $Q_1$ or $Q_2$}{
 %  $a_1$ picks her preferred bundle among $Q_3$ and $Q_4$ and  $a_2$ gets the remaining piece  \;
 %  $a_3$ gets the remaining piece\;}\ElseIf{ $a_4$ picks either $Q_3$ or $Q_4$ (say $Q_4$)}{
   
%  $a_1$ picks her favorite piece among $Q_1$, $Q_2$ and $Q_3$\;
 % \If{$a_1$ picks either $Q_1$ or $Q_2$}{
 % give $Q_3$ to $a_2$ and remaining piece to $a_3$ \;}\ElseIf{ $a_1$ picks $Q_3$}{
 % trims the remaining piece (i.e. $Q_3$) to make it equal to  $Q_1$  and adds the trimming to $Q_2$ (call these pieces $Q_3^{'}$ and $Q_2^{'}$) \;
 % $a_3$ picks her favorite  piece among $Q_1$ and $Q_2^{'}$\;
 % $a_1$ picks her favorite piece among the remaining \;}

In this section, we present a simple protocol that finds a locally envy-free allocation among five agents on a \textsc{Line} using $18$ \cut \   queries and $29$ \eval \ queries (see Theorem \ref{thm:5Line}). We do so by extending the techniques developed for the case of four agents on a \textsc{Line} graph. %by extending our ideas from the previous section when we had four agents on a \textsc{Line}. %This section acts as a bridge and provide an intuition about our main algorithm (\textsc{Alg2}) that establishes the fact that the problem of finding a locally envy-free allocation among agents are on a \dTwoTree~admits an efficient algorithm (see Theorem \ref{thm:depth2tree}).}
%Consider a cake-division instance with five agents $a_1-a_2-a_3-a_4-a_5$ on a \textsc{Line}. Here, we make $a_3$ as the cutter and  $a_2$ and $a_4$ as trimmers. Observe the contrast from the $4$-agents case where we had a single cutter and a single trimmer agent. 
Similar to $4$-agents case, the center agent (i.e. $a_3$)  is the designated cutter here as well. But in contrast with the $4$-agents case, the residue consists of trimmings from both left and right sides of $a_3$ i.e., both $a_2$ and $a_4$ act as trimmer agents.  The residue from both sides is then redistributed in multiple trimming  rounds until the cutter \emph{dominates} both her neighbouring agents. The dominance condition is defined formally in Section~\ref{sec:Linegraph}, but informally, it says that the value of $a_3$ for the  residue created in a given trimming round is so small that it does not induce envy from $a_3$ towards her neighbors irrespective of how the bundles are allocated between them.

Once the \emph{dominance} is achieved---which happens in at most two rounds of the \texttt{Trimming} phase---the current residue can be distributed without creating local envy in the \texttt{Equaling} phase. The distinction from the $4$-agents case (which has a single trimmer agent) is that it requires an additional round of \texttt{Trimming} to create \emph{dominance} of the cutter agent over the two trimmer agents.  We state our result here. 

\begin{corollary}
For cake-division instances with $5$ agents on  a \textsc{Line} graph, there exists a discrete cake cutting protocol that finds a locally envy-free allocation using $18$ \cut \  queries and $29$ \eval \ queries.
\label{thm:5Line}
\end{corollary}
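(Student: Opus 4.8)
The plan is to extend the two-phase protocol \AlgOne\ of Theorem~\ref{thm:4LINE} by promoting its single trimmer to a symmetric \emph{pair} of trimmers. I would keep the center agent $a_3$ as the sole \emph{cutter} and designate both neighbours $a_2$ and $a_4$ as \emph{trimmer} agents, with the leaves $a_1$ and $a_5$ choosing only at the very end. Each cutting round begins with $a_3$ dividing the current residue into five equally-valued pieces (according to her); then $a_4$ runs \textsc{Select} to grab her two favourite pieces (destined for the bundles of $a_4,a_5$), $a_2$ runs \textsc{Select} to grab her two favourite pieces among the remaining three (destined for $a_1,a_2$), and the last piece is kept by $a_3$. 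This selection order already guarantees that neither trimmer envies $a_3$ on the pieces handed out in that round: each trimmer picks her two pieces before $a_3$ is assigned the leftover, so she values $a_3$'s piece no more than either of her own.

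First I would run the \texttt{Trimming} phase for (at most) two rounds. In each such round every trimmer reserves one untrimmed piece and applies \textsc{Trim} to equalise the rest down to her least-valued selected piece, so that $a_2,a_4$ each hold two equally-valued bundles while the trimmings from \emph{both} sides are pooled into the residue $R$ for the next round. The crucial quantitative fact is that, out of the five equal pieces $a_3$ cuts, at least three (one kept by $a_3$ and one reserved by each trimmer) are never trimmed; hence $v_3(R)$ shrinks by a factor of at least $2/5$ each round, and after two rounds $a_3$'s accumulated whole pieces make her \emph{dominate} both neighbours, i.e.\ even if the final residue were dumped entirely onto $a_2$ (or onto $a_4$) she would still not envy. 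Once this dominance is reached I would switch to a single \texttt{Equaling} round, in which $a_3$ again cuts the residue into five equal pieces but each trimmer uses \textsc{Equal} (rather than \textsc{Trim}) to absorb the whole residue without creating new trimmings; the dominance inequality is preserved because the amount of cake that \textsc{Equal} moves between a trimmer's two bundles is bounded by the trims already performed, exactly as in the $a_3$-domination computation of \AlgOne. Finally I would assemble the five bundles by concatenating each agent's pieces across the three rounds and let $a_1$ (resp.\ $a_5$) pick her favourite of the two bundles prepared by $a_2$ (resp.\ $a_4$).

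Correctness would then follow by checking the four edges of the \textsc{Line} separately, mirroring Theorem~\ref{thm:4LINE}: along $a_1$--$a_2$ and $a_4$--$a_5$ the trimmer holds two equally-valued bundles and the leaf chooses first, so neither side envies; along $a_2$--$a_3$ and $a_3$--$a_4$ the trimmer does not envy $a_3$ by the selection order, while $a_3$ does not envy the trimmer by the dominance established in the \texttt{Trimming}/\texttt{Equaling} phases. For the query count, each of the three cutting rounds costs $a_3$ four \cut\ queries ($12$ total), and each trimmer contributes one \cut\ per round ($3$ each, $6$ total), giving the claimed $18$ \cut\ queries; the $29$ \eval\ queries come from tallying the \textsc{Select} calls (using that an agent can infer the value of one piece from the others in the first round, as in \AlgOne), the constant-size \textsc{Trim}/\textsc{Equal} evaluations, the single evaluation $a_3$ makes of each residue, and the two final choices by $a_1,a_5$.

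The main obstacle I anticipate is the two-sided dominance argument: with a single trimmer one \texttt{Trimming} round suffices, but with trimmings arriving from both $a_2$ and $a_4$ the residue decreases more slowly and the gap $a_3$ builds over each held bundle must be shown to overtake $v_3(R)$ for \emph{both} neighbours simultaneously --- this is precisely what forces the extra trimming round and demands the most care. A secondary subtlety is verifying that the final \textsc{Equal} redistribution cannot destroy $a_3$'s dominance, which again reduces to the bound that the amount moved by \textsc{Equal} is at most the amount previously trimmed, borrowed from the four-agent analysis; and a purely clerical, but genuinely delicate, point is pinning the \eval\ bookkeeping down to the exact constant $29$.
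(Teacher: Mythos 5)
Your plan reproduces the paper's cast (cutter $a_3$, trimmers $a_2,a_4$, leaves choosing last) and its phase structure, and your query accounting ($18 = 3\times(4+1+1)$ \cut\ queries; $29$ \eval\ queries) matches the paper's round-by-round tally of $6+6+6$ cuts and $7+9+9+4$ evaluations. However, the correctness argument has a genuine gap, and it is exactly at the point you flagged as the main obstacle. Your \emph{symmetric} schedule (both trimmers run \textsc{Trim} in rounds one and two, then both run \textsc{Equal}) is justified by the claim that ``the amount of cake that \textsc{Equal} moves between a trimmer's two bundles is bounded by the trims already performed.'' That bound is valid in \AlgOne\ only because the residue there comes from a \emph{single} trimmer; with two trimmers the residue pools trimmings from both sides, so the slack $a_3$ holds against $a_2$'s bundles equals only $a_2$'s own accumulated trim $t_2$, while $a_2$'s \textsc{Equal} acts on pieces of the pooled residue of value $t_2+t_4$. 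Concretely: if $a_2$ happens to value her two selected pieces equally in both trimming rounds (so $t_2=0$) while $a_4$ trims a positive amount, then entering the equaling round every bundle on $a_2$'s side is worth exactly $v_3(A_3)$ to the cutter, yet $a_2$'s \textsc{Equal} over the residue (consisting entirely of $a_4$'s trimmings) can augment one of her bundles by up to $\tfrac{2}{5}\,v_3(R)$ while $a_3$ gains only $\tfrac{1}{5}\,v_3(R)$, leaving $a_3$ with positive envy. Running more symmetric trimming rounds does not help, since $t_2$ can remain zero while the residue stays positive; residue shrinkage by a factor $2/5$ per round controls the \emph{total} residue, not the per-side gap.

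The paper repairs this with an \emph{asymmetric} schedule driven by an averaging argument (Claim~\ref{claim:5line}): after the first round, at least one of $v_3(A^1_3)-v_3(A^1_2)\ge\tfrac{2}{5}v_3(T)$ or $v_3(A^1_3)-v_3(A^1_4)\ge\tfrac{2}{5}v_3(T)$ must hold, because summing the two negations yields the contradiction $v_3(T)<\tfrac{4}{5}v_3(T)$. Only the trimmer thereby dominated switches to \textsc{Equal}; the other performs \textsc{Trim} for one further round, so the third-round residue originates from a single side and is covered by that side's accumulated trim, at which point dominance holds over the second trimmer as well and both sides equalize in the final round. This pigeonhole step is the missing ingredient in your proposal: without it, two-sided dominance after two symmetric trimming rounds is simply false, and the allocation your schedule produces need not be locally envy-free along the edge of the trimmer with the smaller (possibly zero) trim.
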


\begin{proof}
We begin by describing the protocol, followed by the analysis for its correctness and query complexity. As mentioned earlier, this protocol is an extension of \textsc{Alg1}, and hence we will a short description while detailing out the key differences.
\newline \noindent 
\textbf{The Protocol:} A locally envy-free  protocol for five agents on a \textsc{Line} graph consists of the center agent ($a_3$) as the cutter and her  neighbours  ($a_2$ and $a_4$) as  trimmer agents. It begins with the cutter agent dividing the cake into $5$ equal pieces using \textsc{Eq-Div}, followed by $a_2$ and $a_4$ each picking their two highest-valued (available) pieces one after the other. The remaining last piece in added to the cutter agent's bundle.  Both the trimmer agents then perform  \textsc{Trim}(.) procedure in the first  round (trimming phase). After this round, the cutter agent starts to \emph{dominate}  at least one trimmer agent, say $a_2$ (see Claim \ref{claim:5line}), and therefore $a_2$ performs the   \textsc{Equal}(.) procedure for the remaining rounds. In the next round, $a_4$ continues to perform the  \textsc{Trim}(.) procedure. Finally, in the third round, we show that the dominance is established over $a_4$ as well, and both $a_2$ and $a_4$ perform \textsc{Equal}(.) procedure in the last round. % We show in Claim \toref{} that after the first trimming round (where both $a_2$ and $a_4$ perform \textsc{Trim}(.) operation) One of the trimmers  drop from the set of  trimming agents and performs \textsc{Equal}(.) procedure in the next round.   goes to the equaling phase after the first round and performs equaling for next two rounds. The second agent performs trimming for two rounds and equaling for the third round. 
%\red{two} trimming rounds in a \texttt{trimming} phase followed by  an \texttt{equaling} phase.   In both  the rounds,   

As opposed to the $4$-agents case, an additional trimming round is attributed to the fact that trimming comes from  two trimmer agents in this case. The final bundles $\{A_1, A_2, A_3, A_4, A_5\}$ are created similar to Step \ref{alg1:group} of \textsc{Alg1}.
The cutter agent is allocated a bundle $A_3$ that consists of the remaining pieces from all the rounds (i.e. both \texttt{Trimming} and \texttt{Equaling} phases) after $a_2$ and $a_4$ make their selections from \textsc{Eq-Div}.
 Two---one from each side of the cutter agent---of the other four bundles, $A_2$ and $A_4$, contain  untrimmed pieces from the \texttt{Trimming} phase and a  trimmed piece from the \texttt{Equaling} phase. This ensures that $a_3$ values his bundle $A_3$ as high as $A_2$ and $A_4$.
 The remaining two bundles (again, one from each side of cutter agent) $A_1$ and $A_5$ contain trimmed pieces from the \texttt{Trimming} phase and appended piece from the \texttt{Equaling} phase. The \emph{dominance} of the cutter agent $a_3$ over $a_2$ and $a_4$ further ensures that $a_3$ will not have any envy for these bundles as well. 
 
  As mentioned earlier, $a_3$ is allocated bundle $A_3$. Agent $a_1$ picks her favorite bundle between $A_1$ and $A_2$, and the remaining bundle is allocated to agent $a_2$, and similarly $a_5$ pick her favourite bundle between $A_4$ and $A_5$, and the remaining bundle goes to agent $a_4$ (similar to Step 11 of \textsc{Alg1}). Recall that, \texttt{Trimming} and \texttt{Equaling} phases always create equi-valued bundles for $a_2$ and $a_4$, and that is valued at least as high as the bundle $A_3$, the protocol ensures local envy-freeness.\\

\noindent 
\textbf{Correctness}: We now show that the above algorithm returns a locally envy-free allocation of the cake.  Recall that there are three rounds in total, in the first round both trimmer agents $a_2$ and $a_4$ perform \textsc{Trim}(.) procedure, in the second round at least one of $a_2$ or $a_4$ performs \textsc{Trim}(.) and the other performs \textsc{Equal}(.) procedure, while in the last round, both trimmer agents perform \textsc{Equal}(.) procedure.
Let us assume the cutter agent $a_3$ divides the current residue into five equal pieces $\{P^j_1,P^j_2, \dots, P^j_5\}$ in round $j=\{1,2,3\}$. 
Also write $\{A^j_1,A^j_2, \dots, A^j_5\}$ to denote the partial partition of the cake obtained at the end of round $j = \{1,2,3\}$. Next, we denote the trimmings in the first round as $T_2$ and $T_4$ from agent $a_2$ and $a_4$ respectively. We say that $T=T_2 \cup T_4$ is the total residue of the first round. In the second round, since there is only one agent who performs \textsc{Trim}(.), we denote this residue by $T'$.

Note that we have $\cup_{i \in [5]} P^1_i= [0,1]$ in the first round, $\cup_{i \in [5]} P^2_i= T$ in the second round, and $\cup_{i \in [5]} P^3_i= T'$ in the third round. Without loss of generality, let us assume $a_2$ picks $P^j_1$ and $P^j_2$, and $a_4$ picks $P^j_4$ and $P^j_5$ in round $j$ in the \textsc{Select}(.) procedure. 
The following claim proves a crucial property of our algorithm, which would be formally defined as \emph{dominance condition} (see Definition \ref{def:domination})  when we discuss about \dTwoTree \ in Section \ref{sec:depth2}. 

\begin{claim} \label{claim:5line}
Consider the partial partition $\{A^1_1,A^1_2, \dots, A^1_5\}$ of the cake at the end of the first round of \textsc{Trim}(.) along-with trimming $T$. Then, at least one of the following two conditions must hold true.
\begin{enumerate}
    \item [a)] $v_{3}(A^1_3) - v_3(A^1_2) \geq \frac{2}{5} \cdot  v_3(T) $ 
    \item [b)]  $v_3(A^1_3) - v_3 (A^1_4) \geq \frac{2}{5} \cdot v_3(T)$. 
\end{enumerate}
\end{claim}

\begin{proof}
Let us assume for contradiction that both of the stated conditions are not satisfied at the end of the first round. That is, we have  $v_{3}(A^1_3) - v_3(A^1_2) < \frac{2}{5} \cdot  v_3(T) $ 
  and  $v_3(A^1_3) - v_3 (A^1_4) < \frac{2}{5} \cdot v_3(T)$. Summing these two inequalities, we obtain
    $$2 v_{3}(A^1_3) - v_3(A^1_2) - v_3(A^1_4) < \frac{4}{5} \cdot  v_3(T)$$
    
    Recall that $A^1_2 = P^1_2 \setminus T_2$ and $A^1_4 = P^1_4 \setminus T_4$ after the first \textsc{Trimming} round. Therefore, by additivity of valuations, we obtain
     $$2 v_{3}(A^1_3) - (v_3(P^1_2)-v_3(T_2))- (v_3(P^1_4)-v_3(T_4)) < \frac{4}{5} \cdot  v_3(T)$$
     
     Since $a_3$ performs the \textsc{Equal}(.) procedure, and $A^1_3 = P^1_3$ is of value $1/5$ for $a_3$, we have
    $$\frac{2}{5} - (\frac{1}{5}-v_3(T_2)) - (\frac{1}{5}-v_3(T_4)) < \frac{4}{5} \cdot  v_3(T)$$
    
    Finally, since $T=T_2 \cup T_4$, by additivity of valuations, we obtain $v_3(T) < \frac{4}{5} \cdot  v_3(T)$, a contradiction. Therefore, either one of the conditions holds true at the end of the first round, proving the stated claim.
\end{proof}

If condition $(a)$ is true, we say that that the cutter agent \emph{dominates} agent $a_3$, otherwise we say that $a_2$ \emph{dominates} agent $a_4$. If condition $(a)$ is satisfied, it essentially says that the cutter agent will not envy the bundle $A^1_2$, even if she adds $2/5$th of the residue $T$ to it in the next round. Therefore, she proceeds to perform \textsc{Equal}(.) procedure.

Following the similar arguments, we show that the cutter agent will start dominating the remaining trimmer agent after one additional round of \textsc{Trimming}. Hence, by the above description of the protocol and the ideas used in proving Theorem \ref{thm:4LINE}, we establish the fact that \textsc{Alg1} can be extended to achieve local envy-freeness among five agents on a \textsc{Line} graph.\\

\noindent 
\textbf{Counting Queries:} It is easy to count the number of queries required in this protocol. Due to similar arguements, we urge the readers to refer the $4$-agents case for details, here we state the number of queries required in each round for completeness. The first round requires $6$ \texttt{cut} queries and $7$ \texttt{eval} queries, the second and third rounds each require $6$ \texttt{cut} queries and $9$ \texttt{eval} queries. Finally, it requires $4$ \texttt{eval} queries to allocate the bundles at the end.
\end{proof}

\end{document}